\numberwithin{equation}{section}
\newtheorem{thm}{Theorem}[section]
\newtheorem{cor}[thm]{Corollary}
\theoremstyle{definition}
\newtheorem{defn}{Definition}[section]
\newtheorem*{defn*}{Definition}
\newtheorem{exa}[defn]{Example}
\DeclareMathOperator*{\argmin}{arg\,min}
\newcommand{\RR}{\mathbb{R}}
\newcommand{\vc}[1]{\bm{#1}}
\newcommand{\PP}{\mathbb{P}}
\newcommand{\dto}{\overset{\textnormal{d}}{\longrightarrow}}
\newcommand{\pto}{\overset{\textnormal{p}}{\longrightarrow}}
\newcommand{\htheta}{\widehat{\vc{\theta}}_{n,k}}
\newcommand{\Normal}{\mathcal{N}}
\newcommand{\vtheta}{\vc{\theta}}
\newcommand{\cl}[1]{\textnormal{cl} (#1)}
\title{Hypothesis testing for tail dependence parameters on the boundary of the parameter space}
\author{Anna Kiriliouk \\ 
   \small Universit\'{e} de Namur\\
    \small Facult\'e des sciences \'economiques, sociales et de gestion \\
    \small Rempart de la vierge~8, B-5000 Namur, Belgium.\\
    \small E-mail: anna.kiriliouk@unamur.be}
\begin{document}
\maketitle

%\vspace{-0.95cm}

\begin{abstract}
Modelling multivariate tail dependence is one of the key challenges in extreme-value theory.
Multivariate extremes are usually characterized using parametric models, some of which have simpler submodels at the boundary of their parameter space.
Hypothesis tests are proposed for tail dependence parameters that, under the null hypothesis, are on the boundary of the alternative hypothesis. The asymptotic distribution of the weighted least squares estimator (Einmahl, Kiriliouk and Segers, Extremes 21, pages 205--233, 2018) is given when the true parameter vector is on the boundary of the parameter space, and two test statistics are proposed. 
The performance of these test statistics is evaluated for the Brown--Resnick model and the max-linear model. In particular, simulations show that it is possible to recover the optimal number of factors for a max-linear model. Finally, the methods are applied to characterize the dependence structure of two major stock market indices, the DAX and the CAC40.
\end{abstract}

\noindent%
{\it Keywords:}  
Brown--Resnick model:
hypothesis testing;
max-linear model;
multivariate extremes;
stable tail dependence function;
tail dependence.

\section{Introduction}
% \label{sec:introduction}
Extreme-value theory is the branch of statistics concerned with the characterization of extreme events. These occur in a large variety of fields, such as hydrology, meteorology, finance and insurance, but also in settings like human life span or athletic records. For some examples, see \cite{einmahl2008nr2}, \cite{towler2010}, \cite{chavez2016}, \cite{thomas2016} and \cite{rootzen2017}. 
In the univariate case, the limiting distribution of suitably normalized block maxima or threshold exceedances can be characterized entirely using the generalized extreme-value \citep{fisher1928, gnedenko1943} and the generalized Pareto distribution \citep{balkema1974,pickands1975} respectively.
However, many extreme events are inherently multivariate, and an important challenge is to model the tail dependence between two or more random variables of interest. If dependence disappears as the variables take on more and more extreme values, we say that they are asymptotically independent. Testing for asymptotic independence and modelling asymptotically independent data has been done in \cite{ledford1996}, \cite{draisma2004}, \cite{husler2009}, \cite{wadsworth2016nr2} and \cite{guillou2018}, among others. In the following, we consider the framework of asymptotic dependence.

The family of limiting distributions for multivariate maxima or threshold exceedances is infinite-dimensional, so that realistic and computationally feasible parametric models need to be proposed. Examples include \cite{gumbel1960}, \cite{smith1990}, \cite{tawn1990} or \cite{kabluchko2009}. Many different approaches of estimating tail dependence parameters exist. Fitting a multivariate extreme-value distribution 
to componentwise block maxima is done in \cite{padoan2010}, \cite{davison2012}, \cite{castruccio2016} or \cite{dombry2017nr2}, along others.
Multivariate threshold exceedances, which are the focus of this work, can be tackled using the stable tail dependence function \citep{dreeshuang1998} or the multivariate generalized Pareto distribution \citep{rootzen2006}. Common estimation methods include maximum likelihood \citep{wadsworth2014,kiriliouk2018,defondeville2018} 
and minimum distance estimation 
\citep{einmahl2012,einmahl2016,einmahl2016nr2}.

Some popular parametric models have simpler submodels at the boundary of the parameter space. The main goal of this paper is to propose hypothesis tests for tail dependence parameters that, under the null hypothesis, are on the boundary of the alternative hypothesis.  A first example is the Brown--Resnick model \citep{kabluchko2009}, popular in spatial extremes, which reduces to the simpler Smith model \citep{smith1990} when the shape parameter of the former attains its upper bound. A second example is the max-linear model, where each component of a vector can be interpreted as the maximum shock among a set of independent factors. Recent work on max-linear models is numerous.
In \citet{gissibl2015} and \citet{gissibl2017}, a recursive max-linear model on a directed acyclic graph is considered. In \cite{cui2016} and \cite{zhao2018}, max-linear models are used to model a latent factor structure for financial returns. Finally, the Marshall--Olkin model \citep{embrechts2003,segers2012} is a submodel that is frequently used in practice, see for instance \cite{burtschell2009}, \citet{su2017} or \citet{brigo}.

Because of their non-differentiability, max-linear models cannot be estimated by standard likelihood methods and hence we use the weighted least squares estimator of \cite{einmahl2016nr2}, which is centred around the stable tail dependence function. The hypothesis testing framework described in that paper is valid for parameters in the interior of the parameter space, and hence it cannot be used to test whether one or more factor loadings are equal to zero. The hypothesis tests proposed in this paper can be applied to decide how many factors are necessary to model a certain dataset or to test if a more specific model (eg, Marshall--Olkin) is sufficient, bypassing the limitation that the number of factors has to be chosen upfront.

The paper is organized as follows. Section~\ref{sec:estimator} presents background on multivariate extremes and introduces the weighted least squares estimator. Section~\ref{sec:main} establishes asymptotic normality for the weighted least squares estimator when the true parameter vector lies on the boundary of the parameter space. Using results from \cite{andrews1999}, \cite{andrews2001} and \cite{andrews2002}, a deviance- and a Wald-type test statistic are proposed, whose asymptotic distributions are easily computable when the dimension of the parameter vector on the boundary is moderate. 
In Section~\ref{sec:gml}, we perform simulations to evaluate the performance of the test statistics for the Brown--Resnick and the max-linear models. Finally, Section~\ref{sec:application}
illustrates our methods on two major stock market indices, the DAX and the CAC40. Technical definitions and proofs are deferred to the appendix.

\section{Background}
\label{sec:estimator}
\subsection{Multivariate extreme-value theory}
\label{sec:EVT}

Let $\vc{X}_i = (X_{i1},\ldots,X_{id})$, $i \in \{1,\ldots,n\}$, be random vectors in $\RR^d$ with cumulative distribution function $F$ and marginal cumulative distribution functions $F_1,\ldots,F_d$. Let $\vc{M}_n := (M_{n,1},\ldots,M_{nd})$ with $M_{nj} := \max (X_{1j},\ldots,X_{nj})$ for $j=1,\ldots,d$. We say that $F$ is in the \emph{max-domain of attraction} of an \emph{extreme-value distribution} $G$ if there exist sequences of normalizing constants $\vc{a}_n = (a_{n1},\ldots,a_{nd}) >0$ and $\vc{b}_n = (b_{n1},\ldots,b_{nd}) \in \RR^d$ such that
\begin{equation}\label{eq:ell1}
\PP \bigg[ \frac{\vc{M}_{n} - \vc{b}_{n}}{\vc{a}_{n}} \leq \vc{x} \bigg] = F^n (\vc{a}_n \vc{x} + \vc{b}_n) \dto G(\vc{x}), \qquad \text{as } n \rightarrow \infty.
\end{equation}
The margins, $G_1, \ldots, G_d$, of $G$ are \emph{univariate extreme-value distributions},
\begin{equation*}
G_j (x_j) = \exp \left\{ - \left(  1 + \gamma \frac{x_j - \mu_j}{\sigma_j} \right)_+^{-1/\gamma_j}\right\}, \qquad \sigma_j > 0, \gamma_j, \mu_j \in \mathbb{R},
\end{equation*} 
where $x_+ := \max(x,0)$. The function $G$ is determined by
\begin{equation*}
G(\vc{x}) = \exp{\{- \ell(- \log G_1 (x_1), \ldots , - \log G_d (x_d))\}},
\end{equation*}
where $\ell: [0,\infty)^d \rightarrow [0,\infty)$ is called the \emph{(stable) tail dependence function},
\begin{equation}\label{eq:ell2}
  \ell(\vc{x}):= 
  \lim_{t \downarrow 0} t^{-1} \, 
  \PP[  1 - F_1(X_{11}) \leq t x_1 \text{ or } \ldots \text{ or } 1 - F_d (X_{1d}) \leq t x_d].
\end{equation}
The cumulative distribution function $F$ is in the max-domain of attraction of a $d$-variate extreme value distribution $G$ if and only if the limit in \eqref{eq:ell2} exists \emph{and} the marginal distributions in \eqref{eq:ell1} converge to univariate extreme-value distributions. In what follows, we only assume existence of \eqref{eq:ell2}, which concerns the dependence structure of $F$, but not the marginal distributions $F_1,\ldots,F_d$.

Because the class of stable tail dependence functions is infinite-dimensional, one usually considers parametric models for $\ell$. Henceforth we assume that $\ell$ belongs to a parametric family $\{\ell(\cdot \, ; \vtheta) : \vtheta \in \Theta \}$ with $\Theta \subset \RR^p$. Some examples can be found below, see also \citet{dehaan2006}, \citet{falk2010}, \citet{segers2012} and references therein. 

\begin{exa}
The $d$-dimensional \emph{logistic model} \citep{gumbel1960} has stable tail dependence function
\begin{equation*}\label{eq:logistic}
  \ell(\vc{x} ; \theta)
  = \bigl( x_1^{1/\theta} + \cdots + x_d^{1/\theta} \bigr)^\theta,
  \qquad \theta \in (0,1].
\end{equation*} 
If $\theta = 1$, the variables are (asymptotically) independent, while $\theta \downarrow 0$ corresponds to the situation of complete dependence.
\end{exa}

\begin{exa}
The $d$-dimensional \emph{Brown--Resnick model} defined on spatial locations $\vc{s}_1,\ldots,\vc{s}_d \in \RR^2$  has stable tail dependence function
\begin{equation*}
\ell  (\vc{x} ; \vtheta) =  \sum_{j=1}^d  x_j \Phi_{d-1} ( \eta^{(j)} (1/\vc{x}) ; \Upsilon^{(j)} ) ,
\end{equation*}
where $\eta^{(j)} (\vc{x})  = (\eta_1^{(j)} (x_1,x_j), \ldots, \eta_{j-1}^{(j)} (x_{j-1},x_j), \eta_{j+1}^{(j)} (x_{j+1},x_j) , \ldots , \eta_d^{(j)} (x_d,x_j) ) \,\in \mathbb{R}^{d-1}$,
\begin{align*}
\eta_l^{(j)} (x_l,x_j)  = \sqrt{\frac{\gamma (\vc{s}_{j} - \vc{s}_{l})}{2}} + \frac{\log{(x_l/x_j)}}{\sqrt{2 \gamma (\vc{s}_{j} - \vc{s}_{l})}} \,\, \in \mathbb{R},
\end{align*}
$\gamma (\vc{s}) = (||\vc{s}|| / \rho)^{\alpha}$ and $\Upsilon^{(l)} \in \mathbb{R}^{(d-1) \times (d-1)}$ is the correlation matrix with entries
\begin{equation*}
\Upsilon^{(j)}_{lk} = \frac{\gamma (\vc{s}_{j} - \vc{s}_{l}) + \gamma(\vc{s}_{j} - \vc{s}_{k}) - \gamma(\vc{s}_{l} - \vc{s}_{k})}{2 \sqrt{\gamma(\vc{s}_{j}-\vc{s}_{l}) \gamma(\vc{s}_{j} - \vc{s}_{k})}}, \qquad l,k=1,\ldots,d; \, l,k \neq j,
\end{equation*}
\citep{kabluchko2009,huser2013}. The parameter vector is $\vc{\theta} = (\rho,\alpha) \in (0,\infty) \times (0,2]$. The \emph{Smith model} \citep{smith1989} is obtained when $\alpha = 2$.
\end{exa}

\begin{exa}
The \emph{max-linear model} with $r$ factors has stable tail dependence function
\begin{equation}\label{eq:mlstdf}
  \ell(\vc{x} ; \vc{\theta}) =   \sum_{t=1}^r \max_{j=1,\ldots,d}{b_{jt} x_j},  \qquad \vc{x} \in [0,\infty)^d,
\end{equation}
where the factor loadings $b_{jt}$ are non-negative constants such that $\sum_{t=1}^r b_{jt} = 1$ for every $j \in \{1,\ldots,d\}$ and all column sums of the $d \times r$ matrix $B := (b_{jt})_{j,t}$ are positive \citep{einmahl2012}.
Since the rows of $B$ sum up to one, the parameter matrix has only $d \times (r-1)$ free elements. Rearranging the columns of $B$ will not change the value of the stable tail dependence function. For identification purposes, we define the parameter vector $\vc{\theta}$ by stacking the columns of $B$ in decreasing order of their sums, leaving out the column with the lowest sum. 
An example of a random vector $\vc{Z}= (Z_1, \ldots, Z_d)$ that has stable tail dependence function \eqref{eq:mlstdf} is 
\begin{equation*} \label{eq:maxlinear}
Z_j = \max_{t=1,\ldots,r}{b_{jt} S_t}, \qquad \text{for } j \in \{1,\ldots,d\},
\end{equation*}
where $S_1,\ldots,S_r$ are independent unit Fr\'{e}chet variables, $\PP [S_j \leq x] = \exp(-1/x)$ for $x > 0$ and $j \in \{1,\ldots,d\}$. 
\end{exa}

\begin{exa}
Let $I_1,\ldots,I_d$ denote random (possibly dependent) Bernoulli random variables with $p_j = \PP[I_j = 1] \in (0,1]$ for $j \in \{1,\ldots,d\}$. Let, for $J \subset \{1,\ldots,d\}$, $p(J) = \PP[ \{j = 1,\ldots,d : I_j = 1\} = J]$, so that $(p(J))_{J \subset \{1,\ldots,d\}}$ is a probability distribution.
The \emph{multivariate Marshall--Olkin model} \citep{embrechts2003, segers2012} has stable tail dependence function 
\begin{equation*}
\ell(\vc{x} ; \vtheta) = \sum_{\varnothing \neq J \subset \{1,\ldots,d\}} p(J) \max_{j \in J} \left( \frac{x_j}{p_j} \right).
\end{equation*} 
In this model, any subset of components of the vector is assigned a shock that influences all components of that subset. An interpretation in terms of credit risk modelling can be found in \citet{embrechts2003}.

The Marshall--Olkin model is obtained as a special case of the max-linear model \eqref{eq:mlstdf} by setting $b_{jt} = \left\{ p(J_t)/p_j \right\} \mathbbm{1} ( j \in J_t)$, where $J_t \in \mathcal{P}(\{1,\ldots,d\})$, $J_t \neq \varnothing$, and  $\mathcal{P}(A)$ denotes the power set of $A$.
%It is easy to check that $\sum_{t=1}^r b_{jt} = 1$ for every $j \in \{1,\ldots,d\}$. 
%\ell(\vc{x} ; \vtheta) = \sum_{t=1}^r \max_{j=1,\ldots,d} \left( \frac{p(J_t)}{p_j} \mathbbm{1} ( j \in J_t) x_j  \right)
The Marshall--Olkin model iswell-known in $d = 2$, where
\begin{equation*}
B = \begin{pmatrix} b_{11} & 0 & 1 - b_{11}  \\ 
0 & b_{22} & 1 - b_{22}  \end{pmatrix}  
= \begin{pmatrix} \frac{P[I_1 = 1, I_2 = 0]}{P[I_1 = 1]} & 0 & \frac{P[I_1 = 1, I_2 = 1]}{P[I_1 = 1]}  \\ 
0 & \frac{P[I_1 = 0, I_2 = 1]}{P[I_2 = 1]} & \frac{P[I_1 = 1, I_2 = 1]}{P[I_2 = 1]}  \end{pmatrix}.
\end{equation*} 
\end{exa}

\subsection{Estimation of the stable tail dependence function}\label{sec:wls}
\subsubsection{Nonparametric estimation of the stable tail dependence function}\label{sec:nonpar}
Let $k = k_n \in (0,n]$ be such that $k \rightarrow \infty$ and $k/n \rightarrow 0$ as $n \rightarrow \infty$.
A straightforward nonparametric estimator of $\ell$ is obtained by replacing $\mathbb{P}$ and $F_1, \ldots, F_d$ in~\eqref{eq:ell2} by the (modified) empirical distribution functions and replacing $t$ by $k/n$, yielding
\begin{equation*}\label{eq:ellclassic}
  \widetilde{\ell}_{n,k} (x) 
  := 
  \frac{1}{k} \sum_{i=1}^n 
  \mathbbm{1} 
  \left\{
    R_{i1} > n + 1/2 - kx_1  \text{ or } \ldots \text{ or }  R_{id} > n + 1/2 - kx_d
  \right\}.
\end{equation*}
Here, $R_{ij} = \sum_{t=1}^n \mathbbm{1} \left\{ X_{tj} \leq X_{ij}\right\}$ denotes the rank of $X_{ij}$ among $X_{1j},\ldots,X_{nj}$.
This estimator, the \emph{empirical tail dependence function}, was introduced in slightly different form for $d=2$  in \citet{huang1992} and studied further in \citet{dreeshuang1998}. Using $n + 1/2$ rather than $n$ allows for better finite-sample properties, that is, for a lower mean squared error \citep{einmahl2012}.

Another nonparametric estimator is the \emph{beta tail dependence function}, which was very recently proposed in \citet{kiriliouk2017}. Contrary to the empirical tail dependence function, the beta tail dependence function is a smooth estimator which leads to some improvement in its finite-sample behavior. Its name stems from the fact that it is based on the
empirical beta copula \citep{segers2017}. We define 
\begin{align*}
\widetilde{\ell}^{\beta}_{n,k} (\vc{x}) = \frac{n}{k} \left\{ 1 - \mathbb{C}_n^{\beta} \left(1 - \frac{k \vc{x}}{n} \right) \right\}, \quad \text{where} \quad
\mathbb{C}_n^{\beta} (\vc{u}) = \frac{1}{n} \sum_{i=1}^n \prod_{j=1}^d F_{n,R_{ij}} (u_j),
\end{align*}
and for $r \in \{1,\ldots,n\}$, $F_{n,r} (u) = \sum_{s=1}^r \binom{n}{s} u^s (1-u)^{n-s}$ is the cumulative distribution function of a Beta$(r, n+1-r)$ random variable.

A drawback of $\widetilde{\ell}_{n,k}$ or $\widetilde{\ell}^{\beta}_{n,k}$ might be their possibly growing bias as $k$ increases: \cite{fougeres2015nr2} show that, under suitable conditions, the estimator $\widetilde{\ell}_{n,k}$ satisfies the asymptotic expansion
\begin{equation*}
\widetilde{\ell}_{n,k} (\vc{x}) - \ell( \vc{x}) \approx k^{-1/2} Z_{\ell} (\vc{x}) + \alpha (n/k) M(\vc{x}),
\end{equation*}
where $Z_{\ell}$ is a continuous centered Gaussian process, $\alpha$ is a function such that $\lim_{x \rightarrow \infty} \alpha(x) = 0$ and $M$ is a continuous function. When $\sqrt{k} \alpha (n/k)$ tends to a non-zero constant, an asymptotic bias appears. In \cite{fougeres2015nr2} and \cite{beirlant2016}, this bias is estimated and subtracted from the estimator $\widetilde{\ell}_{n,k} (\vc{x}) $ in order to propose new bias-corrected estimators.

Finally, many other nonparametric estimators exist of the Pickands dependence function, which is the restriction of $\ell$ to the unit simplex. See, for instance, \citet{caperaa1997}, \citet{zhang2008}, \citet{Gudendorf2012},  \citet{berghaus2013}, \cite{vettori2016} and 
\citet{marcon2017}. While these can be transformed into estimators of the stable tail dependence function, they rely on stronger assumptions, i.e., they are based on data from an extreme-value distribution, and we will not consider them here.

\subsubsection{Semi-parametric estimation of the stable tail dependence function}
Nonparametric estimation forms a stepping stone for semi-parametric estimation \citep{einmahl2012,einmahl2016,einmahl2016nr2}. Here, we focus on the method proposed in \citet{einmahl2016nr2}. 
Let $\widehat{\ell}_{n,k}$ denote any initial estimator of $\ell$ based on $\vc{X}_1, \ldots, \vc{X}_n$. 
Let $\vc{c}_1,\ldots,\vc{c}_q \in [0,\infty)^d$, with $\vc{c}_m = (c_{m1},\ldots,c_{md})$ for $m = 1,\ldots,q$,  be $q$ points in which we will evaluate $\ell$ and $\widehat{\ell}_{n,k}$. Consider for $\vtheta \in \Theta$ the $q \times 1$ column vectors
\begin{equation}\label{eq:Lmap}
  \widehat{\vc{L}}_{n,k} 
  := 
  \bigl( \widehat{\ell}_{n,k} (\vc{c}_m) \bigr)_{m = 1}^q, \,\,
  L( \vtheta ) 
  := 
  \bigl( \ell (\vc{c}_m ; \vc{\theta}) \bigr)_{m=1}^q, \,\,
  D_{n,k} (\vtheta) 
  := 
  \widehat{\vc{L}}_{n,k} - L(\vc{\theta}).
\end{equation}
The points $\vc{c}_1,\ldots,\vc{c}_q$ need to be chosen in such a way that the map $L : \Theta \to \RR^q$ is one-to-one, i.e., $\vtheta$ is identifiable from $\ell(\vc{c}_1;\vtheta), \ldots, \ell(\vc{c}_q;\vtheta)$. In particular, we will need to assume that $q \ge p$.

For $\vtheta \in \Theta$, let $\Omega(\vtheta)$ be a symmetric, positive definite $q\times q$ matrix and define
%:= \norm{ D_{n,k} (\theta) }^2_{\Omega(\theta)}
\begin{equation}\label{eq:fnk}
  f_{n,k} (\vtheta) := 
  D_{n,k}^T (\vtheta) \, \Omega (\vtheta) \, D_{n,k} (\vtheta).
\end{equation}
The \emph{continuous updating weighted least squares estimator} for $\vtheta_0$ is defined as
\begin{equation}\label{eq:estimator}
  \htheta 
  := 
  \argmin_{\vtheta \in \Theta} f_{n,k} (\vtheta) 
  = 
  \argmin_{\vtheta \in \Theta} \left\{ D_{n,k} (\vtheta)^T \, \Omega (\vtheta) \, 
  D_{n,k} (\vtheta) \right\}.
\end{equation} 
In Section~\ref{sec:simulation}, we will study the performance of this estimator for $\Omega (\vtheta) = I_q$, the $q \times q$ identity matrix. Expression~\eqref{eq:estimator} then simplifies to
\begin{equation*}
\htheta = \argmin_{\vtheta \in \Theta} \sum_{m=1}^q
  \big(   \widehat{\ell}_{n,k} (\vc{c}_m) - \ell (\vc{c}_m; \vtheta) \big)^2.
\end{equation*}

\citet{einmahl2016nr2} show the consistency and asymptotic normality of $\htheta$ under the assumption that the true parameter vector $\vtheta_0$ is in the interior of $\Theta$. In the next section we give the asymptotic distribution of $\htheta$ without this restriction.

\section{Inference on tail dependence parameters on the boundary of the parameter space}\label{sec:main}
The asymptotic results presented in Section \ref{sec:asymptotic} build on \citet{andrews1999}, who established the asymptotic distribution of a general extremum estimator when one or more parameters lie on the boundary of the parameter space. 
The methodology to obtain the asymptotic distribution of $\sqrt{k} (\htheta - \vtheta_0)$ can be summarized as follows: first, one shows that as $n \rightarrow \infty$, the criterion function $f_{n,k} (\vtheta)$ in \eqref{eq:fnk} is equal to a quadratic function $q_{n,k} (\sqrt{k} (\vtheta - \vtheta_0))$ plus a term that does not depend on $\vtheta$. If $\htheta$ is consistent, then its asymptotic distribution is only affected by the part of $\Theta$ close to $\vtheta_0$; equivalently, we are only concerned with the shifted parameter space $\Theta - \vtheta_0$ near the origin. If $\Theta - \vtheta_0$ can be approximated near the origin by a convex cone $\Lambda$, one can show that minimizing $f_{n,k} (\vtheta)$ over $\vtheta \in \Theta$ is asymptotically equivalent to minimizing $q_{n,k} (\vc{\lambda})$ over $\vc{\lambda} \in \Lambda$. Finally, $\sqrt{k} (\htheta - \vtheta_0)$ converges in distribution to the argument minimizing the limit of $q_{n,k} (\vc{\lambda})$ as $n \rightarrow \infty$.

In Section \ref{sec:simplify}, we show how a closed-form expression can be obtained for the asymptotic distribution of $\sqrt{k} (\htheta - \vtheta_0)$. For $\vc{\beta} \subset \vtheta$, we show in Section \ref{sec:test} how to test $H_0: \vc{\beta} = \vc{\beta}_*$ against $H_1: \vc{\beta} \neq \vc{\beta}_*$ when, under the null hypothesis, $\vc{\beta}^*$ is on the boundary of the alternative hypothesis.

We set up some notation first. Let $B_{\varepsilon} (\vtheta)$ denote an open ball centered at $\vtheta$ with radius $\varepsilon$ and let $C_{\varepsilon} (\vtheta)$ denote an open cube centered at $\vtheta$ with sides of length $2 \varepsilon$. 
Let $\cl{\Theta}$ denote the closure of $\Theta$. 
A set $\Gamma \subset \RR^p$ is said to be \emph{locally equal} to a set $\Lambda \subset \RR^p$ if $\Gamma \cap B_{\varepsilon} (\vc{0}) = \Lambda \cap B_{\varepsilon} (\vc{0})$ for some $\varepsilon > 0$. Finally, a set $\Lambda \subset \RR^p$ is a \emph{cone} if $\vc{\lambda} \in \Lambda$ implies $a \vc{\lambda} \in \Lambda$ for all $a > 0$.

\subsection{Estimation, consistency and asymptotic normality}\label{sec:asymptotic}
When $\vtheta_0$ is on the boundary of $\Theta$, the map $L$ in \eqref{eq:Lmap} is not defined and thus not differentiable on a neighbourhood of $\vtheta_0$. We will need the following assumption.
\begin{description}
\item[(A1)] $\Theta$ includes a set $\Theta^+$ such that $\Theta^+ - \vtheta_0$ equals the intersection of a union of orthants and an open cube $C_{\varepsilon} (\vc{0})$ for some $\varepsilon > 0$. Moreover, $\Theta \cap B_{\varepsilon_1} (\vtheta_0) \subset \Theta^+$ for some $\varepsilon_1 > 0$. 
\end{description}
If $\Theta - \vtheta_0$ happens to be locally equal to a union of orthants, we can simply set $\Theta^+ = \Theta \cap C_{\varepsilon} (\vtheta_0)$. This is the case for the models considered in Section~\ref{sec:simulation}.
We will assume existence of the so-called \emph{left/right (l/r) partial derivatives} on $\Theta^+$; a formal definition is given in the appendix. The shape of $\Theta^+$ is such that these can always be defined.

Write $\dot{L} := (\partial / \partial \vc{\theta}) L(\vc{\theta}) \in \RR^{q \times p}$ for $\vtheta \in \Theta^+$, where $\dot{L}$ denotes the matrix of l/r partial derivatives. Let $\lambda_1(\vtheta) > 0$ denote the smallest eigenvalue of $\Omega(\vtheta)$. 

\begin{thm}[Existence, uniqueness and consistency]
\label{theorem1}
Let $\vc{c}_1,\ldots,\vc{c}_q \in [0,\infty)^d$ be $q \ge p$ points
such that the map $L: \theta \mapsto (\ell (\vc{c}_m ; \vtheta))_{m=1}^q$ is a homeomorphism. Let $\vc{\vtheta}_0 \in \cl{\Theta}$ and assume that (A1) holds, that each element of $L(\vtheta)$ has continuous l/r partial derivatives of order two on $\Theta^+$, that $\dot{L} (\vtheta_0)$ is of full rank, that $\Omega: \Theta \to \RR^{q \times q}$ has continuous l/r partial derivatives on $\Theta^+$ and that $\inf_{\vtheta \in \Theta} \lambda_1(\vtheta)>0$ . Finally assume, for $m=1, \ldots, q$, 
\begin{equation}\label{eq:ltol}
  \widehat{\ell}_{n,k}(\vc{c}_m) \pto \ell (\vc{c}_m ; \vtheta_0), 
  \qquad \text{as $n \to \infty$}.
\end{equation}
Then with probability tending to one, the minimizer $\htheta$ in \eqref{eq:estimator} exists and is unique. Moreover,
\begin{equation*}
  \htheta \pto \vc{\theta}_0, 
  \qquad \text{as $n \rightarrow \infty$}.
\end{equation*}
\end{thm}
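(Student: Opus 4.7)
The plan is to establish consistency of any approximate minimiser via a standard M-estimator argument adapted to the boundary setting, and then to exploit the local quadratic structure of $f_{n,k}$ around $\vtheta_0$ to secure existence and uniqueness with probability tending to one. The benchmark deterministic criterion I would use is
\[
f_0(\vtheta) := \bigl(L(\vtheta_0) - L(\vtheta)\bigr)^{T} \Omega(\vtheta) \bigl(L(\vtheta_0) - L(\vtheta)\bigr), \qquad \vtheta \in \cl{\Theta},
\]
which vanishes precisely at $\vtheta_0$ because $L$ is one-to-one and $\lambda_1(\vtheta)$ is bounded below uniformly by a positive constant. The hypothesis $\widehat{\ell}_{n,k}(\vc{c}_m) \pto \ell(\vc{c}_m ; \vtheta_0)$ together with continuity of $\Omega$ gives $f_{n,k}(\vtheta) \pto f_0(\vtheta)$ pointwise, and this upgrades to uniform convergence on any compact subset of $\cl{\Theta}$ via the joint continuity of the map $(\vc{u}, \vtheta) \mapsto (\vc{u} - L(\vtheta))^{T} \Omega(\vtheta) (\vc{u} - L(\vtheta))$ and the fact that $\widehat{\vc{L}}_{n,k}$ does not depend on $\vtheta$.

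The crucial step will be to verify that $\vtheta_0$ is a well-separated minimum of $f_0$, namely that $\inf \{ f_0(\vtheta) : \vtheta \in \Theta, \, \norm{\vtheta - \vtheta_0} \geq \varepsilon \} > 0$ for every $\varepsilon > 0$. Here the homeomorphism hypothesis on $L$ is essential. Since every stable tail dependence function satisfies $\max_j c_{mj} \leq \ell(\vc{c}_m; \vtheta) \leq \sum_j c_{mj}$, the image $L(\Theta)$ is bounded in $\RR^q$, so continuity of $L^{-1}$ (extended by continuity to the compact closure of $L(\Theta)$) translates into the uniform implication $\norm{\vtheta - \vtheta_0} \geq \varepsilon \Rightarrow \norm{L(\vtheta) - L(\vtheta_0)} \geq \delta(\varepsilon) > 0$, which yields the well-separation together with the lower bound on $\lambda_1$. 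Combined with the uniform convergence above, a standard M-estimator consistency result (van der Vaart 1998, Theorem~5.7; or Andrews 1999, Theorem~2) then gives $\htheta \pto \vtheta_0$ for any measurable selection of approximate minimisers.

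For existence and uniqueness with probability tending to one, consistent approximate minimisers lie in $\Theta^{+} \cap C_{\varepsilon_1}(\vtheta_0)$ with probability tending to one, where assumption (A1) and the continuous l/r partial derivatives of $L$ and $\Omega$ permit a second-order Taylor expansion of $f_{n,k}$ within each orthant making up $\Theta^{+} - \vtheta_0$. The resulting quadratic form has coefficient matrix that converges in probability, uniformly over the cube, to a positive multiple of $\dot{L}(\vtheta_0)^{T} \Omega(\vtheta_0) \dot{L}(\vtheta_0)$; this limit is positive definite by the full-rank condition on $\dot{L}(\vtheta_0)$ and positive definiteness of $\Omega(\vtheta_0)$. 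Hence $f_{n,k}$ is strictly convex on this cube with probability tending to one, so a unique minimiser exists there, which by consistency must coincide with the global minimiser $\htheta$. The main obstacle I anticipate is the well-separation step: the usual interior-differentiability or compactness arguments are unavailable, so the proof must route through the homeomorphism hypothesis and the implicit boundedness of $L(\Theta)$ rather than of $\Theta$ itself.
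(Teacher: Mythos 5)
Your proposal is correct and follows essentially the same route as the paper, which omits the proof and refers to that of Theorem~1 in \citet{einmahl2016nr2} with $B_{\varepsilon}(\vtheta_0)$ replaced by $\Theta^{+}$: well-separation of the minimum of the limit criterion via the homeomorphism hypothesis and the uniform lower bound on $\lambda_1(\vtheta)$, followed by a local second-order (l/r) expansion on $\Theta^{+}$ whose positive-definite leading matrix $\dot{L}^{T}\Omega\dot{L}$ yields existence and uniqueness with probability tending to one. One cosmetic remark: you do not need $L^{-1}$ to extend continuously to the closure of $L(\Theta)$ (which may fail); continuity of $L^{-1}$ at the single point $L(\vtheta_0)$ already gives the implication $\norm{\vtheta-\vtheta_0}\geq\varepsilon \Rightarrow \norm{L(\vtheta)-L(\vtheta_0)}\geq\delta(\varepsilon)>0$.
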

%Note that the only difference between Theorem~\ref{theorem1} and \citet[Theorem 2.1]{einmahl2016nr2} is assumption (A1); differentiability on a neighbourhood of $\vtheta_0$ is replaced by existence of l/r partial derivatives on $\Theta^+$.
We omit the proof of this theorem since it is directly obtained by replacing $B_{\varepsilon} (\vc{\theta}_0)$ by $\Theta^+$ in the proof of \citet[Theorem 1]{einmahl2016nr2}.

When $f_{n,k}$ is not defined on a neighbourhood of $\vtheta_0$, but there exists a set $\Theta^+$ which satisfies (A1), a Taylor expansion of $f_{n,k} (\vtheta)$ around $f_{n,k} (\vtheta_0)$ holds \citep[Theorem 6]{andrews1999}. For each $\vtheta \in \Theta^+$, we have
\begin{align*}
f_{n,k} (\vtheta) & = f_{n,k} (\vtheta_0) + Df_{n,k} (\vtheta_0)^T (\vtheta - \vtheta_0) \\
& \quad + (\vtheta - \vtheta_0)^T D^2 f_{n,k} (\vtheta_0) (\vtheta - \vtheta_0)/2 + R_{n,k} (\vtheta),
\end{align*}
where $D f_{n,k}$ and $D^2 f_{n,k}$ are based on l/r partial derivatives and $R_{n,k} (\vtheta)$ is the remainder term. 
%When $\vtheta_0$ is in the interior of $\Theta$, $f_{n,k}$ is differentiable on a neighbourhood of $\vtheta_0$ and $D f_{n,k}$ and $D^2 f_{n,k}$ represent the gradient and hessian of $f_{n,k}$ respectively. 
We suppress dependence of $\Omega$, $\dot{L}$ and $D_{n,k}$ on $\vc{\theta}_0$ for ease of notation. From \citet[Proof of Theorems 3.2.1 and 3.2.2]{einmahl2016nr2} we know that
\begin{align*}
D f_{n,k} (\vtheta_0) = -2 D_{n,k}^T \Omega  \dot{L}  + o_p (1), \qquad
D^2 f_{n,k} (\vtheta_0) = 2 \dot{L}^T \Omega \dot{L}  + o_p (1),
\end{align*}
so the quadratic expansion above is equal to
\begin{align*}\label{eq:exp1}
f_{n,k} (\vtheta) = f_{n,k} (\vtheta_0) - 2 D_{n,k}^T \Omega \dot{L} (\vtheta - \vtheta_0) + (\vtheta - \vtheta_0)^T \dot{L}^T  \Omega   \dot{L} (\vtheta - \vtheta_0) + R_{n,k} (\vtheta).
\end{align*}
Define
\begin{equation*}
J := \dot{L}^T  \Omega  \dot{L} \in \RR^{p \times p} , \qquad \vc{Y}_{n,k} := \sqrt{k} J^{-1} \dot{L}^T  \Omega  D_{n,k}  \in \RR^p,
\end{equation*}
and
\begin{equation*}
q_{n,k} (\vc{\lambda}) = (\vc{\lambda} - \vc{Y}_{n,k})^T J (\vc{\lambda} - \vc{Y}_{n,k}).
\end{equation*}
Then $f_{n,k} (\vtheta)$ can be written as 
\begin{align*}\label{eq:exp2}
f_{n,k} (\vtheta) = f_{n,k} (\vtheta_0) - k^{-1} \vc{Y}_{n,k}^T \, J \, \vc{Y}_{n,k} + k^{-1}  q_{n,k} \left( \sqrt{k} (\vtheta - \vtheta_0) \right) + R_{n,k} (\vtheta). 
\end{align*}
We see that $f_{n,k} (\vtheta)$ is equal to a quadratic function plus a term that does not depend on $\vtheta$ plus $R_{n,k} (\vtheta)$.
In \citet[Lemma 3]{andrews2002}, it is shown that under the assumptions of Theorem~\ref{theorem1} and (A3) below, the remainder term $R_{n,k} (\vtheta)$ is sufficiently small so that minimizing the quadratic approximation to $f_{n,k} (\vtheta)$ is equivalent to minimizing $q_{n,k} \left( \sqrt{k} (\vtheta - \vtheta_0) \right)$. Now, assume that
\begin{description}
\item[(A2)] $\Theta - \vtheta_0$ is locally equal to a convex cone $\Lambda \subset \RR^p$.
\end{description}
%Assumption (A2) hold for the model presented in Section~\ref{sec:gml}. 
%The convexity of $\Lambda$ is necessary for the uniqueness of a minimizer. 
Then
\begin{equation*}\label{eq:min}
\inf_{\vtheta \in \Theta} q_{n,k} \left( \sqrt{k} (\vtheta - \vtheta_0) \right)   = \inf_{\vc{\lambda} \in \Lambda} q_{n,k}(\vc{\lambda}) + o_p(1),
\end{equation*}
for some cone $\Lambda$. 
%This allows for linear, kinked and curved boundaries as well as parameter spaces $\Theta - \vtheta_0$ that are defined by multivariate linear (in)equality 
Finally, assume that
\begin{description}
\item[(A3)] $\sqrt{k} \, D_{n,k}(\vtheta_0)
  \dto \vc{D} \sim
  \Normal_q ( 0, \Sigma (\vtheta_0) )$ as $n \rightarrow \infty$,
for some covariance matrix $\Sigma (\vtheta_0)$.
\end{description}
Assumptions (A3) and \eqref{eq:ltol} hold for the nonparametric estimators presented in Section~\ref{sec:nonpar} under some general regularity conditions. The matrix $\Sigma$ can be expressed in terms of $\ell$ and its first-order partial derivatives.
We have
\begin{equation*}
\vc{Y}_{n,k} \dto \vc{Y} := J^{-1} \dot{L}^T \Omega \vc{D}, \quad q_{n,k} (\vc{\lambda}) \dto q(\vc{\lambda}) := (\vc{\lambda} - \vc{Y})^T J (\vc{\lambda} - \vc{Y}) ,
\end{equation*}
for all $\vc{\lambda} \in \RR^p$.
%Let $\widehat{\vc{\lambda}} \in \textnormal{cl}(\Lambda)$ be such that $q(\widehat{\vc{\lambda}}) = \inf_{\vc{\lambda} \in \Lambda} q(\vc{\lambda})$ \comment{necessary?}. 

\begin{thm}[Asymptotic normality]\label{theorem2}
If all of the above assumptions hold, 
\begin{equation*}
\sqrt{k} (\htheta - \vtheta_0) \dto \widehat{\vc{\lambda}} = \argmin_{\vc{\lambda} \in \Lambda} q(\vc{\lambda}), \qquad \text{ as } n \rightarrow \infty.
\end{equation*}
\end{thm}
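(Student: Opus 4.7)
The plan is to reduce the minimization of $f_{n,k}$ over $\Theta$ to minimization of the random quadratic $q_{n,k}$ over the convex cone $\Lambda$, and then apply an argmin continuous mapping theorem. By Theorem~\ref{theorem1}, $\htheta \pto \vtheta_0$, so with probability tending to one $\htheta \in \Theta^+ \cap B_\varepsilon(\vtheta_0)$, and on this event $\Theta - \vtheta_0$ coincides with the cone $\Lambda$ by (A2). Using the substitution $\vc{\lambda} = \sqrt{k}(\vtheta - \vtheta_0)$, the quadratic expansion displayed just before the theorem rewrites as
\begin{equation*}
k\bigl( f_{n,k}(\vtheta_0 + \vc{\lambda}/\sqrt{k}) - f_{n,k}(\vtheta_0) \bigr) + \vc{Y}_{n,k}^T J \vc{Y}_{n,k}
= q_{n,k}(\vc{\lambda}) + k R_{n,k}(\vtheta_0 + \vc{\lambda}/\sqrt{k}).
\end{equation*}
Since the additive terms on the left do not depend on $\vc{\lambda}$, the minimizer of $f_{n,k}$ over $\Theta^+ \cap B_\varepsilon(\vtheta_0)$ corresponds, via this substitution, to the minimizer of the right-hand side over $\vc{\lambda} \in \sqrt{k}\,\Lambda \cap \sqrt{k}\,B_\varepsilon(\vc{0})$. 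Because $\Lambda$ is a cone, the constraint set $\sqrt{k}\,\Lambda \cap \sqrt{k}\,B_\varepsilon(\vc{0})$ grows to fill $\Lambda$ on every compact subset as $n \to \infty$.

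The next step is to control the remainder. Under the smoothness assumptions on $L$ and $\Omega$ and the stochastic equicontinuity of $D_{n,k}$ implied by (A3), \citet[Lemma~3]{andrews2002} gives $k R_{n,k}(\vtheta_0 + \vc{\lambda}/\sqrt{k}) = o_p(1)$ uniformly over every compact subset of $\Lambda$. Meanwhile, (A3) together with the continuous mapping theorem yields the joint convergence $\vc{Y}_{n,k} \dto \vc{Y} = J^{-1} \dot{L}^T \Omega \vc{D}$, and hence convergence of $q_{n,k}(\cdot)$ to $q(\cdot)$ uniformly on compacts. Combining this with the uniform negligibility of $R_{n,k}$ shows that the objective function $q_{n,k}(\cdot) + k R_{n,k}(\vtheta_0 + \cdot/\sqrt{k})$ epi-converges in distribution to $q(\cdot)$ on $\Lambda$.

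Finally, since $\Omega$ is positive definite and $\dot{L}(\vtheta_0)$ is of full column rank, the matrix $J = \dot{L}^T \Omega \dot{L}$ is positive definite, so $q$ is strictly convex and coercive. On the convex cone $\Lambda$, the minimizer of $q(\vc{\lambda})$ is the unique $J$-projection of $\vc{Y}$ onto $\Lambda$, and this minimizer is an almost surely continuous functional of $\vc{Y}$. Invoking the argmin continuous mapping theorem for convex objective functions converging on cones—as formulated, for instance, in \citet[Theorem~3(b)]{andrews1999}—yields
\begin{equation*}
\sqrt{k}(\htheta - \vtheta_0) \dto \argmin_{\vc{\lambda} \in \Lambda} q(\vc{\lambda}),
\end{equation*}
as required. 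The main obstacle is the combined uniform control of the remainder $R_{n,k}$ and the expanding constraint set $\sqrt{k}\,\Lambda \cap \sqrt{k}\,B_\varepsilon(\vc{0})$: one must first establish tightness of $\sqrt{k}(\htheta - \vtheta_0)$, which follows from the positive definiteness of $J$ and the quadratic growth of $q_{n,k}$ away from $\vc{Y}_{n,k}$, and only then restrict the minimization to a large but fixed compact $K \subset \Lambda$ where the Andrews remainder bound and the epi-convergence argument can be applied simultaneously.
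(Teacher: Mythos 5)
Your argument is correct and follows essentially the same route as the paper: the paper's proof of Theorem~\ref{theorem2} consists precisely of verifying the hypotheses of \citet[Theorem 3b]{andrews1999} (Assumptions 1, $2^*$, $3^*$, $5^*$ and 6, with Assumption $2^*$ checked via the GMM2 conditions and Lemma 3 of \citealp{andrews2002}), which are the same two external results you invoke; you simply also narrate the internal mechanics of that theorem (quadratic expansion, remainder control, tightness, argmin continuous mapping). The only loose point is attributing stochastic equicontinuity of $D_{n,k}$ to (A3) --- in this setting $D_{n,k}(\vtheta)-D_{n,k}(\vtheta_0)=L(\vtheta_0)-L(\vtheta)$ is deterministic, which is exactly why the relevant Andrews condition (GMM2(d)) holds trivially --- but this does not affect the validity of the argument.
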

Here $\widehat{\vc{\lambda}}$ can be interpreted as the projection of $\vc{Y}$ on $\Lambda$ with respect to the norm $\lVert y \rVert_{J} := y^T J y$. This theorem is a special case of \citet[Theorem 3]{andrews1999}. In the appendix, we verify that our assumptions are sufficient. 
Note that if $\Theta$ includes a neighbourhood of $\vtheta_0$, then $\Lambda = \RR^p$ and we find the same distribution as in \citet[Theorem 2]{einmahl2016nr2} since $\widehat{\vc{\lambda}} = \vc{Y}$.

\subsection{Simplifying the asymptotic distribution}\label{sec:simplify}
The goal of this section is to simplify the asymptotic distribution of $\sqrt{k} (\htheta - \vtheta_0) $ and to give conditions under which (part of) $\widehat{\vc{\lambda}}$ has a closed-form expression. Let $\vc{G} := \dot{L}^T \Omega \vc{D}$, i.e., $\vc{Y} = J^{-1} \vc{G}$. We start by partitioning the vector $\vtheta_0 \in \RR^p$ into two subvectors, $\vc{\beta}_0 \in \RR^c$ and $\vc{\delta}_0 \in \RR^{p-c}$ for $c \in \{1,\ldots,p\}$, where $\vc{\delta}_0$ consists of all parameters that are in the interior of the parameter space. 
We partition $\vc{\theta}_{n,k}$, $\vc{Y}$, $J$, $\vc{G}$ and $\vc{\lambda}$ accordingly:
\begin{align*}
\vc{\theta}_{n,k} = \begin{pmatrix} \vc{\beta}_{n,k} \\ \vc{\delta}_{n,k} \end{pmatrix}, 
\,\,
\vc{Y} = \begin{pmatrix} \vc{Y}_{\vc{\beta}} \\ \vc{Y}_{\vc{\delta}} \end{pmatrix}, \,\,
J = \begin{pmatrix} J_{\beta} & J_{\beta \delta} \\ J_{\delta \beta} & J_{\delta} 
\end{pmatrix}, \,\,
\vc{G} = \begin{pmatrix} \vc{G}_{\vc{\beta}} \\ \vc{G}_{\vc{\delta}} \end{pmatrix}, \,\,
\vc{\lambda} = \begin{pmatrix} \vc{\lambda}_{\beta} \\ \vc{\lambda}_{\delta} \end{pmatrix}.
\end{align*}
%Next, note that $\vc{Y}_{\beta} = H \vc{Y}$ where 
Let $I_c$ denote the $c \times c$ identity matrix. For $H := \big( I_c : \vc{0} \big) \in \RR^{c \times p}$, define
\begin{align*}
q_{\beta} (\vc{\lambda}_{\beta}) := (\vc{\lambda}_{\beta} - \vc{Y}_{\beta})^T \left( H J^{-1} H^T \right)^{-1} (\vc{\lambda}_{\beta} - \vc{Y}_{\beta}).
\end{align*}
Suppose that
\begin{description}
\item[(A4)] The cone $\Lambda$ of assumption (A2) is equal to the product set $\Lambda_{\vc{\beta}} \times \RR^{p-c}$, where $\Lambda_{\vc{\beta}} \subset \RR^{c}$ is a cone.
\end{description}
%Assumption (A4) holds for all the models discussed in Section~\ref{sec:gml}.

\begin{cor}\label{cor1}
If all of the above assumptions hold, then
\begin{align*}
\sqrt{k} (\widehat{\vc{\beta}}_{n,k} - \vc{\beta}_0) & \dto \widehat{\vc{\lambda}}_{\beta} := \argmin_{\vc{\lambda}_{\beta} \in \Lambda_{\beta}} q_{\beta} (\vc{\lambda}_{\beta}), \\
 \sqrt{k} (\widehat{\vc{\delta}}_{n,k} - \vc{\delta}_0) & \dto J^{-1}_{\delta} \vc{G}_{\delta} - J^{-1}_{\delta} J_{\delta \beta} \widehat{\vc{\lambda}}_{\beta}.
\end{align*}
\end{cor}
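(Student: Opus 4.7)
The plan is to leverage Theorem~\ref{theorem2} together with the product structure of $\Lambda$ in (A4), reducing the constrained minimization to a profile (concentrated) problem: minimize out the unconstrained block $\vc{\lambda}_\delta \in \RR^{p-c}$ in closed form, and then identify the resulting profiled quadratic in $\vc{\lambda}_\beta$ with $q_\beta$ via a Schur-complement/block-inversion computation. Joint convergence of $\sqrt{k}(\widehat{\vc{\beta}}_{n,k} - \vc{\beta}_0)$ and $\sqrt{k}(\widehat{\vc{\delta}}_{n,k} - \vc{\delta}_0)$ to the two asserted limits then follows by the continuous mapping theorem applied to Theorem~\ref{theorem2}.

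First, by Theorem~\ref{theorem2} and (A4), $\widehat{\vc{\lambda}} = (\widehat{\vc{\lambda}}_\beta, \widehat{\vc{\lambda}}_\delta)$ minimizes $q(\vc{\lambda})$ over $\Lambda_\beta \times \RR^{p-c}$. Writing $\vc{u} := \vc{\lambda} - \vc{Y}$ and expanding in blocks,
\begin{equation*}
q(\vc{\lambda}) = \vc{u}_\beta^T J_\beta \vc{u}_\beta + 2 \vc{u}_\beta^T J_{\beta\delta} \vc{u}_\delta + \vc{u}_\delta^T J_\delta \vc{u}_\delta.
\end{equation*}
For any fixed $\vc{\lambda}_\beta$, the inner problem $\min_{\vc{\lambda}_\delta \in \RR^{p-c}} q(\vc{\lambda})$ is unconstrained and strictly convex (since $J_\delta$ is positive definite, being a principal submatrix of the positive definite $J$). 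Its first-order condition gives $\vc{u}_\delta = -J_\delta^{-1} J_{\delta\beta} \vc{u}_\beta$, i.e.
\begin{equation*}
\vc{\lambda}_\delta = \vc{Y}_\delta - J_\delta^{-1} J_{\delta\beta}(\vc{\lambda}_\beta - \vc{Y}_\beta).
\end{equation*}

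Substituting this back and collecting terms, the profiled objective becomes
\begin{equation*}
\vc{u}_\beta^T \bigl( J_\beta - J_{\beta\delta} J_\delta^{-1} J_{\delta\beta} \bigr) \vc{u}_\beta = \vc{u}_\beta^T S \vc{u}_\beta,
\end{equation*}
where $S$ is the Schur complement of $J_\delta$ in $J$. The standard block-inversion identity gives $H J^{-1} H^T = S^{-1}$, hence $S = (H J^{-1} H^T)^{-1}$ and the profiled objective equals $q_\beta(\vc{\lambda}_\beta)$. Therefore $\widehat{\vc{\lambda}}_\beta = \argmin_{\vc{\lambda}_\beta \in \Lambda_\beta} q_\beta(\vc{\lambda}_\beta)$, yielding the first convergence.

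For the second, plug $\widehat{\vc{\lambda}}_\beta$ into the closed-form for $\vc{\lambda}_\delta$ derived above, obtaining
\begin{equation*}
\widehat{\vc{\lambda}}_\delta = \vc{Y}_\delta + J_\delta^{-1} J_{\delta\beta} \vc{Y}_\beta - J_\delta^{-1} J_{\delta\beta} \widehat{\vc{\lambda}}_\beta.
\end{equation*}
Since $\vc{Y} = J^{-1} \vc{G}$, the lower block of the identity $J \vc{Y} = \vc{G}$ gives $J_{\delta\beta} \vc{Y}_\beta + J_\delta \vc{Y}_\delta = \vc{G}_\delta$, so that $\vc{Y}_\delta + J_\delta^{-1} J_{\delta\beta} \vc{Y}_\beta = J_\delta^{-1} \vc{G}_\delta$. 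This reduces the expression to $\widehat{\vc{\lambda}}_\delta = J_\delta^{-1} \vc{G}_\delta - J_\delta^{-1} J_{\delta\beta} \widehat{\vc{\lambda}}_\beta$, matching the stated limit. Since $(\widehat{\vc{\lambda}}_\beta, \widehat{\vc{\lambda}}_\delta)$ is a continuous function of $\vc{Y}$ (and hence of $\vc{D}$) on the event of uniqueness, which occurs almost surely, the joint convergence from Theorem~\ref{theorem2} combined with the continuous mapping theorem delivers both marginal statements. The only nontrivial step is the Schur-complement identification, which is routine but worth spelling out so that $q_\beta$ is recognized as the profiled version of $q$.
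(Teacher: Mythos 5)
Your proof is correct, but it takes a genuinely different route from the paper. The paper disposes of Corollary~\ref{cor1} in two lines by invoking \citet[Corollary 1b]{andrews1999} (with no nuisance parameter $\psi$) and checking that its Assumptions 7 and 8 follow from (A4), the remaining assumptions having already been verified in the proof of Theorem~\ref{theorem2}. You instead re-derive the content of that cited result directly: you profile out the unconstrained block $\vc{\lambda}_{\delta}$ from the limiting quadratic $q$, identify the concentrated objective in $\vc{\lambda}_{\beta}$ with $q_{\beta}$ via the Schur complement $J_{\beta} - J_{\beta\delta}J_{\delta}^{-1}J_{\delta\beta} = (HJ^{-1}H^T)^{-1}$, and recover the $\vc{\delta}$-limit from the lower block of $J\vc{Y} = \vc{G}$. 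All of these computations check out (strict convexity of the inner problem follows from $J$ being positive definite, which holds since $\dot{L}$ has full rank and $\Omega$ is positive definite), and the identification of the blocks of $\widehat{\vc{\lambda}}$ combined with Theorem~\ref{theorem2} and projection onto coordinates gives the two stated convergences. What the paper's approach buys is brevity and a clean audit trail to the general theory; what yours buys is self-containedness and an explanation of \emph{why} the weight matrix in $q_{\beta}$ is $(HJ^{-1}H^T)^{-1}$ and why the $\vc{\delta}$-limit takes the form $J_{\delta}^{-1}\vc{G}_{\delta} - J_{\delta}^{-1}J_{\delta\beta}\widehat{\vc{\lambda}}_{\beta}$, neither of which is transparent from the citation alone. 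The one piece of your argument that is dispensable is the appeal to the continuous mapping theorem ``on the event of uniqueness'': Theorem~\ref{theorem2} already delivers convergence in distribution to $\widehat{\vc{\lambda}}$, so once its blocks are identified algebraically nothing further is needed beyond projecting onto the $\vc{\beta}$- and $\vc{\delta}$-coordinates.
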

When $\Lambda_{\beta}$ is defined by equality and/or inequality constraints, a closed-form expression for $\widehat{\vc{\lambda}}_{\beta}$ can be computed; we give some examples that are relevant for Section~\ref{sec:simulation}. For a more formal solution, see \citet[Theorem 5]{andrews1999}. 

\begin{exa}[$c=1$]\label{exa1}
Suppose that $\Theta = [0,1]^p$. If $\beta_0 = 0$, then $\Lambda_{\beta} = [0,\infty)$ and $\widehat{\lambda}_{\beta} = \max (Y_{\beta}, 0)$. If $\beta_0 = 1$, then $\Lambda_{\beta} = (-\infty,0]$ and $\widehat{\lambda}_{\beta} = \min (Y_{\beta}, 0)$.
\end{exa}

\begin{exa}[$c = 2$]\label{exa2}
Suppose that $\Theta = [0,1]^p$. If $\vc{\beta}_0 = (1,1)$, then $\Lambda_{\beta} = (-\infty,0]^2$. Let $\rho: = H J^{-1} H^T$. Then
\begin{align*}
\widehat{\vc{\lambda}}_{\beta} & := \mathbbm{1} \left\{ Y_{\beta 1} < 0, Y_{\beta 2} < 0 \right\} \vc{Y}_{\beta} \\
& \qquad + \mathbbm{1} \left\{ Y_{\beta 1} - \rho_{21} Y_{\beta 2} < 0, Y_{\beta 2} \geq 0 \right\} (Y_{\beta 1} - \rho_{21} Y_{\beta 2}, 0)^T  \\
& \qquad + \mathbbm{1} \left\{ Y_{\beta 1} \geq 0, Y_{\beta  2} - \rho_{12} Y_{\beta 1} < 0 \right\} (0, Y_{\beta  2} - \rho_{12} Y_{\beta 1})^T.
\end{align*}
If $\vc{\beta}_0 = (0,0)$, $\vc{\beta}_0 = (0,1)$ or $\vc{\beta}_0 = (1,0)$, similar expressions can be obtained by reversing the first and/or the second inequalities in each of the indicator functions above.
\end{exa}

\subsection{Hypothesis testing for tail dependence parameters on the boundary of the parameter space}\label{sec:test}
We are interested in constructing hypothesis tests for parameter values that, under the null hypothesis, are on the boundary of the alternative hypothesis. We propose two test statistics, whose asymptotic distribution follows from the results in \citet{andrews2001}.

Assume that there are no nuisance parameters on the boundary, i.e., all components of $\vtheta_0$ that lie on the boundary are part of the null hypothesis. We are interested in testing
\begin{equation*}
H_0 : \vc{\beta} = \vc{\beta}_* \quad \text{vs} \quad H_1: \vc{\beta} \neq \vc{\beta}_*, \qquad \vc{\beta}_* \in \RR^c.
\end{equation*}
Let $\Theta_0 := \{ \vtheta \in \Theta: \vtheta = (\vc{\beta}_*,\vc{\delta}) \text{ for some } \vc{\delta} \in \RR^{p-c} \}$ denote the restricted parameter space. Assume that
\begin{description}
\item[(A5)] For all $\vtheta \in \Theta_0$, $\Theta$ is a product set with respect to $(\vc{\beta}, \vc{\delta})$ local to $\vtheta$. That is, for all $\vtheta \in \Theta_0$,
\begin{equation*}
\Theta \cap B_{\varepsilon} (\vtheta) = \left( \mathcal{B} \times \Delta \right) \cap B_{\varepsilon} (\vtheta) \quad \text{ for some } \mathcal{B} \subset \RR^c, \, \Delta \subset \RR^{p-c} \text{ and } \varepsilon > 0.
\end{equation*}
\end{description}
%If all of the above assumptions hold \comment{more specific}, then $\Theta_0 - \vtheta_0$ is locally equal to $\Lambda_0 = \{\vc{0} \} \times \RR^{p-c}$. 
Define $\htheta^{(0)} := \argmin_{\vtheta \in \Theta_0} f_{n,k} (\vtheta)$. A deviance test statistic can be defined as
\begin{equation*}
T^{(1)}_{n,k} : = k \left(f_{n,k} (\htheta^{(0)}) - f_{n,k} (\htheta) \right).
\end{equation*}
%Define $\widehat{\vc{\lambda}}_0 = \argmin_{\vc{\lambda} \in \Lambda_0} q(\vc{\lambda}) \in \cl{\Lambda_0}$. We can partition $\widehat{\vc{\lambda}}_0$ as $\widehat{\vc{\lambda}}_0 = \left(\widehat{\vc{\lambda}}_{\beta 0}, \widehat{\vc{\lambda}}_{\delta 0} \right) = \left(\vc{0}, \widehat{\vc{\lambda}}_{\delta 0} \right)$.

\begin{cor}\label{cor2}
Suppose $\vtheta_0 \in \Theta_0$ and all previous assumptions hold. Then
\begin{equation*}\label{eq:adtest}
T^{(1)}_{n,k} \dto  \widehat{\vc{\lambda}}_{\beta}^T \left( H J^{-1} H^T \right)^{-1} \widehat{\vc{\lambda}}_{\beta}.
\end{equation*}
\end{cor}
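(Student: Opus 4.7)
The plan is to leverage the quadratic expansion of $f_{n,k}$ around $\vtheta_0$ derived in Section~\ref{sec:asymptotic}, applied at both $\htheta$ and $\htheta^{(0)}$. Subtracting the two expansions, the $f_{n,k}(\vtheta_0)$ term and the $\vc{Y}_{n,k}^T J \vc{Y}_{n,k}$ term cancel, yielding
\begin{equation*}
T^{(1)}_{n,k} = q_{n,k}\bigl(\sqrt{k}(\htheta^{(0)} - \vtheta_0)\bigr) - q_{n,k}\bigl(\sqrt{k}(\htheta - \vtheta_0)\bigr) + o_p(1),
\end{equation*}
where the remainders $k R_{n,k}(\cdot)$ are $o_p(1)$ by consistency and \citet[Lemma 3]{andrews2002}, exactly as in the proof of Theorem~\ref{theorem2}.

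Next, I would replace each of the two terms on the right by the infimum of $q_{n,k}$ over the appropriate cone. The argument used for Theorem~\ref{theorem2} gives $q_{n,k}(\sqrt{k}(\htheta - \vtheta_0)) = \inf_{\vc{\lambda} \in \Lambda} q_{n,k}(\vc{\lambda}) + o_p(1)$. For $\htheta^{(0)}$, assumption (A5) implies that $\Theta_0$ equals $\{\vc{\beta}_*\} \times \Delta$ near $\vtheta_0$ with $\vc{\delta}_0$ in the interior of $\Delta$, so $\Theta_0 - \vtheta_0$ is locally equal to the convex cone $\Lambda_0 := \{\vc{0}\} \times \RR^{p-c}$; the same Andrews-type reasoning then yields $q_{n,k}(\sqrt{k}(\htheta^{(0)} - \vtheta_0)) = \inf_{\vc{\lambda} \in \Lambda_0} q_{n,k}(\vc{\lambda}) + o_p(1)$. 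By (A3) and the continuous mapping theorem, both infima converge jointly in distribution to their counterparts with $q$ in place of $q_{n,k}$.

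The third step is to profile out $\vc{\lambda}_{\delta}$. Setting the gradient of $q$ with respect to $\vc{\lambda}_{\delta}$ to zero gives $\vc{\lambda}_{\delta} = \vc{Y}_{\delta} - J_{\delta}^{-1} J_{\delta\beta}(\vc{\lambda}_{\beta} - \vc{Y}_{\beta})$; substituting back and using the block-inverse identity $(H J^{-1} H^T)^{-1} = J_{\beta} - J_{\beta\delta} J_{\delta}^{-1} J_{\delta\beta}$, the profiled quadratic in $\vc{\lambda}_{\beta}$ coincides with $q_{\beta}(\vc{\lambda}_{\beta})$. Under (A4) this gives $\inf_{\Lambda} q = \inf_{\Lambda_{\beta}} q_{\beta} = q_{\beta}(\widehat{\vc{\lambda}}_{\beta})$, while $\inf_{\Lambda_0} q = q_{\beta}(\vc{0}) = \vc{Y}_{\beta}^T (H J^{-1} H^T)^{-1} \vc{Y}_{\beta}$.

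Finally, I would combine the two pieces by exploiting that $\widehat{\vc{\lambda}}_{\beta}$ is the projection of $\vc{Y}_{\beta}$ onto the convex cone $\Lambda_{\beta}$ in the inner product induced by $A := (H J^{-1} H^T)^{-1}$. Standard cone-projection theory gives the orthogonality relation $\widehat{\vc{\lambda}}_{\beta}^T A (\widehat{\vc{\lambda}}_{\beta} - \vc{Y}_{\beta}) = 0$, which after expansion yields
\begin{equation*}
\vc{Y}_{\beta}^T A \vc{Y}_{\beta} - (\widehat{\vc{\lambda}}_{\beta} - \vc{Y}_{\beta})^T A (\widehat{\vc{\lambda}}_{\beta} - \vc{Y}_{\beta}) = \widehat{\vc{\lambda}}_{\beta}^T A \widehat{\vc{\lambda}}_{\beta},
\end{equation*}
which is precisely the announced limit. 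The step I expect to be most delicate is verifying that the quadratic-approximation argument applies simultaneously to the unrestricted and the restricted problem, so that the two $o_p(1)$ terms from \citet[Lemma 3]{andrews2002} can be combined cleanly; the block algebra and the cone-projection identity, though somewhat tedious, are routine.
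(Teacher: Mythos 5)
Your argument is correct, but it takes a different route from the paper: the paper does not prove Corollary~\ref{cor2} at all, it simply declares it a special case of \citet[Theorem 4c]{andrews2001} and omits the proof. You instead unfold the argument from first principles: subtracting the quadratic expansions of $f_{n,k}$ at $\htheta$ and $\htheta^{(0)}$, replacing each term by the infimum of $q_{n,k}$ over the cones $\Lambda$ and $\Lambda_0 = \{\vc{0}\}\times\RR^{p-c}$, profiling out $\vc{\lambda}_{\delta}$ via the Schur-complement identity $(HJ^{-1}H^T)^{-1} = J_{\beta} - J_{\beta\delta}J_{\delta}^{-1}J_{\delta\beta}$, and closing with the cone-projection orthogonality relation $\widehat{\vc{\lambda}}_{\beta}^T A(\vc{Y}_{\beta}-\widehat{\vc{\lambda}}_{\beta})=0$. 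Each step checks out (the profiling algebra and the Pythagoras-type identity are exactly right), and your version has the advantage of making transparent \emph{why} the deviance statistic and the Wald statistic of Corollary~\ref{cor3} share the same limit. What the paper's citation buys is economy and the assurance that the regularity bookkeeping has been done once and for all in \citet{andrews2001}; what your derivation still needs to be fully rigorous is two small additions: (i) an explicit statement that $\htheta^{(0)}$ is consistent for $\vtheta_0$ (apply Theorem~\ref{theorem1} to the restricted space $\Theta_0$, which under (A5) inherits the structure required by (A1)), since consistency is what licenses the quadratic approximation and the control of $kR_{n,k}(\htheta^{(0)})$ at the restricted minimizer; and (ii) a remark that $\Lambda_{\beta}$ may be taken closed (as in all the paper's examples) so that the projection $\widehat{\vc{\lambda}}_{\beta}$ exists and the orthogonality relation applies. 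Neither is a genuine obstacle.
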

Recall that $\vc{G} = \dot{L}^T \Omega \vc{D}$ and let $\mathcal{J}$ denote its covariance matrix, i.e., $\vc{G} \sim \mathcal{N}_p (\vc{0}, \mathcal{J})$ with $\mathcal{J} = \dot{L}^T \Omega \Sigma \Omega \dot{L} \in \RR^{p \times p}$. Note that $\mathcal{J} = J$ if and only if $\Omega = \Sigma^{-1}$.
If $\Lambda_{\beta} = \RR^c$ (no parameters on the boundary) and $\mathcal{J} = J$, then 
\begin{equation*}
T^{(1)}_{n,k} \dto \vc{Y}_{\beta}^T \left( H J^{-1} H^T \right)^{-1} \vc{Y}_{\beta} \sim \chi^2_c,
\end{equation*}
where $\chi^2_c$ denotes a chi-squared random variable with $c$ degrees of freedom. %If $\mathcal{J} = J$ but $\Lambda_{\beta} \neq \RR^c$, then the asymptotic distribution of $T^{(1)}_{n,k}$ is that of a mixture of chi-square random variables. Formulas for the mixing weights for $c \leq 4$ can be found in \citet[Section 4]{shapiro1985}. We will not always want to assume that $\mathcal{J} = J$; in that case, we'll approximate the distribution of \eqref{eq:adtest} by simulation.

A Wald-type test statistic can be based on the quadratic form in $\widehat{\vc{\beta}}_{n,k} - \vc{\beta}_*$. Let $\widehat{V}_n^{-1} := (H J^{-1}_n H^T)^{-1} \in \RR^{c \times c}$ denote a weight matrix where $J_n := 
J(\widehat{\vtheta}_{n,k})$. Define 
\begin{equation*}
T_{n,k}^{(2)} := k \left(\widehat{\vc{\beta}}_{n,k} - \vc{\beta}_* \right)^T  \widehat{V}^{-1}_n \left(\widehat{\vc{\beta}}_{n,k} - \vc{\beta}_* \right).
\end{equation*}
\begin{cor}\label{cor3}
Suppose $\vtheta_0 \in \Theta_0$ and all of the above assumptions hold. Then
\begin{equation*}
T_{n,k}^{(2)} \dto \widehat{\vc{\lambda}}_{\beta}^T \left( H J^{-1} H^T \right)^{-1} \widehat{\vc{\lambda}}_{\beta}.
\end{equation*}
\end{cor}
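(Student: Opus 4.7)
The plan is to express $T^{(2)}_{n,k}$ as a continuous functional of two ingredients whose limits are already accessible: the scaled estimator $\sqrt{k}(\widehat{\vc{\beta}}_{n,k} - \vc{\beta}_*)$, whose asymptotic distribution is given by Corollary~\ref{cor1}, and the weight matrix $\widehat{V}_n^{-1} = (H J_n^{-1} H^T)^{-1}$, which I will show is consistent for $(H J^{-1} H^T)^{-1}$. Combining these via Slutsky's theorem and the continuous mapping theorem should then deliver the stated limit.

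First I would observe that under $H_0$ we have $\vtheta_0 = (\vc{\beta}_*, \vc{\delta}_0) \in \Theta_0$, so $\vc{\beta}_0 = \vc{\beta}_*$, and Corollary~\ref{cor1} directly gives
\begin{equation*}
\sqrt{k}\bigl(\widehat{\vc{\beta}}_{n,k} - \vc{\beta}_*\bigr) \dto \widehat{\vc{\lambda}}_{\beta}, \qquad n \to \infty.
\end{equation*}
Next I would establish $\widehat{V}_n^{-1} \pto (H J^{-1} H^T)^{-1}$. By Theorem~\ref{theorem1}, $\htheta \pto \vtheta_0$, and under the hypotheses of that theorem the map $\vtheta \mapsto J(\vtheta) = \dot{L}(\vtheta)^T \Omega(\vtheta) \dot{L}(\vtheta)$ is continuous at $\vtheta_0$ relative to $\Theta^+$, because both $\dot{L}$ and $\Omega$ have continuous l/r partial derivatives there. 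Since $\dot{L}(\vtheta_0)$ has full column rank and $\Omega(\vtheta_0)$ is positive definite, $J$ and $H J^{-1} H^T$ are invertible, and the continuous mapping theorem yields $J_n \pto J$, hence $\widehat{V}_n^{-1} \pto (H J^{-1} H^T)^{-1}$.

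Writing $T^{(2)}_{n,k}$ as $\bigl(\sqrt{k}(\widehat{\vc{\beta}}_{n,k} - \vc{\beta}_*)\bigr)^T \widehat{V}_n^{-1} \bigl(\sqrt{k}(\widehat{\vc{\beta}}_{n,k} - \vc{\beta}_*)\bigr)$ and applying Slutsky's theorem together with the continuous mapping theorem to the quadratic form $(x, A) \mapsto x^T A x$ then gives
\begin{equation*}
T^{(2)}_{n,k} \dto \widehat{\vc{\lambda}}_{\beta}^T (H J^{-1} H^T)^{-1} \widehat{\vc{\lambda}}_{\beta},
\end{equation*}
which coincides with the limit obtained for $T^{(1)}_{n,k}$ in Corollary~\ref{cor2}.

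The main subtlety I anticipate concerns the continuity of $J$ at the boundary point $\vtheta_0$: because the derivatives defining $\dot{L}$ only exist in the one-sided (l/r) sense on $\Theta^+$, continuity must be read as convergence along sequences that remain in $\Theta^+$. Assumption (A1) together with the consistency of $\htheta$ ensures that $\htheta \in \Theta^+$ with probability tending to one, so this restriction causes no real difficulty; it is, however, the only place where the boundary geometry intrudes on what is otherwise a textbook Slutsky-type argument.
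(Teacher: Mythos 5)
Your argument is correct and is precisely the ``direct'' proof that the paper omits: the paper simply notes that Corollary~\ref{cor3} is a special case of Theorem~6d of \citet{andrews2001}, whose proof in this setting reduces to exactly your combination of Corollary~\ref{cor1}, consistency of $\widehat{V}_n^{-1}$ via continuity of $J(\cdot)$ on $\Theta^+$, and Slutsky together with the continuous mapping theorem. Your remark on one-sided continuity of $\dot{L}$ at the boundary and on $\htheta \in \Theta^+$ with probability tending to one correctly handles the only nonstandard point.
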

Corollary~\ref{cor2} is a special case of \citet[Theorem 4c]{andrews2001} and Corollary~\ref{cor3} is a special case of \citet[Theorem 6d]{andrews2001}; their proofs are direct and thus omitted.

\section{Simulation studies}\label{sec:simulation}
We conduct simulation experiments where we simulate $1000$ samples of size $n=5000$ from a parametric model and we assess the quality of the weighted least squares estimator in terms of its root mean squared error (RMSE) for settings where one or more of the parameters are on the boundary of the parameter space. We take $k \in \{25, 50, \ldots, 300 \}$ and we use the empirical and the beta tail dependence function as initial estimators. Next, we study the performance of the two test statistics introduced in Section~\ref{sec:test} in terms of empirical level and power for $k \in \{25, 50, 75, 100 \}$, as we observe that higher $k$ leads to a steadily growing bias that quickly deteriorates the performance of the hypothesis test. In all experiments, we take $\Omega = I_q$, the $q \times q$ identity matrix, because using an optimal weight matrix has a very minor effect on the quality of estimation while severely slowing down the estimation procedure. Moreover,
inverting $\Sigma$ may be hindered by numerical problems in the case of a max-linear model \citep{einmahl2016nr2}. 

\subsection{Brown--Resnick model}
We simulate data from a Brown--Resnick model with $\alpha = 2$ and $\rho = 1$ on a regular $3 \times 2$ grid ($d = 6$) and on a regular $4 \times 4$ grid ($d = 16$). As in \cite{einmahl2016nr2}, we let $\vc{c}_m \in \{0,1\}^d$ such that exactly two components of $\vc{c}_m $ are equal to one (meaning that we consider a pairwise estimator) and we focus on pairs of neighbouring locations only, i.e., locations that are at most a distance of $\sqrt{2}$ apart. This leads to $q = 11$ and $q = 42$ respectively. 

\begin{figure}[ht]
\centering
\subfloat{\includegraphics[width=0.245\textwidth]{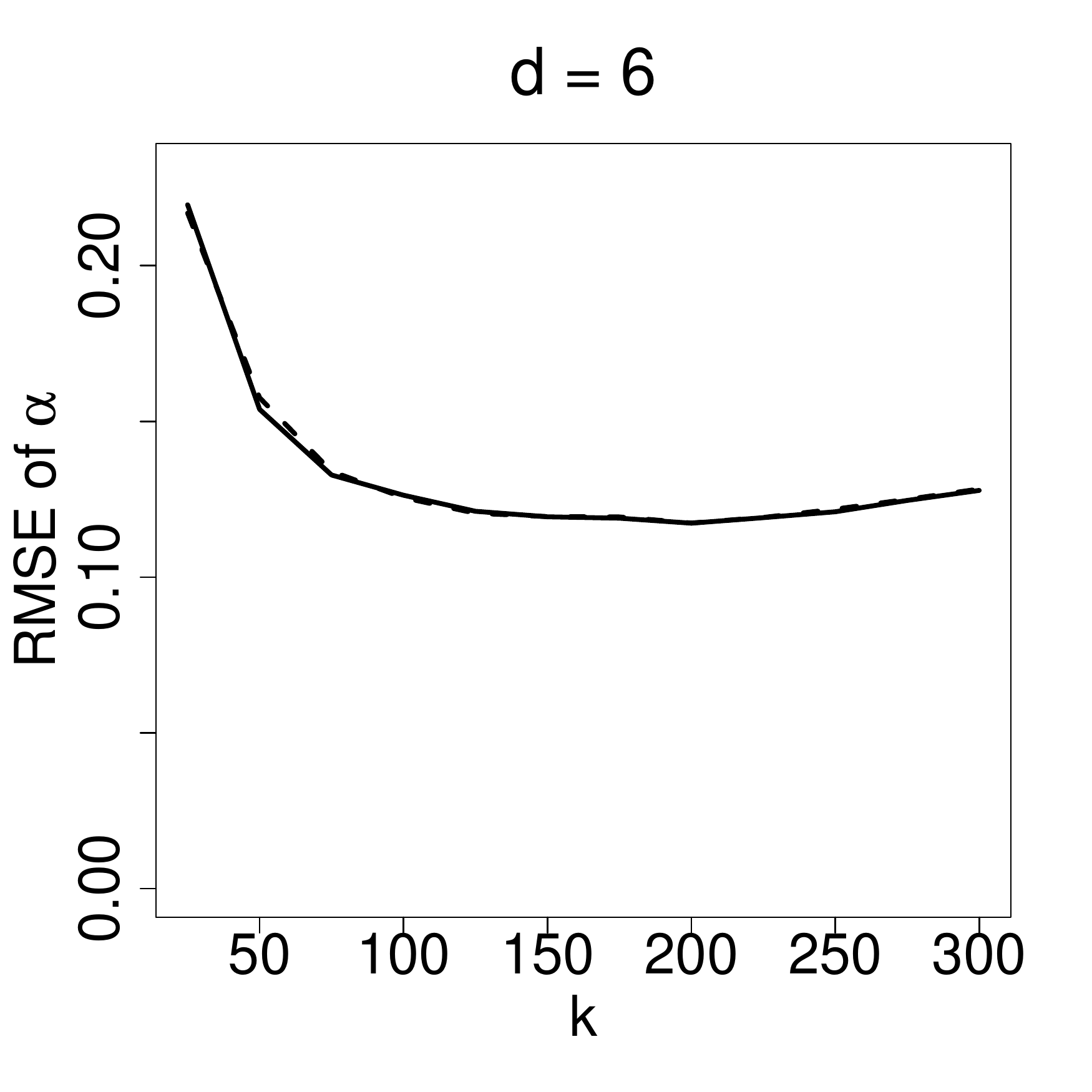}}
\subfloat{\includegraphics[width=0.245\textwidth]{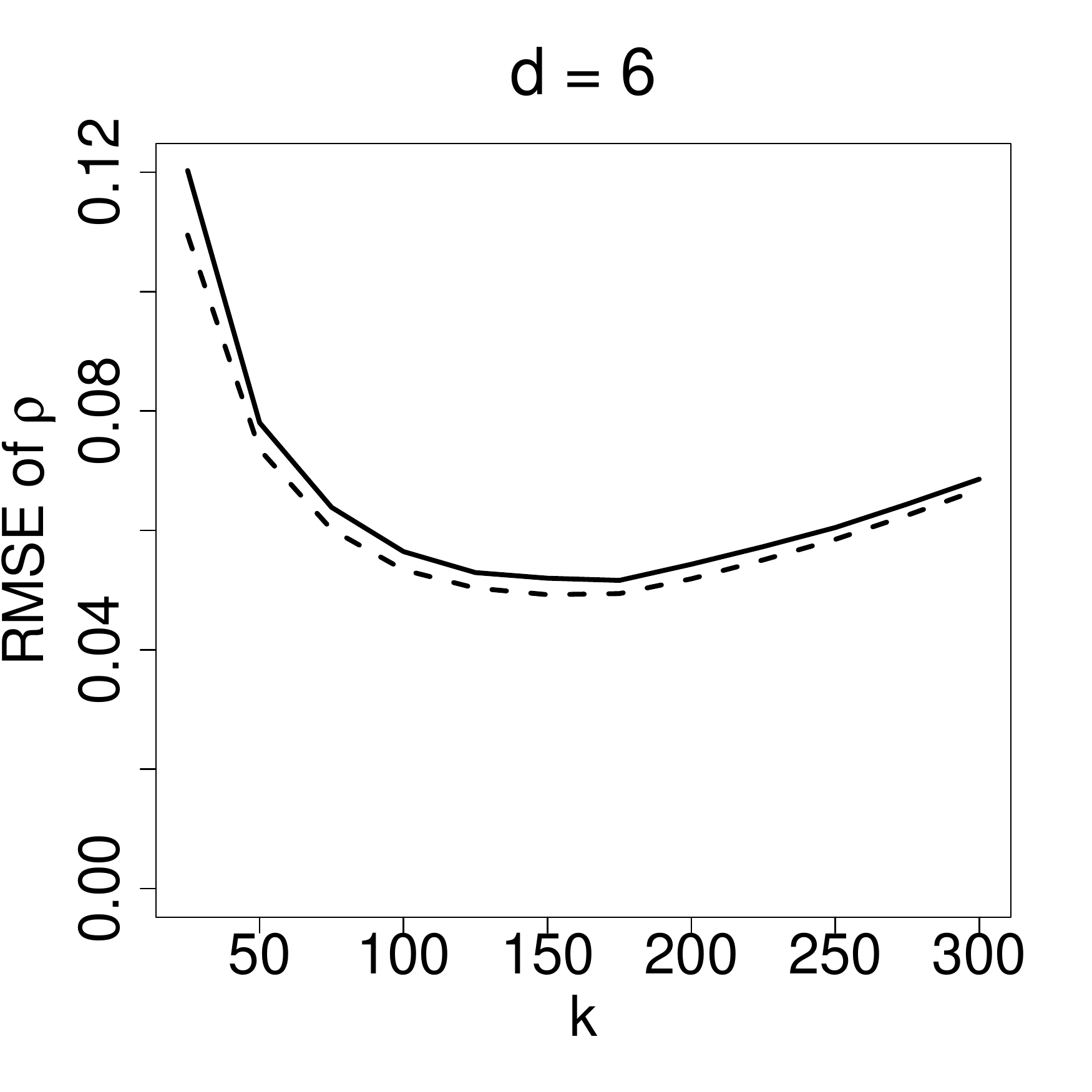}} 
\subfloat{\includegraphics[width=0.245\textwidth]{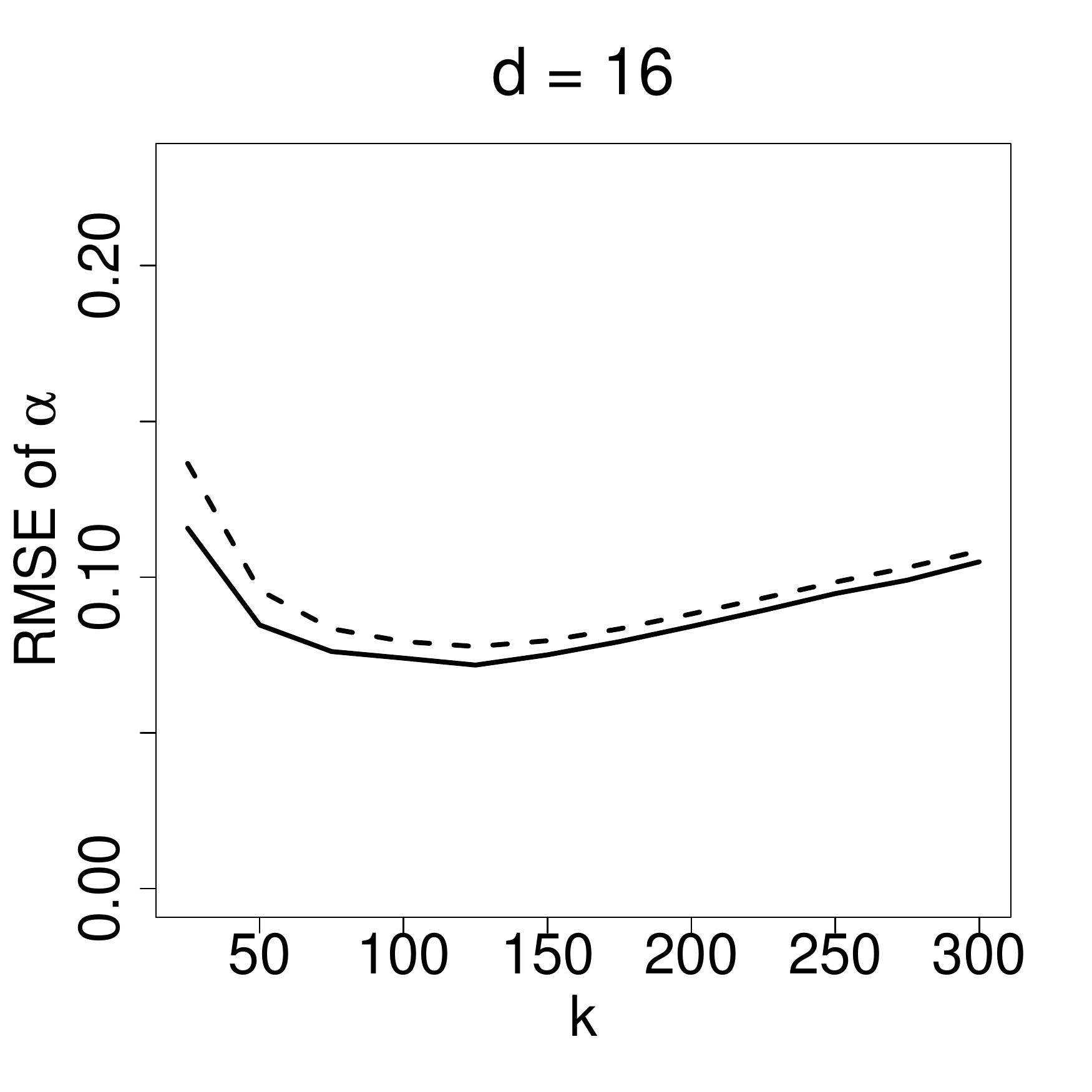}}
\subfloat{\includegraphics[width=0.245\textwidth]{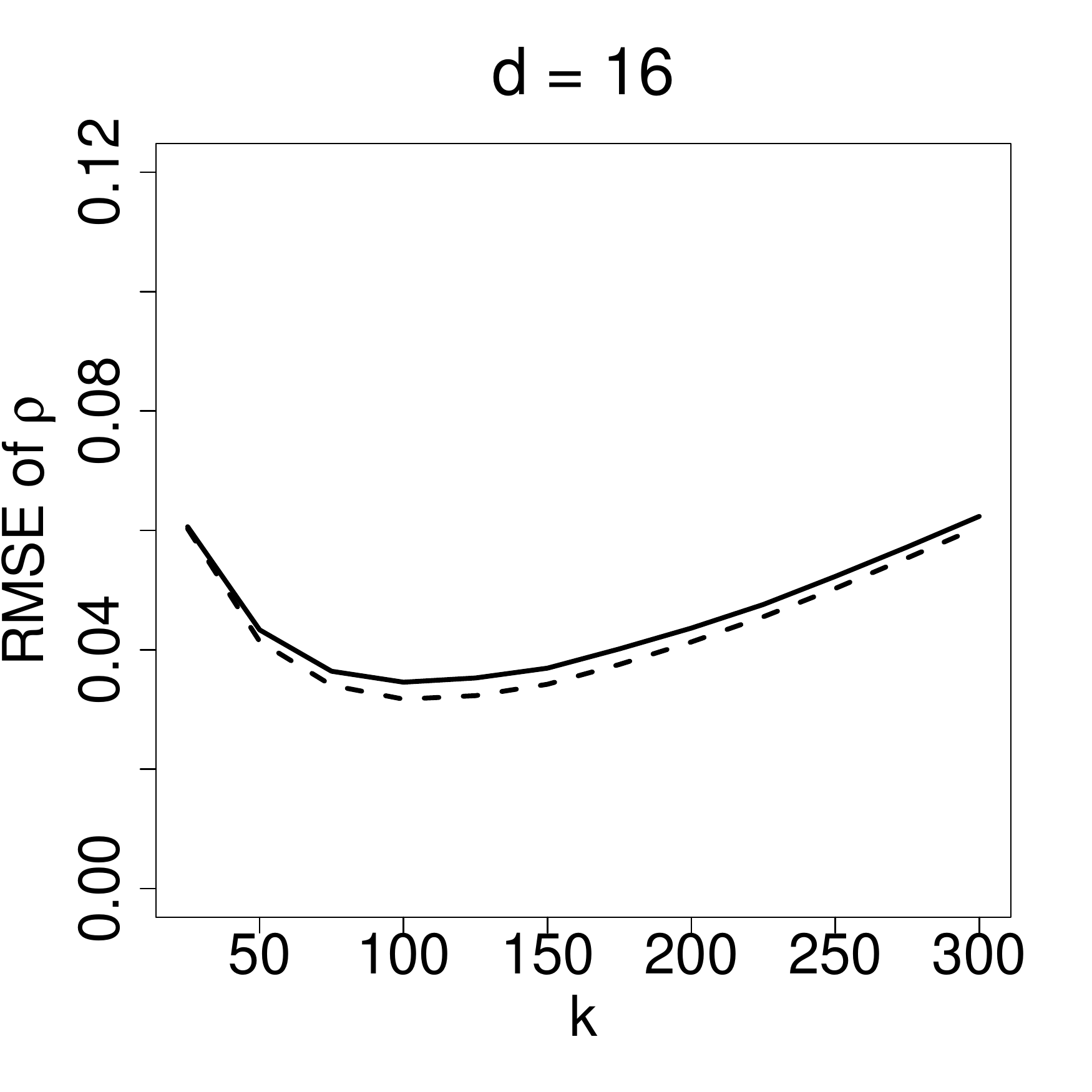}} 
\caption{RMSE for parameter estimators of the Brown--Resnick model based on the empirical tail dependence function (solid lines) and the beta tail dependence function (dashed lines) for $(\alpha,\rho) = (2,1)$; $d = 6$ (left) and $d = 16$ (right).} 
\label{fig:ellTotBR}
\end{figure}

Figure~\ref{fig:ellTotBR} shows the RMSE of the parameter estimates based on the empirical tail dependence function (solid lines) and the beta tail dependence function (dashed lines). We see that the empirical tail dependence function outperforms the beta tail dependence function for $\alpha$, while the opposite is true for $\rho$, although differences are very minor. For small values of $k$, the RMSE is lower for $d = 16$ than for $d = 6$.

Next, we study the empirical level and power of the test $H_0: \alpha = 2$. Assumtion (A4) holds with $\Lambda = (-\infty,0] \times \RR$. Table~\ref{tab:level1BR} shows the empirical level of the test statistic $T_{n,k}^{(2)}$ based on a significance level of $\alpha = 0.05$. We do not display $T_{n,k}^{(1)}$ as its behaviour is identical to that of $T_{n,k}^{(2)}$. Surprisingly, the empirical tail dependence function is preferred for $d = 16$ while the beta tail dependence function performs best for $d = 6$. In general, we can conclude that low values of $k$ are to be preferred and that the test performs better for lower dimension. 
Figure \ref{fig:powerBR} shows the empirical power of $T_{n,k}^{(2)}$ as a function of $\alpha \in [1.5,2)$. The power increases both with the dimension and with $k$.

\begin{table}[ht]
\centering
 \begin{tabular}{lrrrrrrrr}
 & \multicolumn{2}{c}{$k = 25$} & \multicolumn{2}{c}{$k = 50$}  & \multicolumn{2}{c}{$k = 75$}  & \multicolumn{2}{c}{$k = 100$}  \\
\cmidrule(r){2-3}
\cmidrule(r){4-5}
\cmidrule(r){6-7}
\cmidrule(r){8-9}
& emp & beta & emp & beta & emp & beta & emp  & beta  \\
\cmidrule(r){2-9}
$d = 6$ & $6.0$ & $4.3$ & $5.1$ & $5.0$ & $7.0$ & $5.5$ & $8.8$ & $8.0$ \\
$d = 16$ & $6.7$ & $10.3$ & $ 8.3$ & $10.2$ & $9.8$  & $12.4$ & $13.2$  & $16.0$ \\
\bottomrule
\end{tabular}
\caption{Empirical level of $T_{n,k}^{(2)}$ based on the empirical and the beta tail dependence function for a significance level of $0.05$.}
\label{tab:level1BR}
\end{table}

\begin{figure}[ht]
\centering
\subfloat{\includegraphics[width=0.245\textwidth]{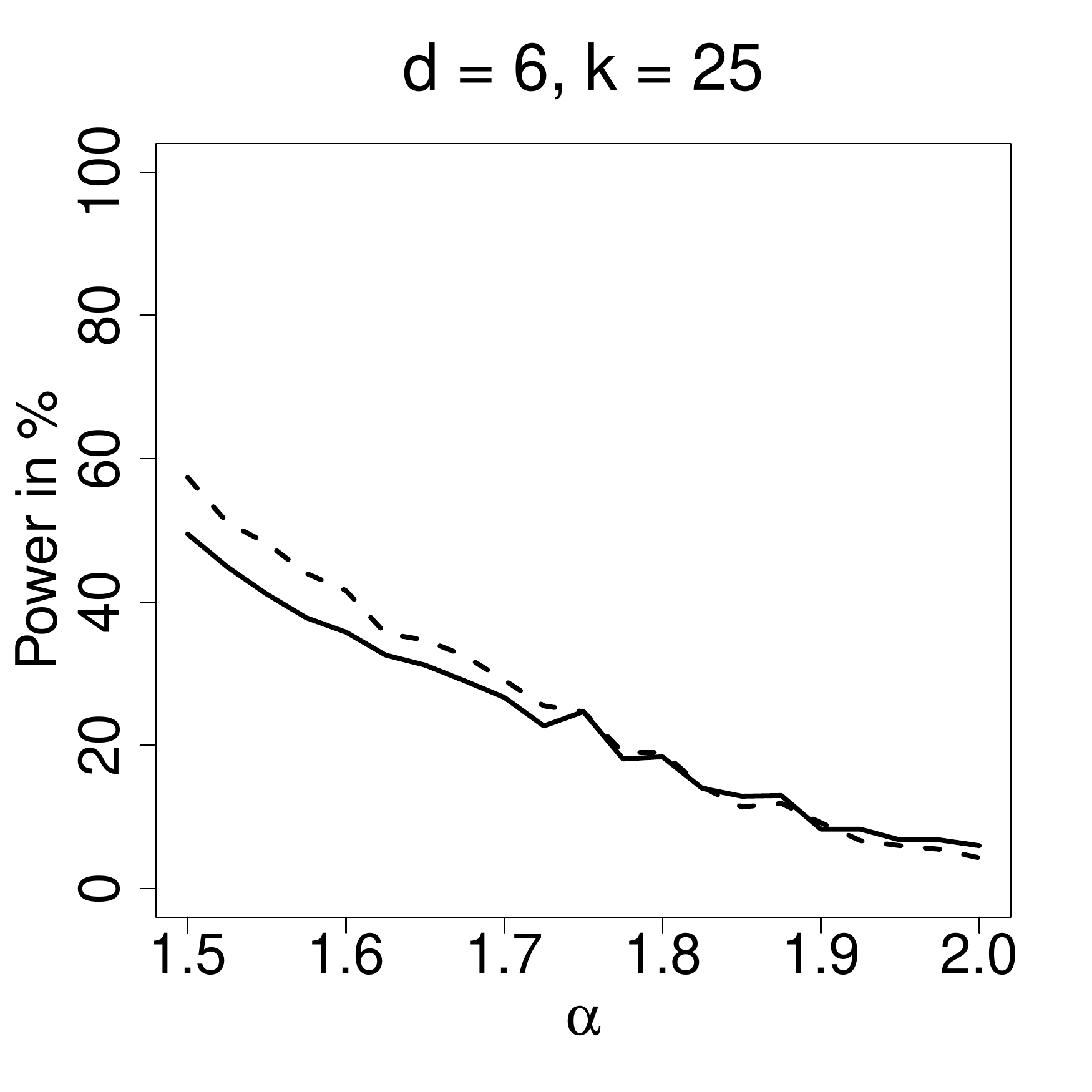}} 
\subfloat{\includegraphics[width=0.245\textwidth]{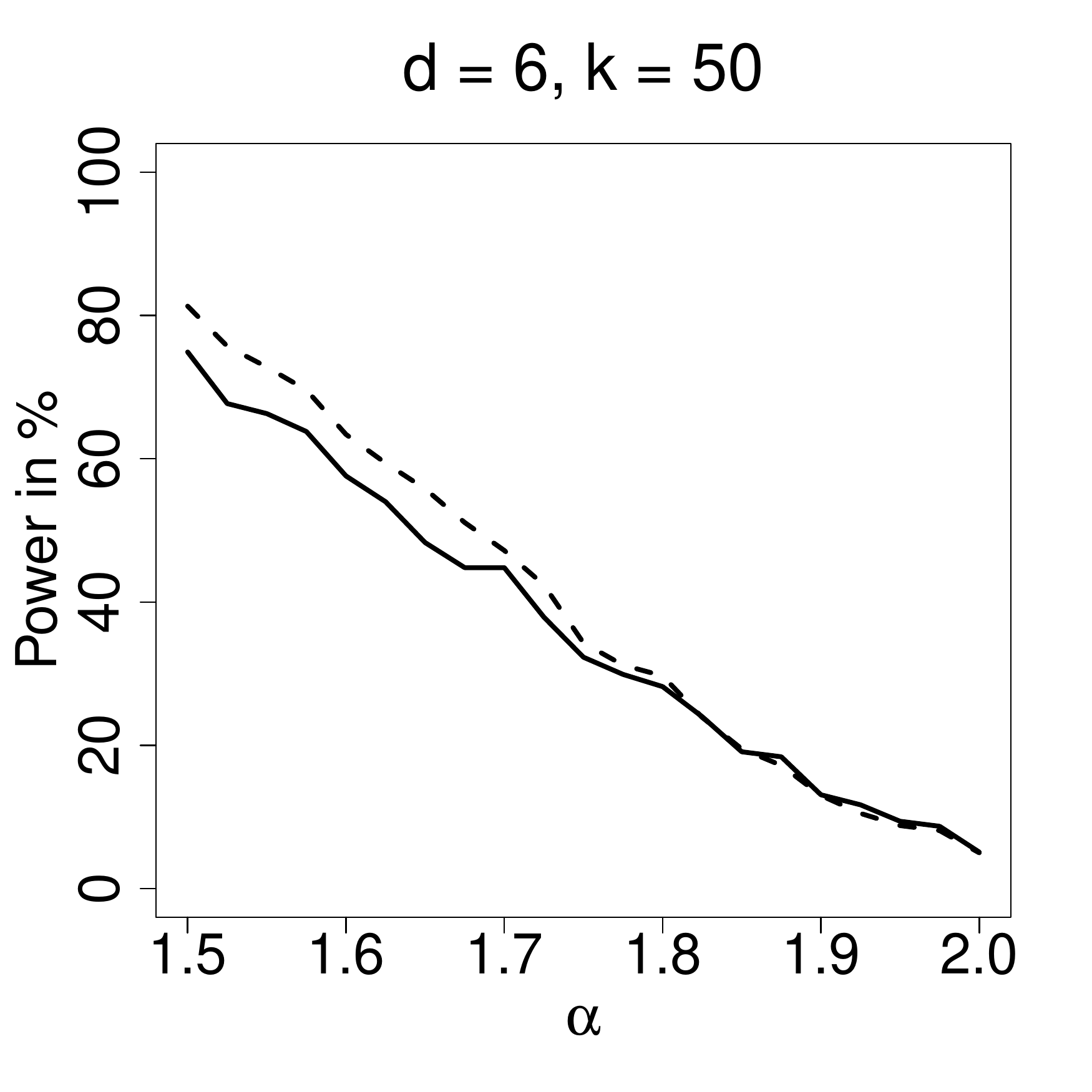}} 
\subfloat{\includegraphics[width=0.245\textwidth]{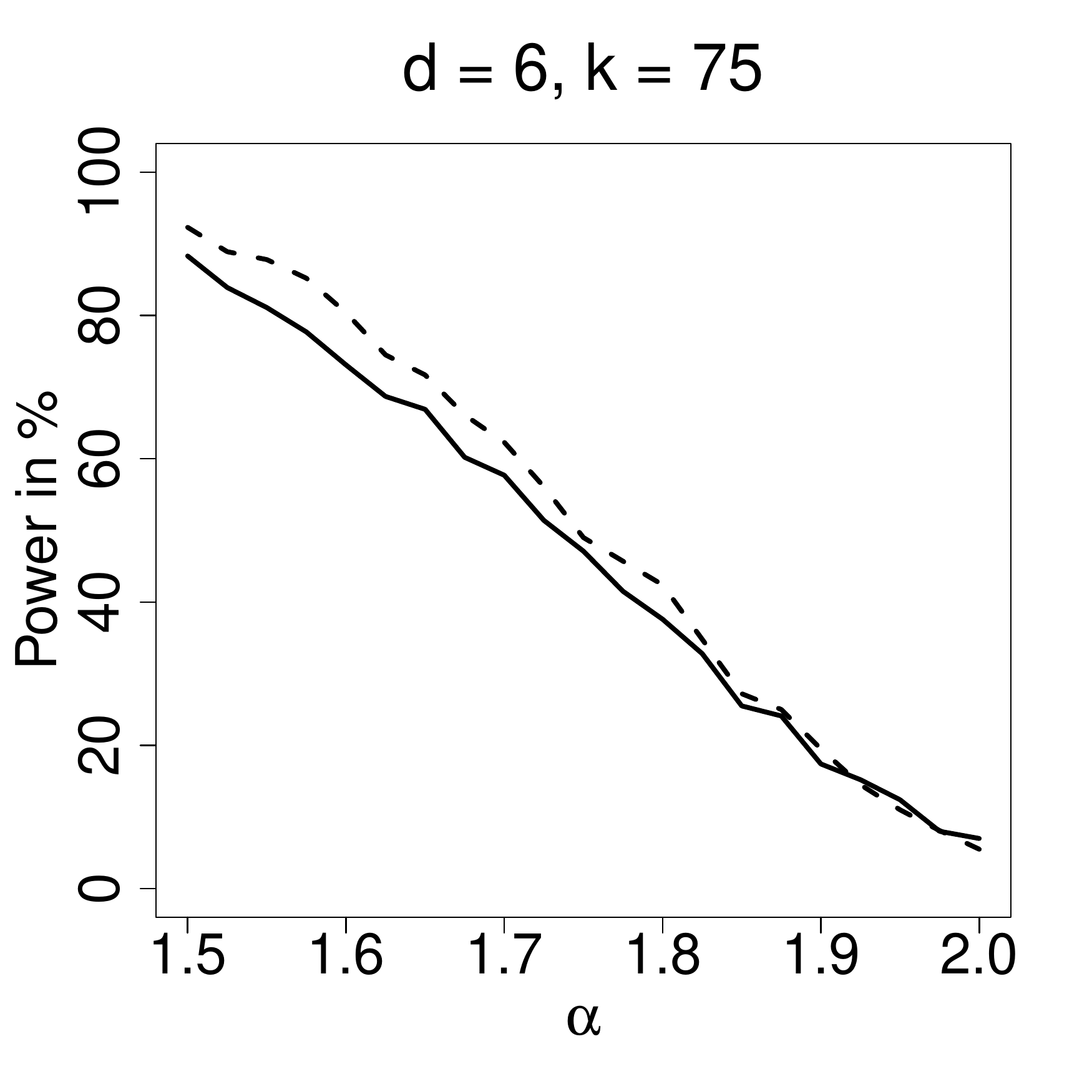}}
\subfloat{\includegraphics[width=0.245\textwidth]{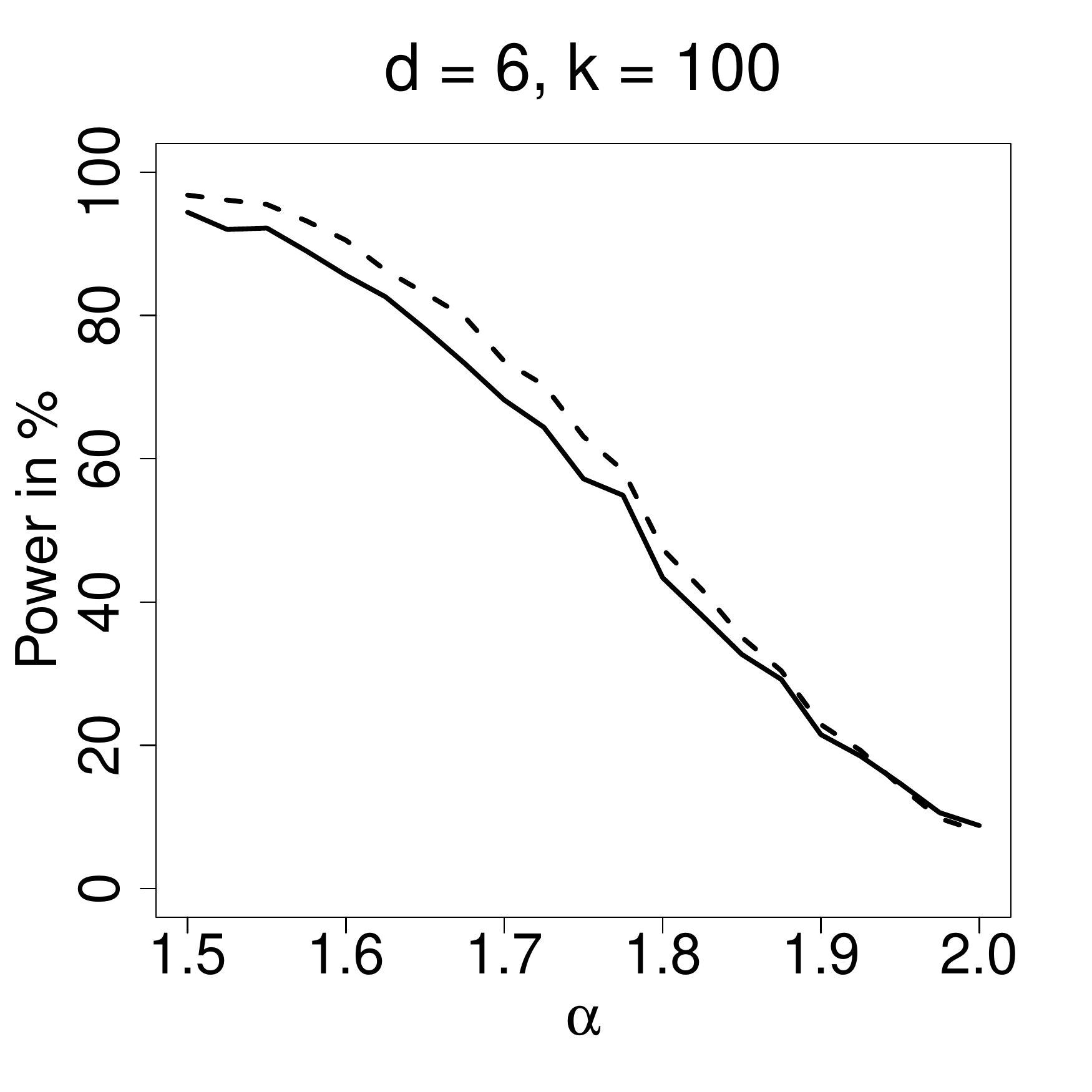}} \\
\subfloat{\includegraphics[width=0.245\textwidth]{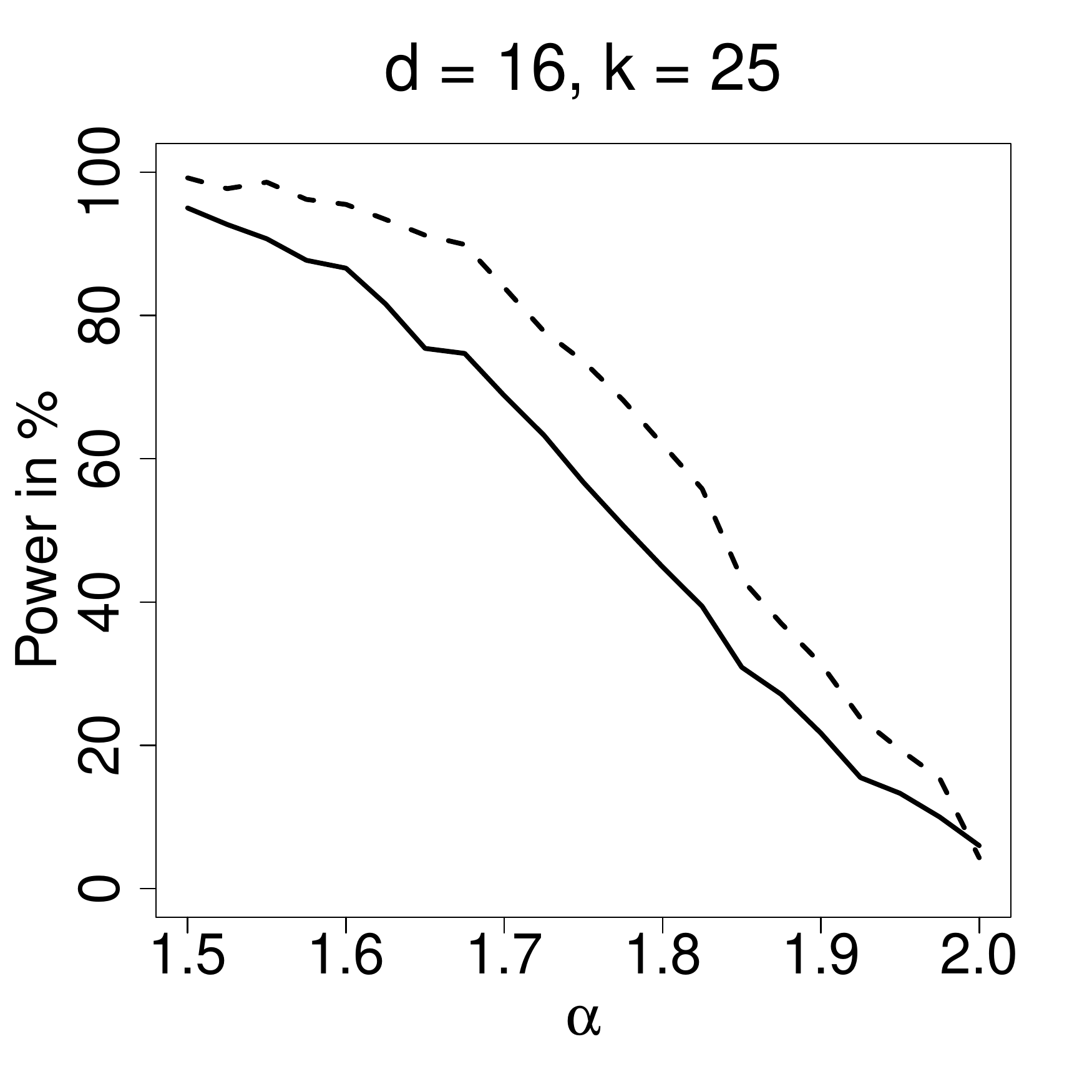}} 
\subfloat{\includegraphics[width=0.245\textwidth]{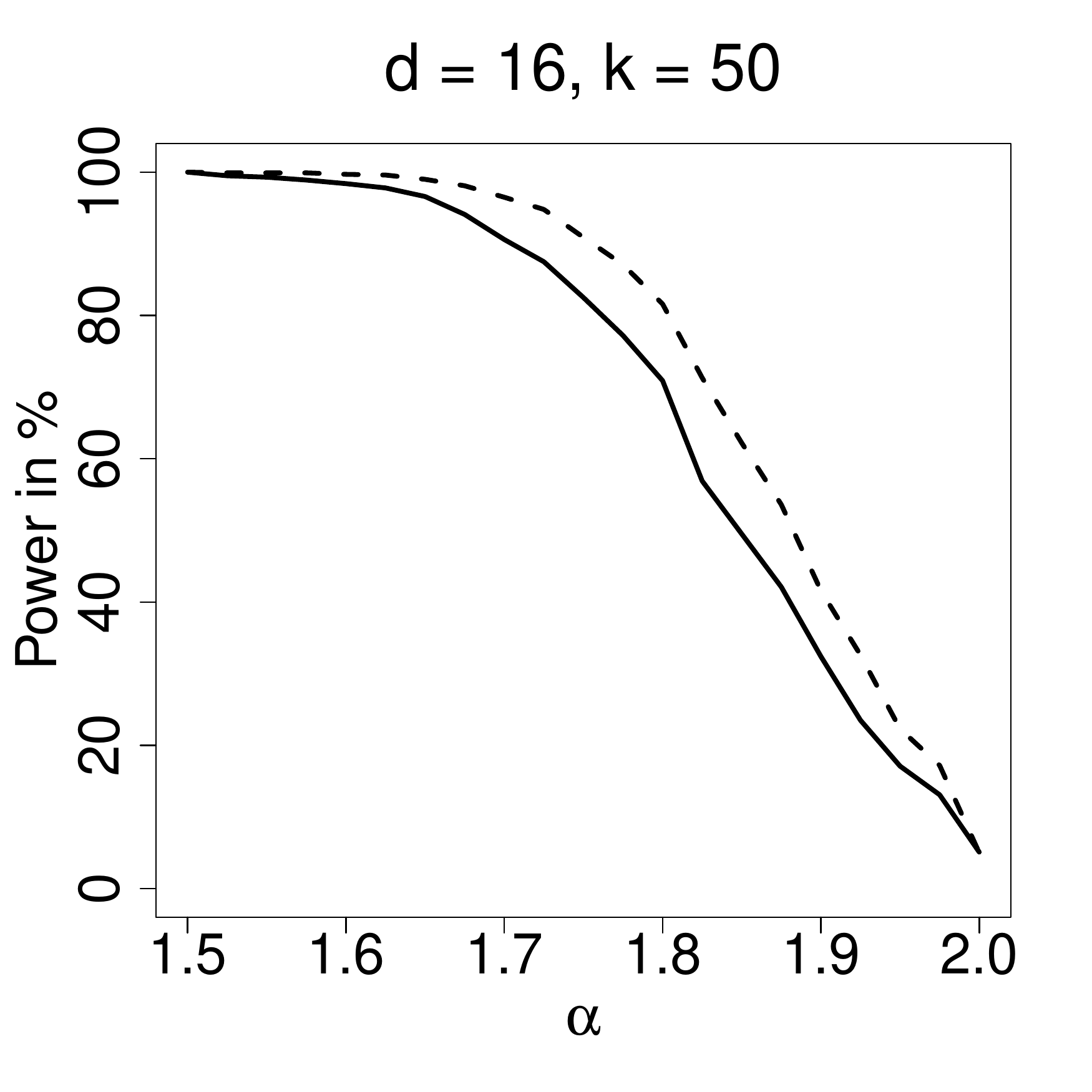}} 
\subfloat{\includegraphics[width=0.245\textwidth]{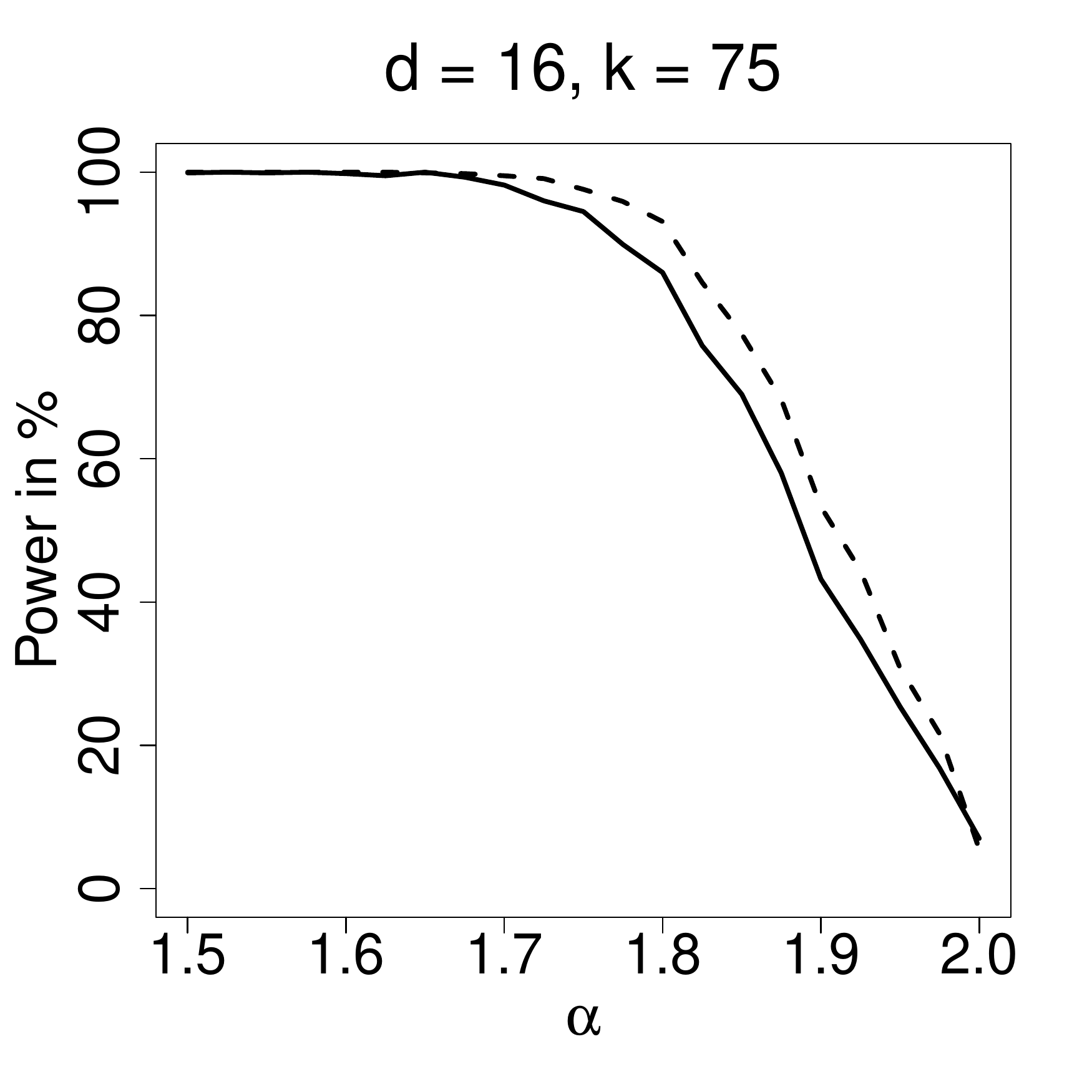}}
\subfloat{\includegraphics[width=0.245\textwidth]{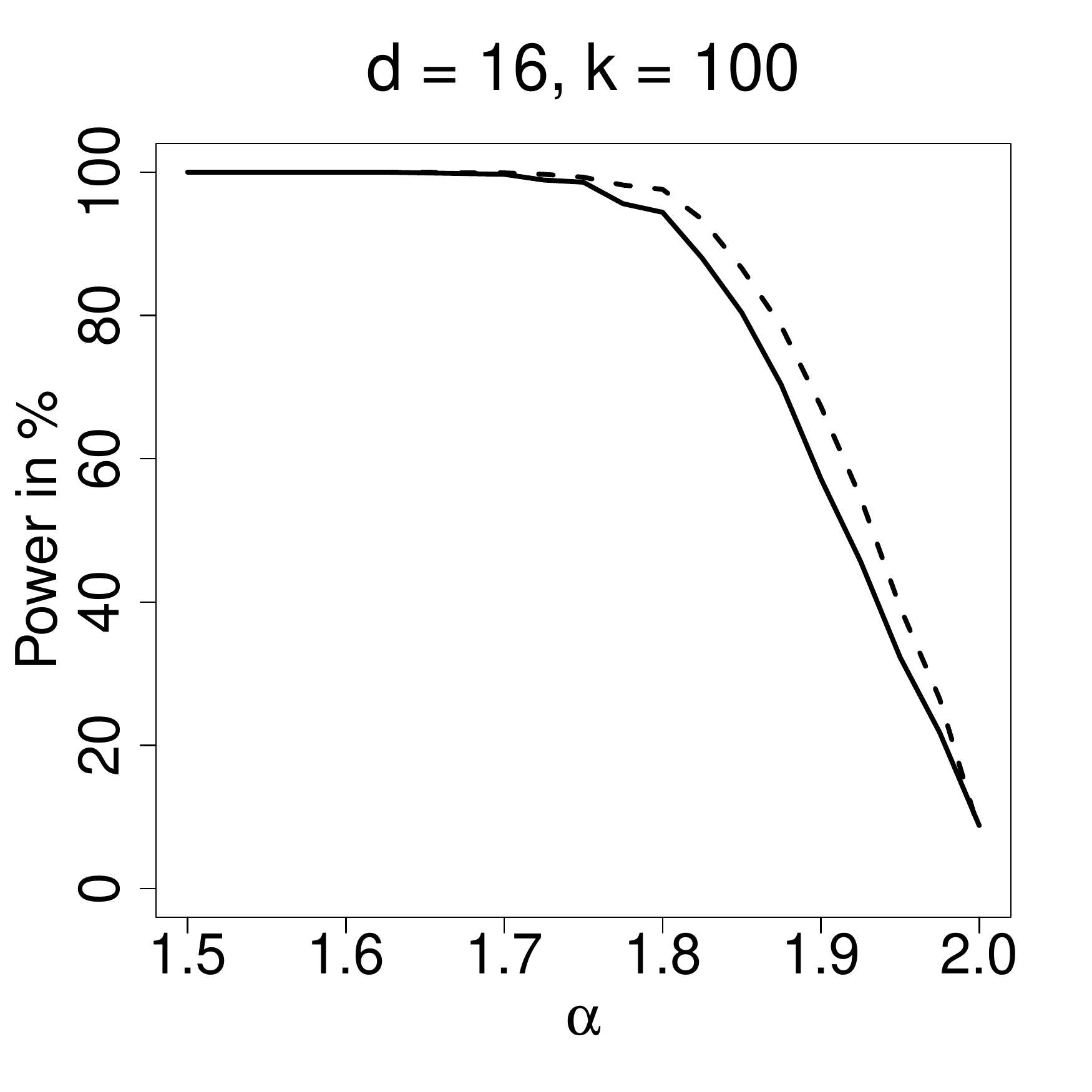}} 
\caption{Empirical power of $T_{n,k}^{(2)}$ based on the empirical tail dependence function (solid lines) and the beta tail dependence function (dashed lines) for $k \in \{25, 50, 75, 100\}$; $d = 6$ (top) and $d = 16$ (bottom).} 
\label{fig:powerBR}
\end{figure}

\subsection{Max-linear model}\label{sec:gml}
The constants $\vc{c}_1,\ldots,\vc{c}_q$ need to be chosen such that the max-linear model is identifiable. In \citet{einmahl2016nr2} it was observed that taking extremal coefficients for $\vc{c}_m$, i.e., $\vc{c}_m \in \{ 0,1 \}^d$ which at least two non-zero elements, is not enough for identifiability. Instead, one needs to choose $\vc{c}_m$ such that its non-zero elements are unequal. For some theoretical considerations, see also \citet[Appendix B]{einmahl2016nr2}. 
Simulation experiments showed that in practice, taking a large grid of values (meaning that $q \gg p$) on $[0,1]^d$ will lead to good estimators in terms of RMSE. Hence, in all following simulation experiments, we choose $\vc{c}_1,\ldots,\vc{c}_q$ such that
\begin{equation*}
\vc{c}_m \in \{0,0.01,0.1,0.2,\ldots,0.8,0.9,0.99,1\}^2, \qquad m \in \{1,\ldots,q\}.
\end{equation*}

\subsubsection{Testing the structure of a max-linear model}
We consider three different scenarios:
%\begin{equation*}
%B = \begin{pmatrix} b_{11} & 1 - b_{11} \\ 
%b_{21} & 1 - b_{21} \\ 
%b_{31} & 1 - b_{31} \end{pmatrix}.
%\end{equation*}
\begin{description}
\item[Model 1] Let $r = 2$ and $d = 3$, so that $\Theta = [0,1]^3$. Let $\vtheta_0 = (b_{11}, b_{21}, b_{31}) = (1, 0.7, 0.2)$. We test $H_0: b_{11} = 1$. Assumption (A4) holds with $\Lambda = (-\infty,0] \times \RR^2$.
\item[Model 2] Let $r = 2$ and $d = 3$, so that $\Theta = [0,1]^3$. Let $\vtheta_0 = (b_{11}, b_{21}, b_{31}) = (1, 0.7, 0)$. We test $H_0: (b_{11}, b_{31}) = (1,0)$. Assumption (A4) holds with $\Lambda = (-\infty,0] \times [0,\infty) \times \RR$.
\item[Model 3] Let $r = 3$ and $d = 2$, so that $\Theta = [0,1]^4$ . Let $\vtheta_0 = (b_{11}, b_{21}, b_{12},b_{22}) = (0, 0.8, 0.6,0)$. We test $H_0: (b_{11},b_{22}) = (0,0)$. Assumption (A4) holds with $\Lambda = [0,\infty) \times \RR^2 \times [0,\infty)$. If the null hypothesis cannot be rejected, it means that a Marshall--Olkin model suffices.
\end{description}

%\begin{equation*}
%B = \begin{pmatrix} b_{11} & b_{12} & 1 - b_{11} - b_{12} \\ 
%b_{21} & b_{22} & 1 - b_{21} - b_{22}  \end{pmatrix}.
%\end{equation*}

\begin{figure}[p]
\centering
\subfloat{\includegraphics[width=0.245\textwidth]{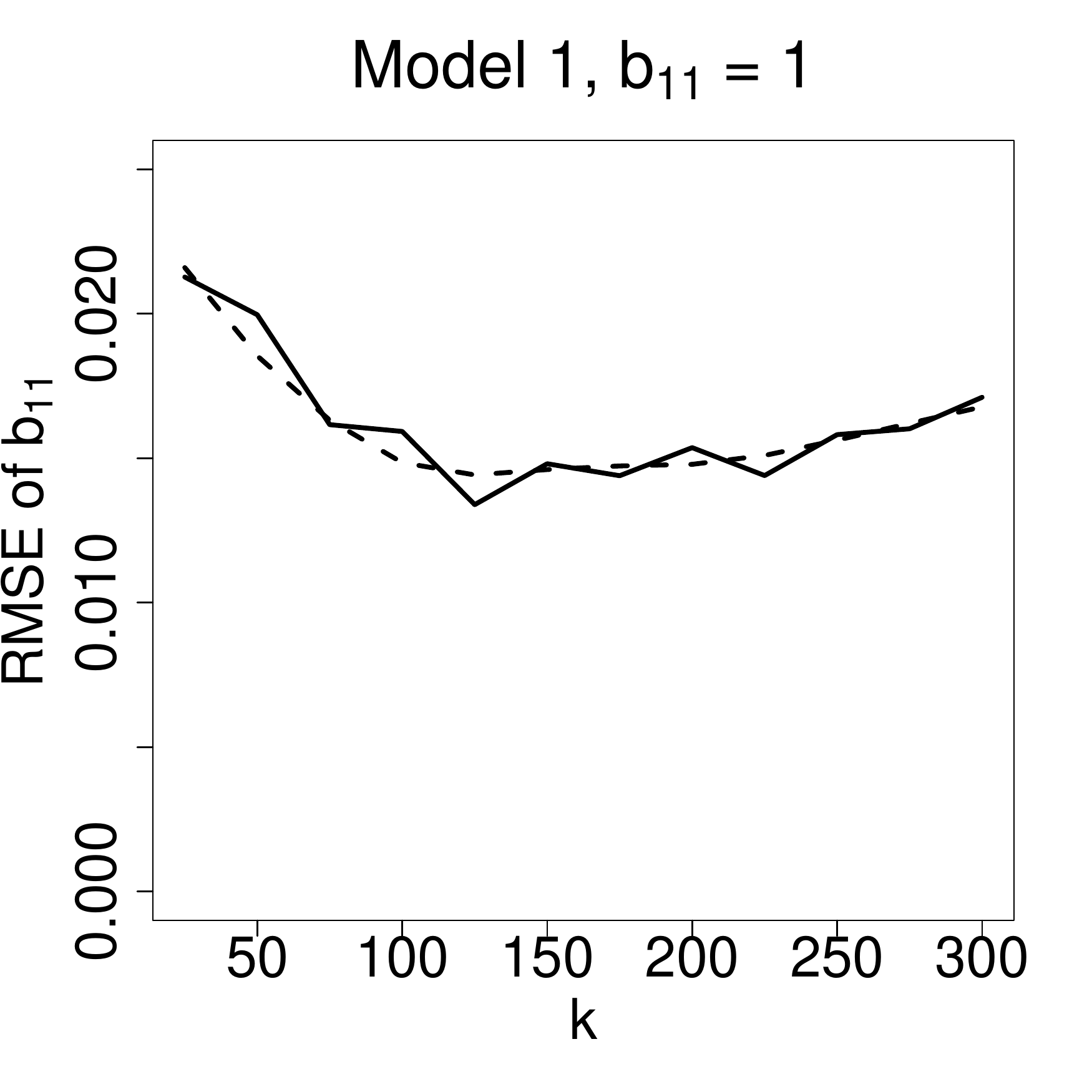}} 
\subfloat{\includegraphics[width=0.245\textwidth]{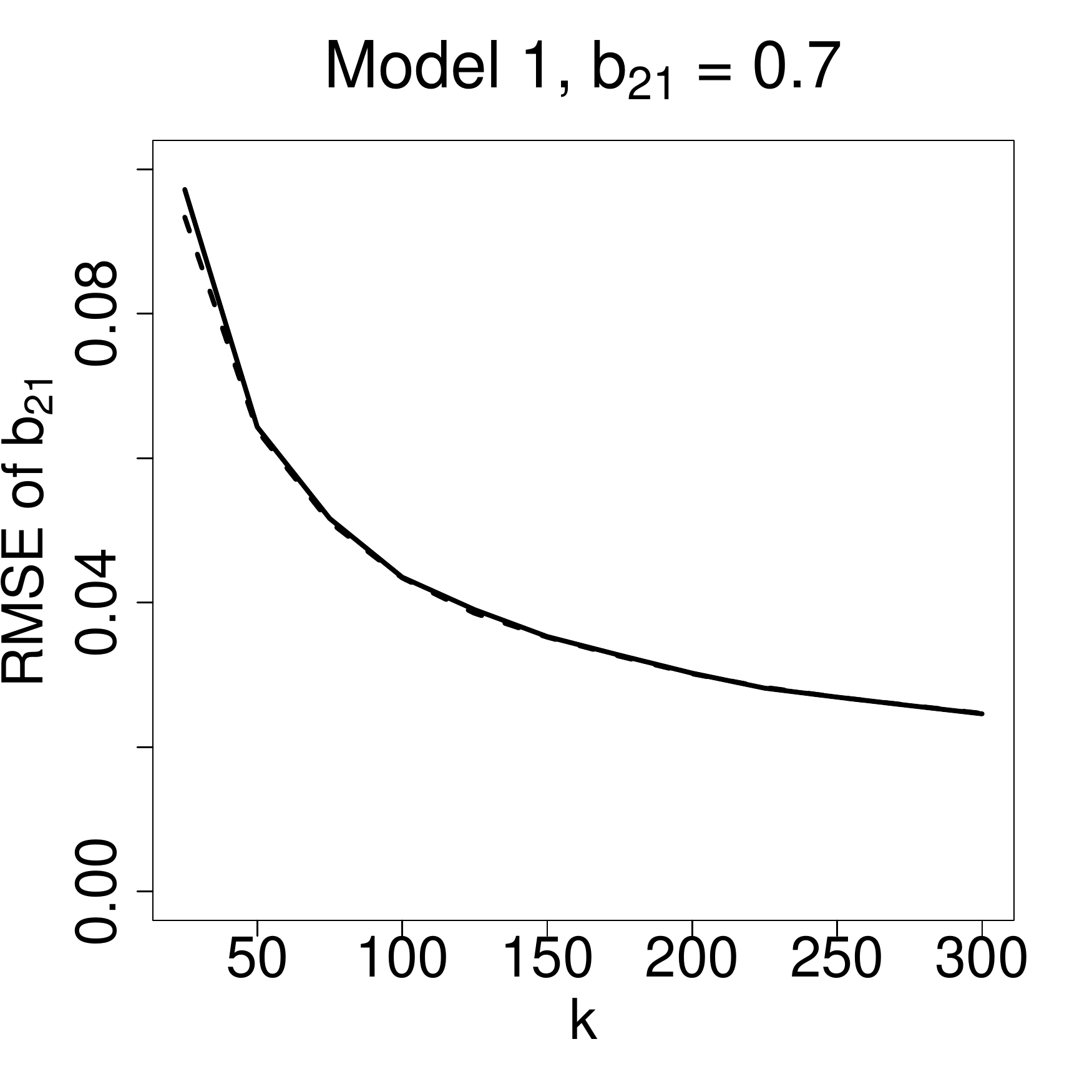}} 
\subfloat{\includegraphics[width=0.245\textwidth]{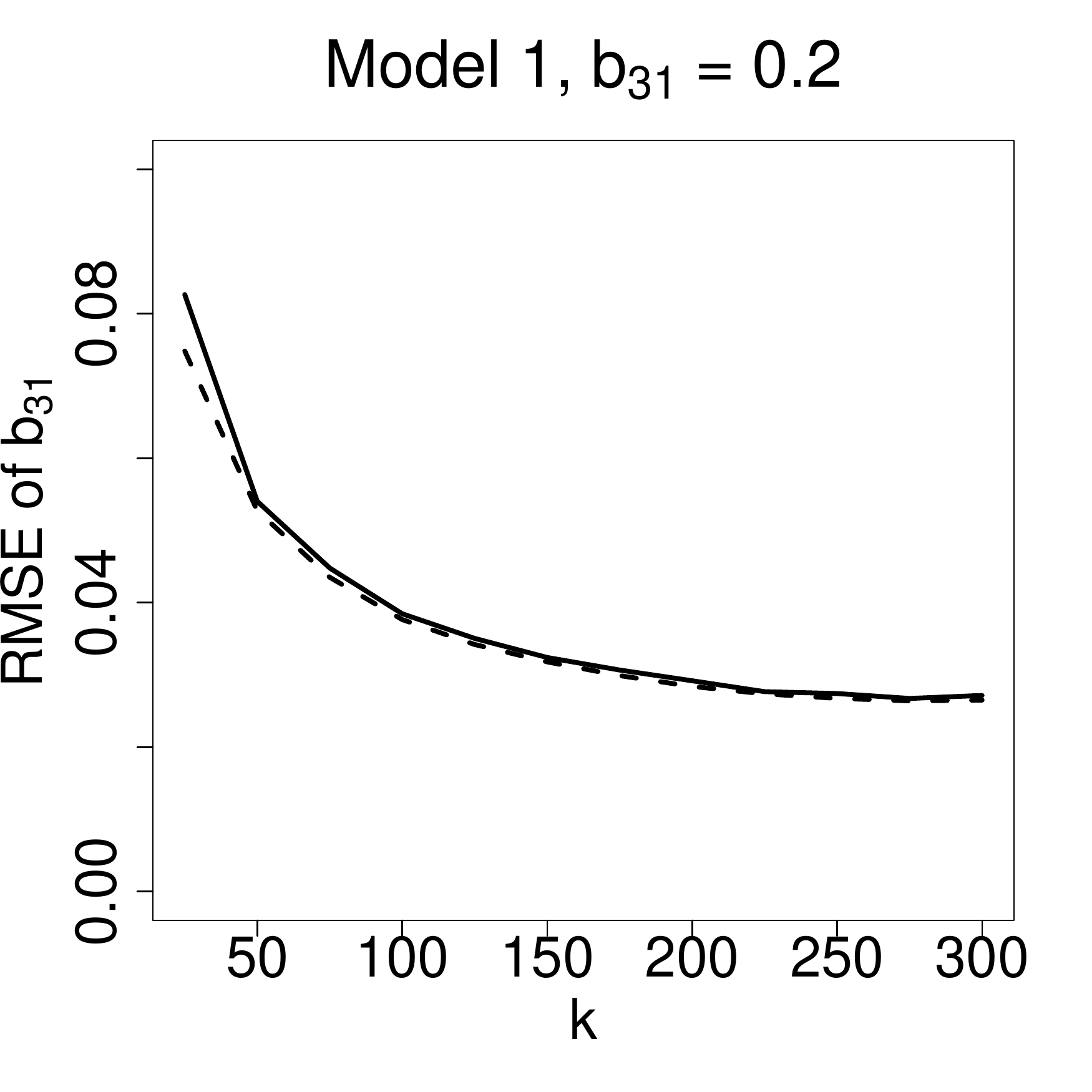}} \\
\subfloat{\includegraphics[width=0.245\textwidth]{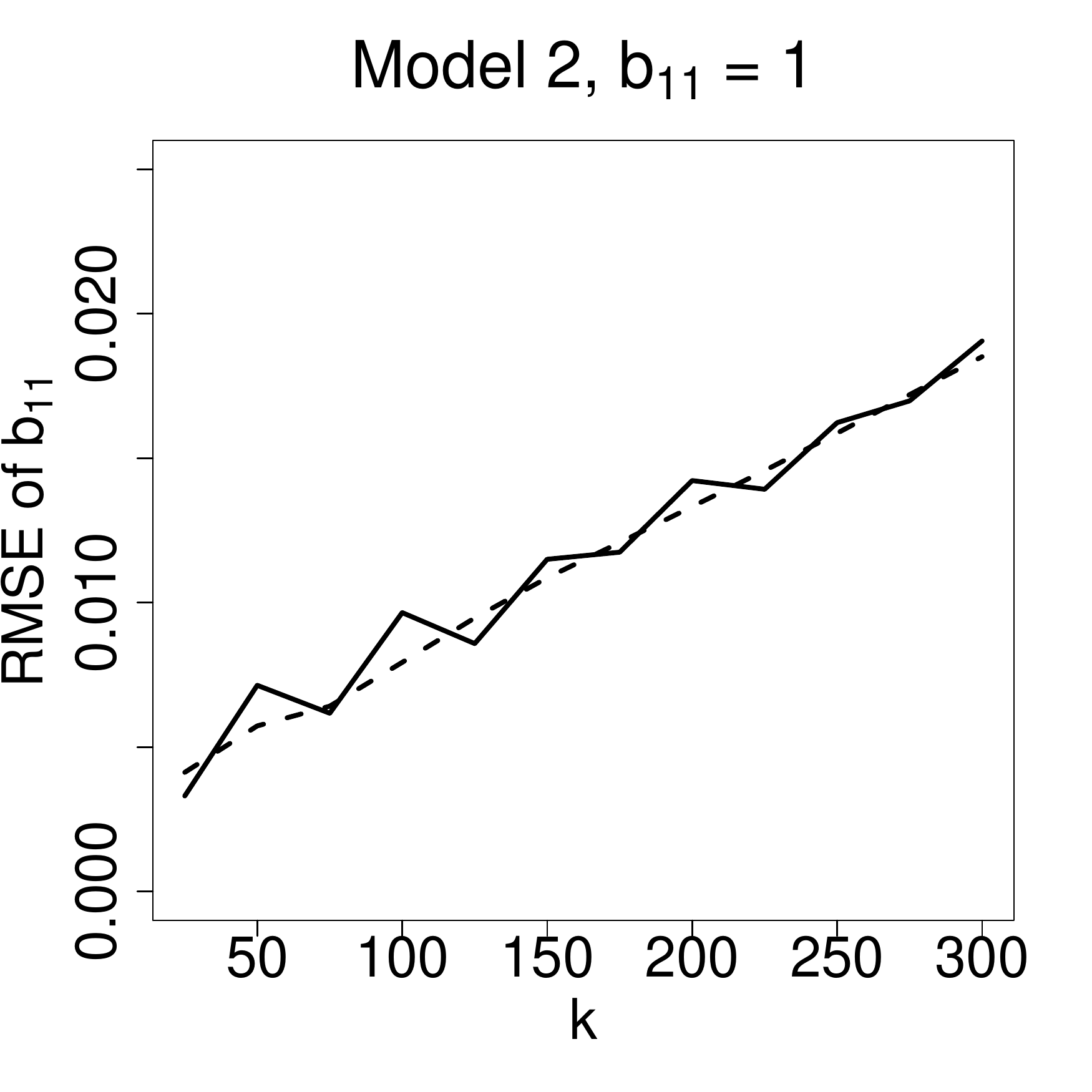}} 
\subfloat{\includegraphics[width=0.245\textwidth]{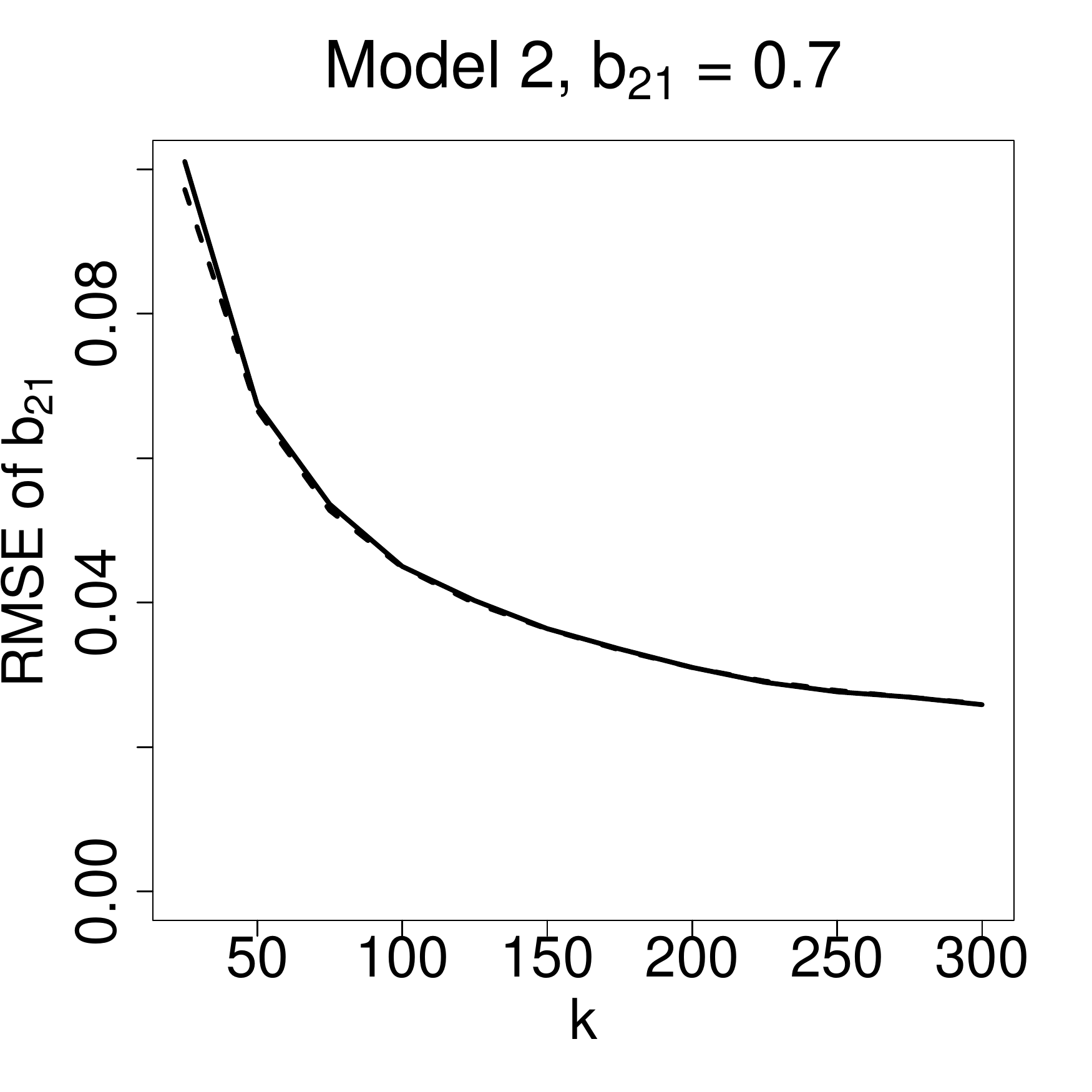}} 
\subfloat{\includegraphics[width=0.245\textwidth]{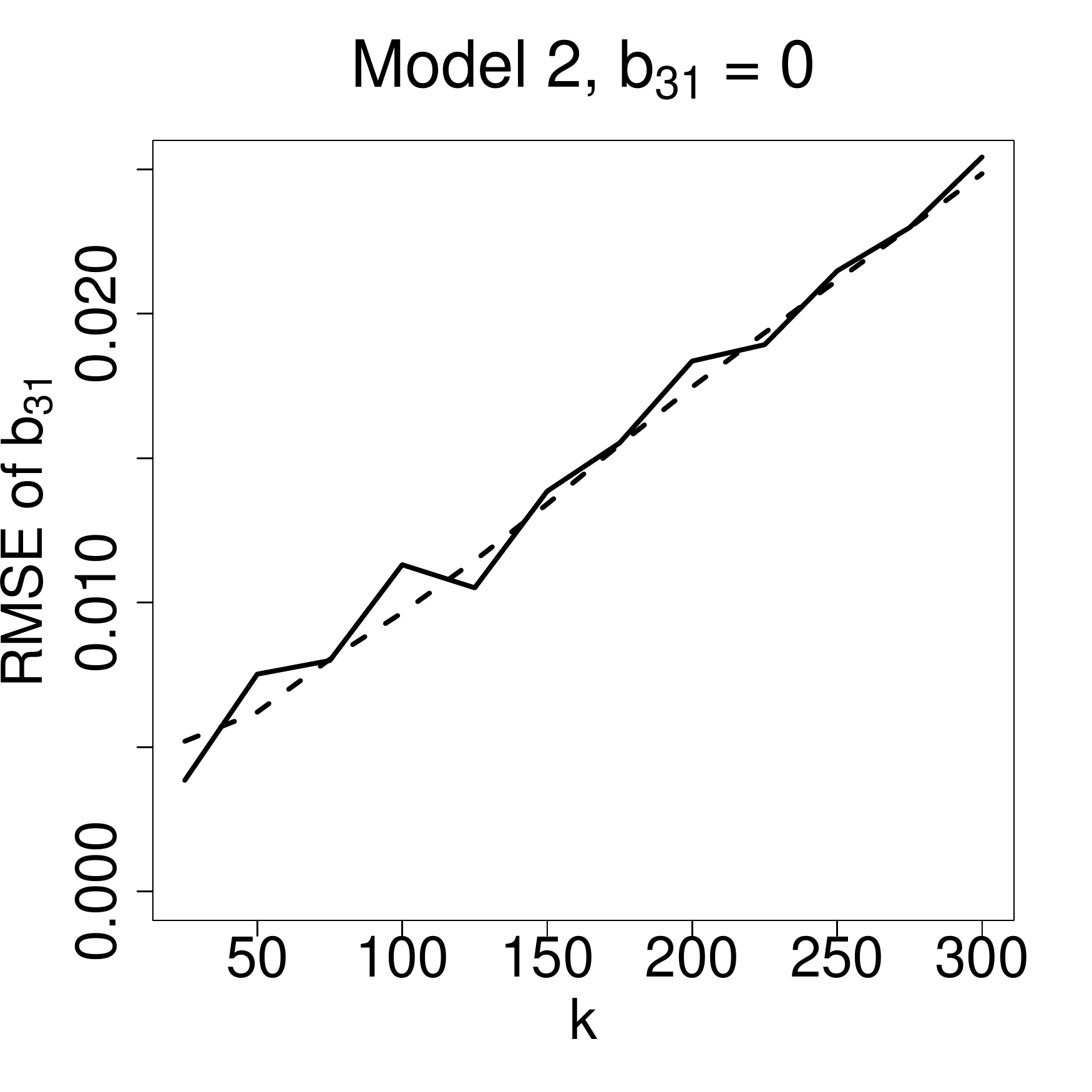}} \\
\subfloat{\includegraphics[width=0.245\textwidth]{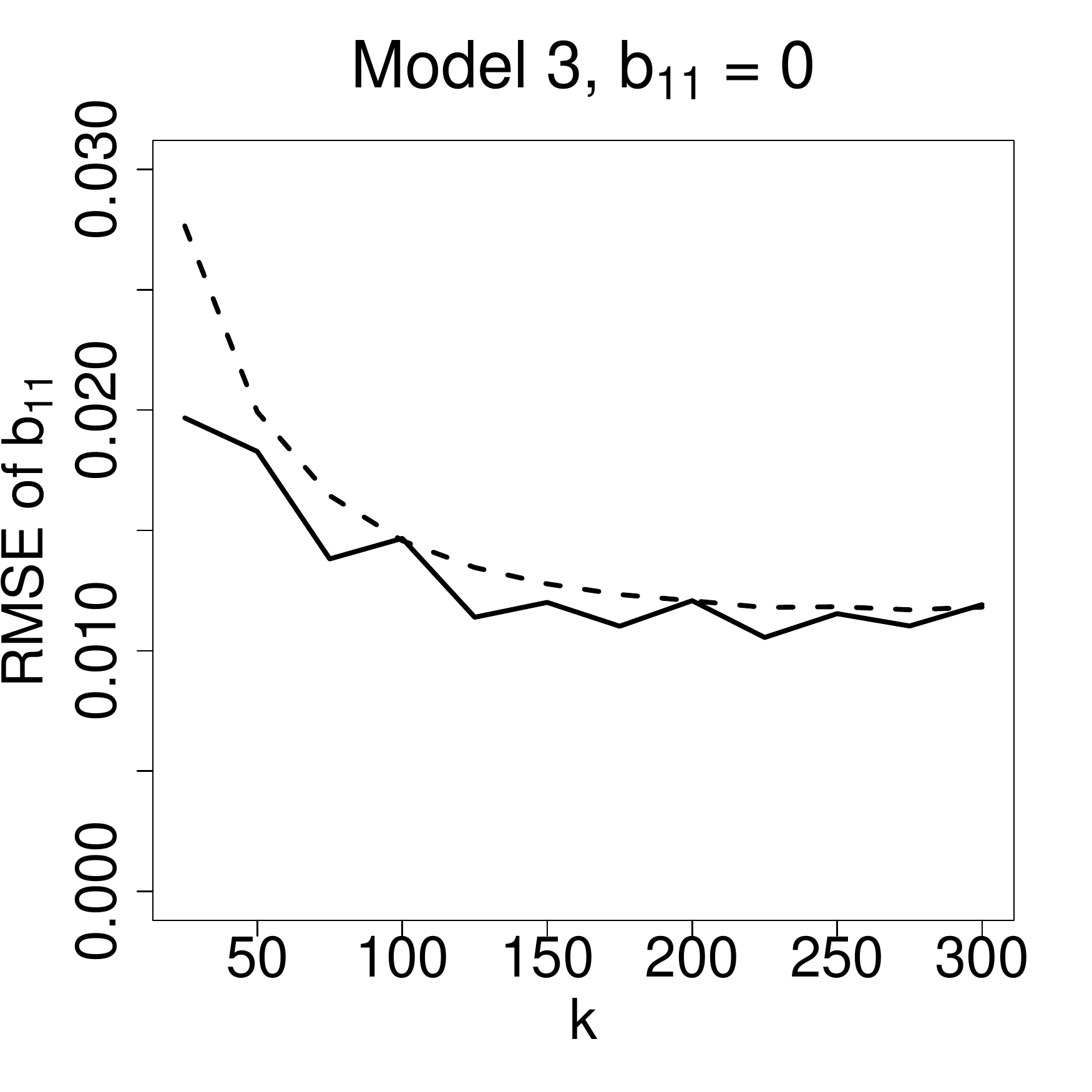}} 
\subfloat{\includegraphics[width=0.245\textwidth]{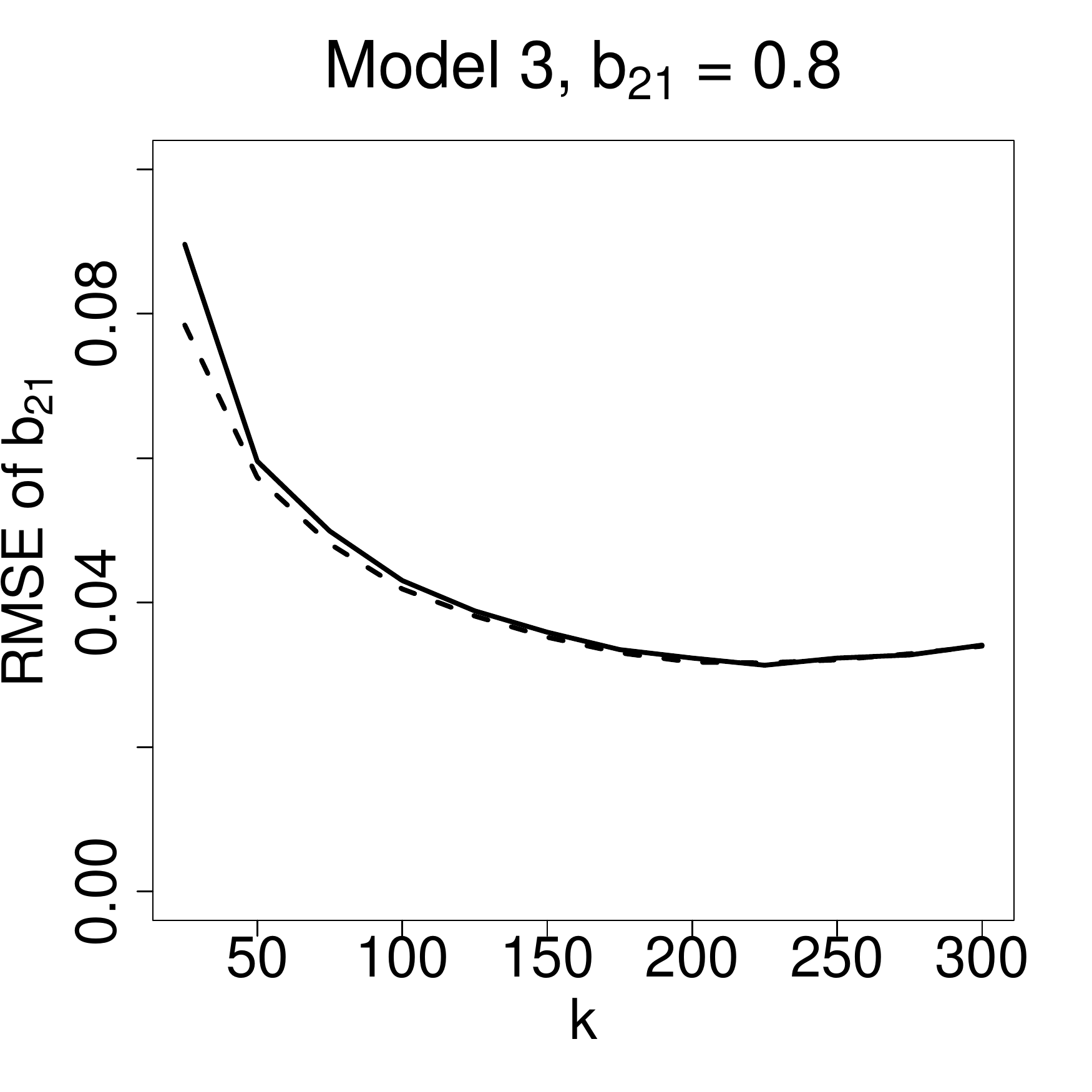}} 
\subfloat{\includegraphics[width=0.245\textwidth]{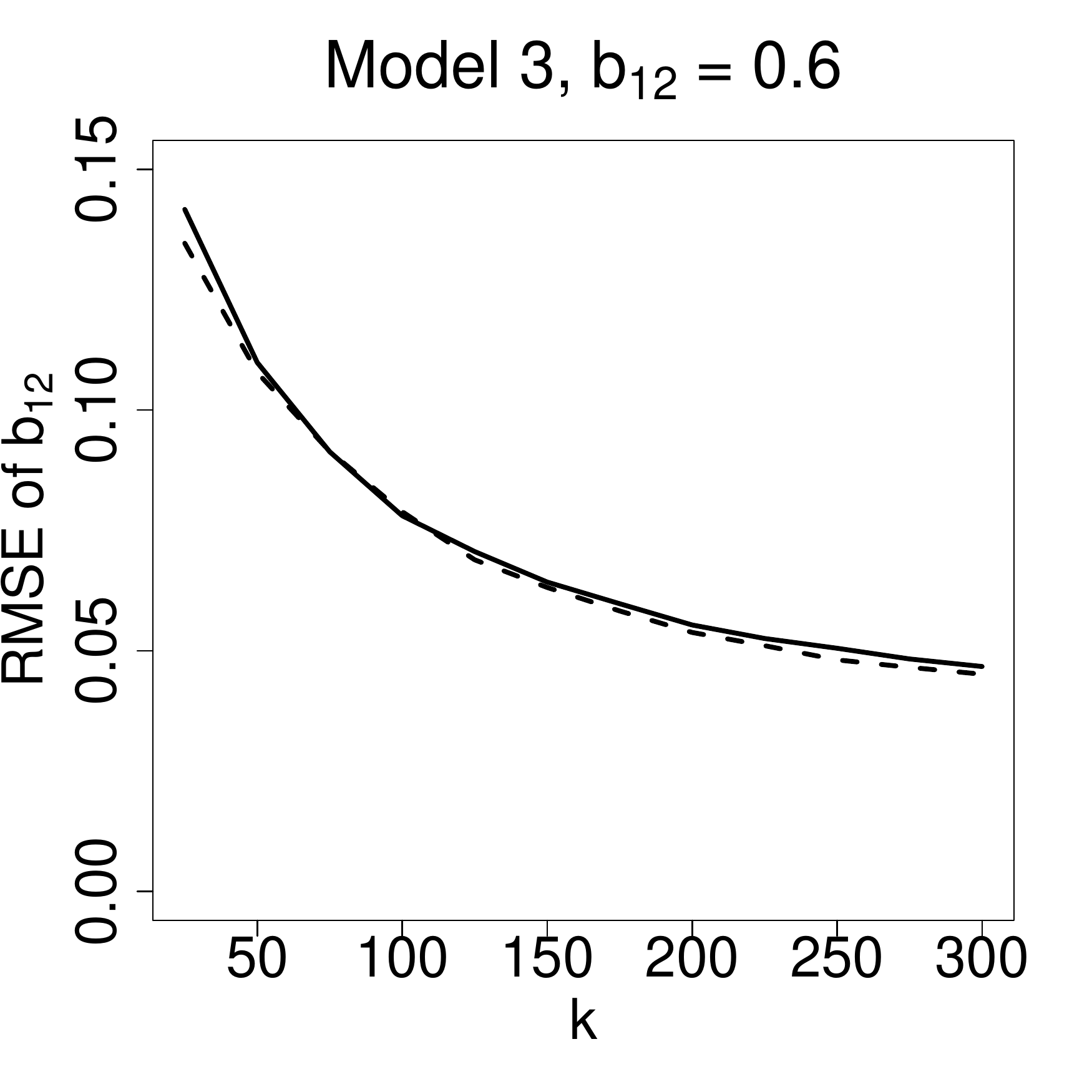}} 
\subfloat{\includegraphics[width=0.245\textwidth]{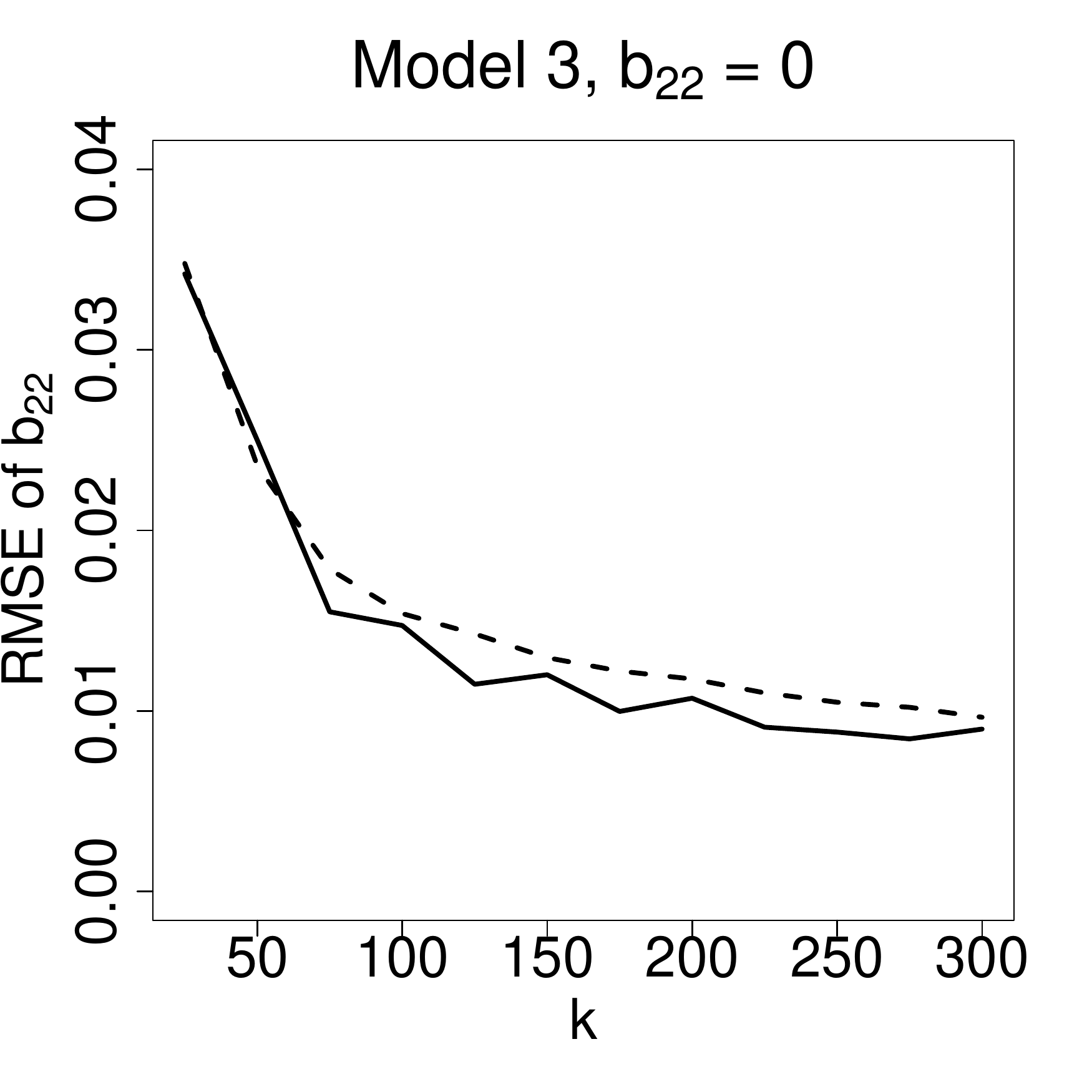}} 
\caption{RMSE for models 1--3 based on the empirical tail dependence function (solid lines), and the beta tail dependence function (dashed lines).} 
\label{fig:ellTot}
\end{figure}

\begin{table}[p]
\centering
 \begin{tabular}{lrrrrrrrr}
 $T_{n,k}^{(1)}$ & \multicolumn{2}{c}{$k = 25$} & \multicolumn{2}{c}{$k = 50$}  & \multicolumn{2}{c}{$k = 75$}  & \multicolumn{2}{c}{$k = 100$}  \\
\cmidrule(r){2-3}
\cmidrule(r){4-5}
\cmidrule(r){6-7}
\cmidrule(r){8-9}
& emp & beta & emp & beta & emp & beta & emp  & beta  \\
\cmidrule(r){2-9}
M1 & $2.9$ & $2.1$ & $6.1$ & $4.2$ & $5.9$ & $5.3$ & $8.0$ & $6.4$ \\
M2 & $0.3$ & $2.3$ & $ 8.7$ & $5.0$ & $12.4$  & $12.6$ & $40.3$  & $27.7$ \\
M3 & $3.3$ & $7.9$ & $7.9$ & $9.7$ & $6.5$  & $9.8$ & $9.8$  & $10.0$ \\
\midrule
$T_{n,k}^{(2)}$ & \multicolumn{2}{c}{$k = 25$} & \multicolumn{2}{c}{$k = 50$}  & \multicolumn{2}{c}{$k = 75$}  & \multicolumn{2}{c}{$k = 100$}  \\
\cmidrule(r){2-3}
\cmidrule(r){4-5}
\cmidrule(r){6-7}
\cmidrule(r){8-9}
& emp & beta & emp & beta & emp & beta & emp  & beta  \\
\cmidrule(r){2-9}
M1 & $3.0$ & $2.9$ & $6.2$ & $5.0$ & $6.1$ & $6.1$ & $8.5$ & $7.1$ \\
M2 & $0.3$ & $2.4$ & $9.2$ & $5.3$ & $12.5$  & $13.0$ & $41.3$  & $28.3$ \\
M3 & $5.4$ & $9.9$ & $10.3$ & $10.2$ & $8.4$  & $10.7$ & $10.7$  & $11.1$ \\
\bottomrule
\end{tabular}
\caption{Empirical level of $T_{n,k}^{(1)}$ and $T_{n,k}^{(2)}$ for models 1--3 (M1--M3) based on the empirical and the beta tail dependence function for a significance level of $0.05$.}
\label{tab:level1}
\end{table}

Figure~\ref{fig:ellTot} shows the RMSE of the parameter estimates of the three max-linear models explained above, based on the empirical tail dependence function (solid lines) and the beta tail dependence function (dashed lines). We see again that these two initial estimators lead to similar results. Parameters whose true values are on the boundary are better estimated for low $k$, while higher values of $k$ are preferred for parameters whose true values are in the interior of the parameter space.

Table \ref{tab:level1} shows the empirical level of the test statistics $T_{n,k}^{(1)}$ and $T_{n,k}^{(2)}$ for $k \in \{25,50,75,100\}$, using a significance level of $0.05$. The tests perform well for model 1 in general, while low values of $k$ are necessary for models 2 and 3. The beta tail dependence function outperforms the empirical tail dependence function for model 2, while the opposite is the case for model 3. For model 3, test statistic $T_{n,k}^{(2)}$ performs better than $T_{n,k}^{(1)}$.

To assess the power of the two test statistics, we consider the models described above, but we let
\begin{align*}
& b_{11}   \in \{0.8,0.85,\ldots,1\} \quad \text{for model 1}, \\
& b_{11}   \in \{0.8,0.85,\ldots 1\}, \, b_{31} \in \{0,0.05,\ldots 0.2 \} \quad  \text{for model 2}, \text{ and}\\
& b_{11}, b_{22}  \in \{0,0.05,\ldots 0.2 \} \quad  \text{for model 3}.
\end{align*}
Figures~\ref{fig:power2} and~\ref{fig:power3} show the empirical power of $T_{n,k}^{(1)}$ for models 1--2 and of $T_{n,k}^{(2)}$ for model 3. The most striking result of Figure 
\ref{fig:power3} is that the graphs are not symmetric, i.e., the power is much higher when $b_{22}$ is near zero (and $b_{11}$ is not) than when $b_{11}$ is near zero (and $b_{22}$ is not).

\begin{figure}[ht]
\centering
\subfloat{\includegraphics[width=0.4\textwidth]{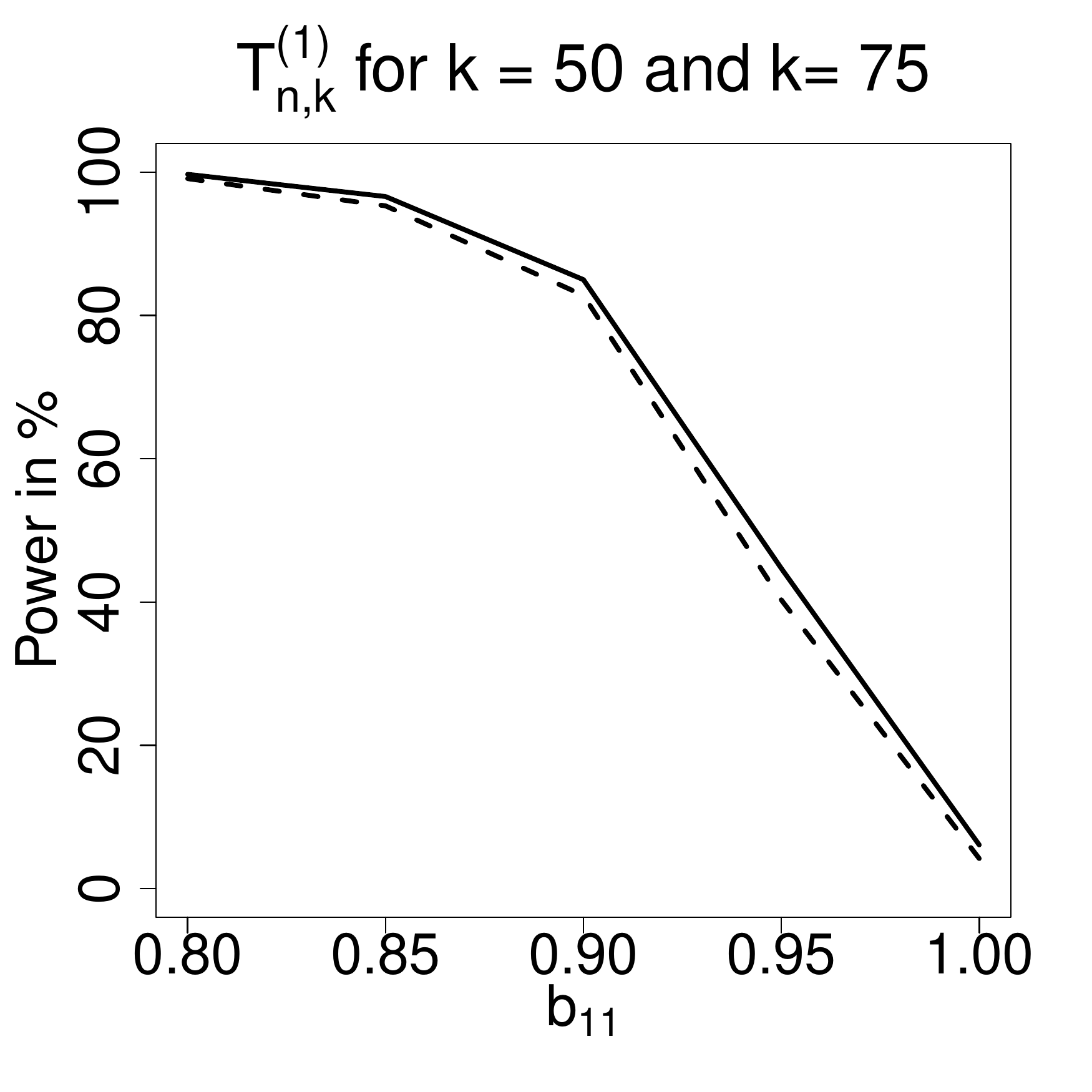}}   \\
\subfloat{\includegraphics[width=0.4\textwidth]{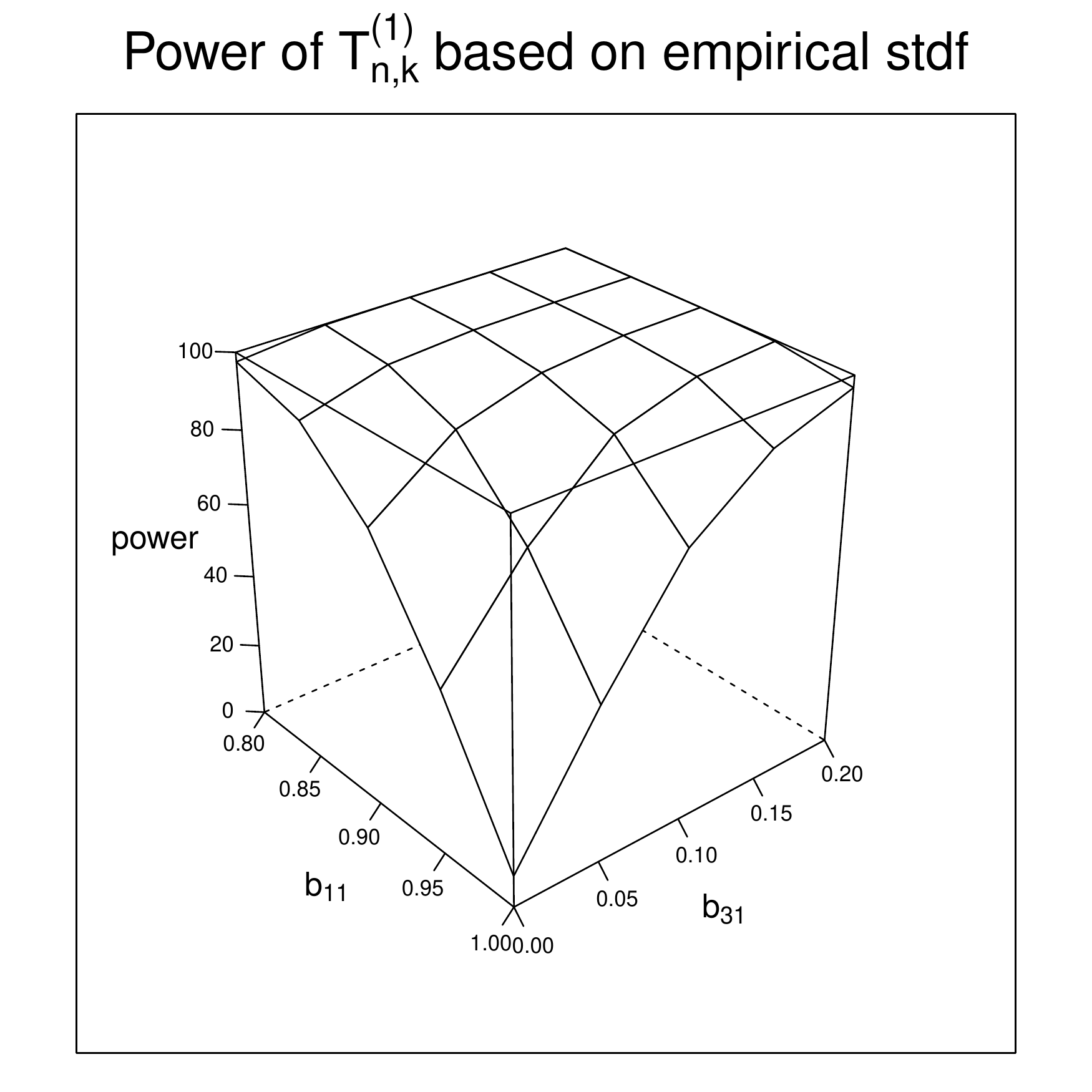}} 
\subfloat{\includegraphics[width=0.4\textwidth]{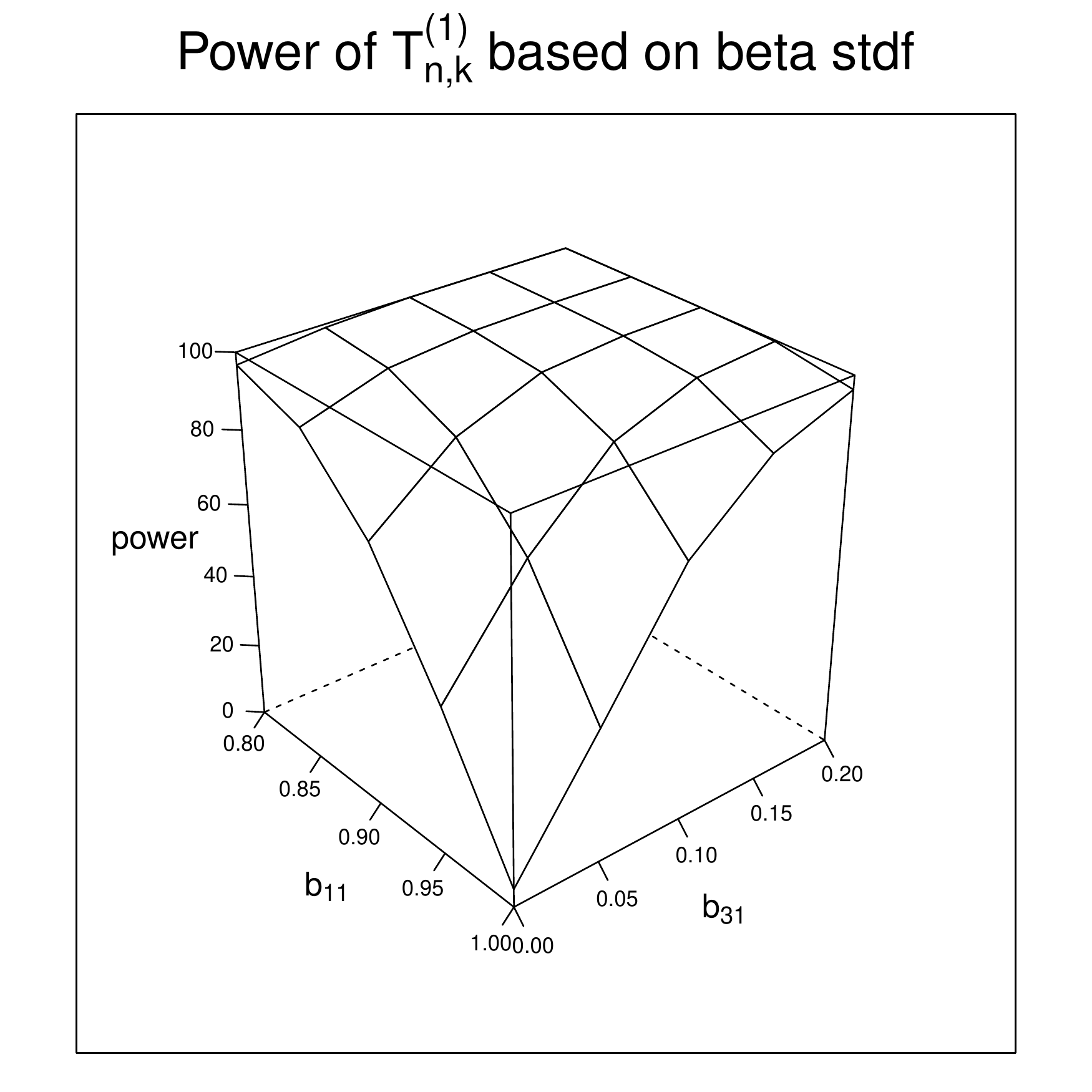}}  
\caption{Top: empirical power in $\%$ of $T_{n,k}^{(1)}$ for model 1 and $k = 50$ (dashed line), $k = 75$ (solid line), based on the empirical tail dependence function. Bottom: empirical power in $\%$ of $T_{n,k}^{(1)}$ for model 2 and $k = 50$ based on the empirical (left) and the beta (right) tail dependence function.} 
\label{fig:power2}
\end{figure}

\clearpage

\begin{figure}[ht]
\centering
\subfloat{\includegraphics[width=0.4\textwidth]{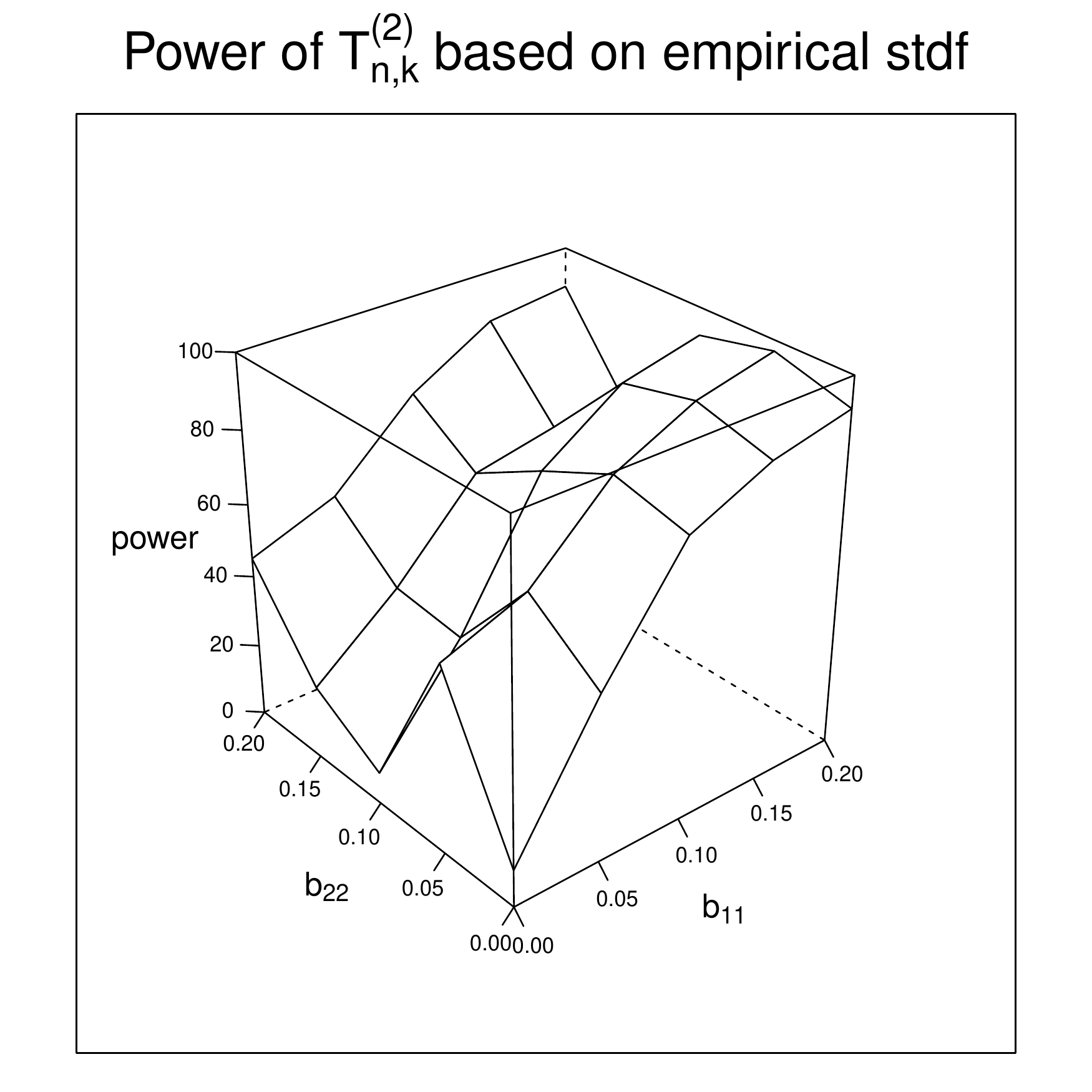}} 
\subfloat{\includegraphics[width=0.4\textwidth]{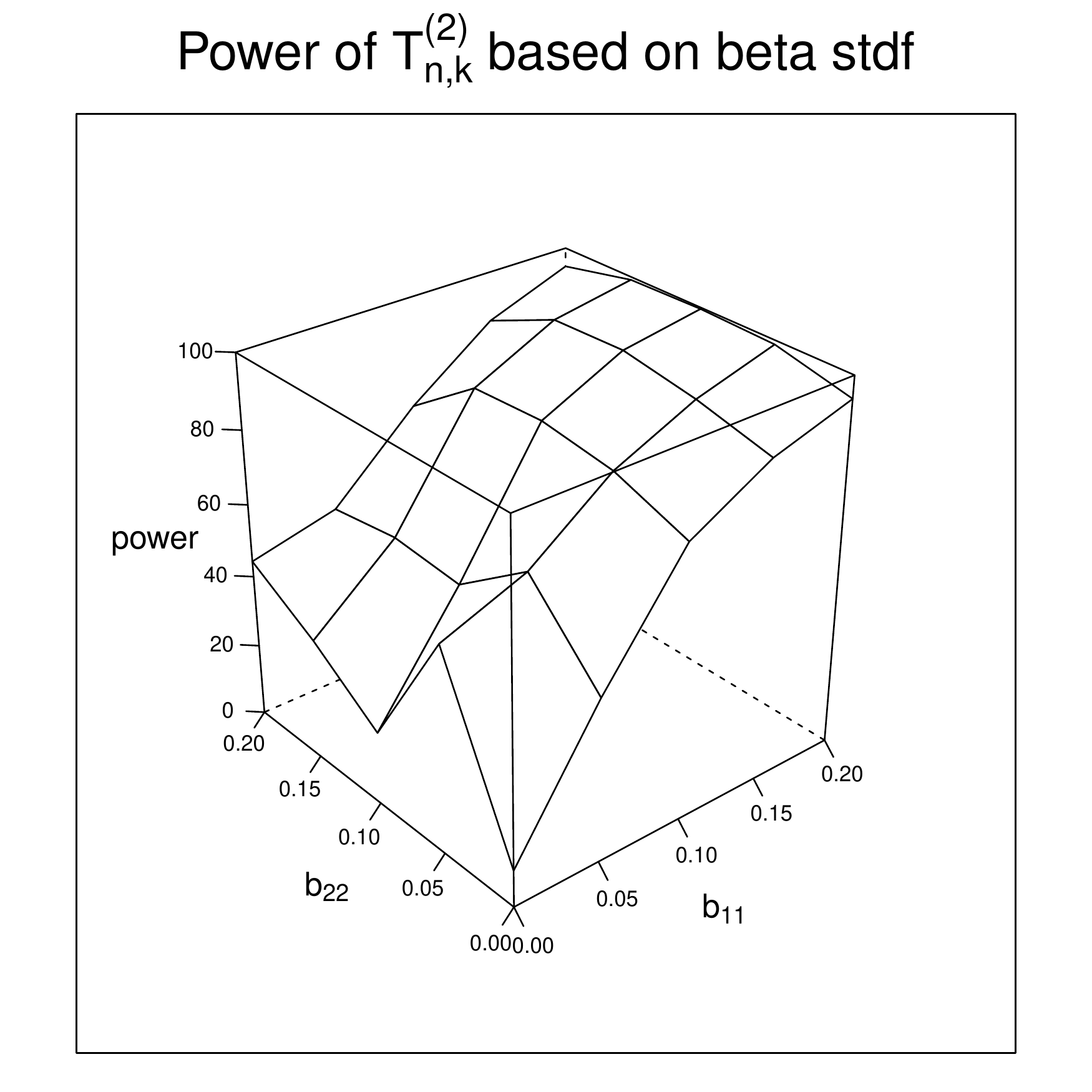}} 
\caption{Empirical power in $\%$ of $T_{n,k}^{(2)}$ for model 3 and $k = 50$, based on the empirical (left) and the beta (right) tail dependence function.} 
\label{fig:power3}
\end{figure}

\subsubsection{Testing the number of factors in a max-linear model}\label{sec:multi}
We studied the performance of the test statistics in case we wanted to identify a specific submodel of the max-linear model. 
We would also like to investigate the capability of the test statistics to correctly retrieve the true number of factors, i.e., for $s \in \{1,\ldots,r\}$, we wish to test the hypothesis $H_0 : (b_{1s},\ldots,b_{ds}) = \vc{0}$. However, as mentioned in Section \ref{sec:EVT}, the max-linear model is only defined for parameter values with $\sum_{j=1}^d b_{js} > 0$ for all $s \in \{1,\ldots,d\}$; when this condition is not met, there are zeroes on the diagonal of $\Sigma (\vtheta_0)$, making computation of $\widehat{\vc{\lambda}}_{\beta}$ impossible.
We solve this problem by computing the asymptotic distribution of the test statistics using $\widehat{\vtheta}_{n,k}$ rather than $\vtheta_0$. We consider a model with $d = 2$ and $r = 3$, with
parameter vector $\vtheta_0 = (b_{11},b_{21},b_{12},b_{22}) = (0.8,0.6,0.2,0.4)$, i.e., the third column of $B$ contains only zeroes and the model has effectively two factors. We estimate a three-factor model and we test $H_0 : (b_{13},b_{23}) = (0,0)$. 

Figure~\ref{fig:ellTot2} shows the RMSE of the parameter estimates based on the empirical tail dependence function (solid lines) and the beta tail dependence function (dashed lines). We see that, contrary to previous experiments, the beta tail dependence has much higher RMSE than the empirical tail dependence function: oversmoothing leads to a large bias for $(b_{13},b_{23})$ (and hence for the other parameters as well).

\begin{figure}[p]
\centering
\subfloat{\includegraphics[width=0.245\textwidth]{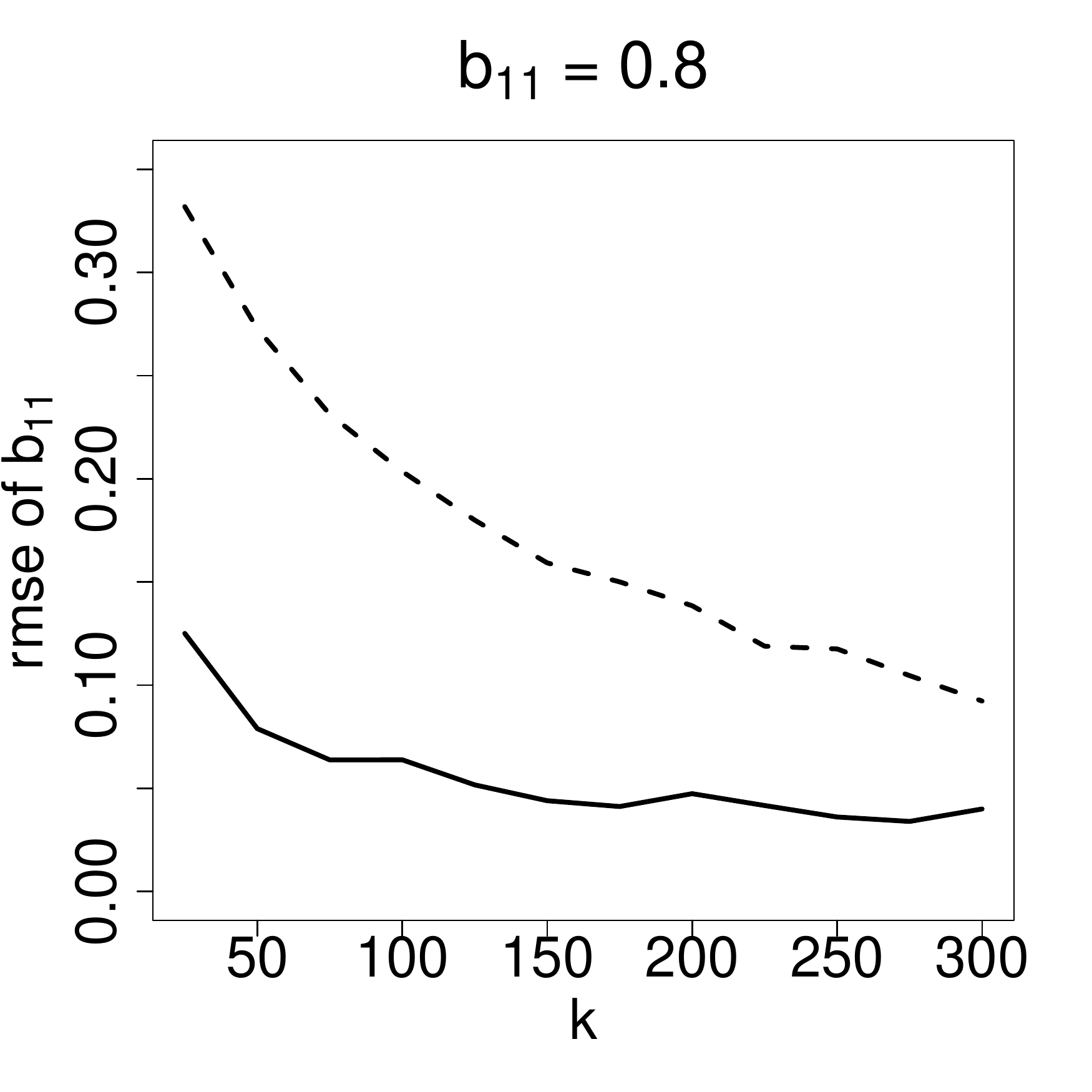}} 
\subfloat{\includegraphics[width=0.245\textwidth]{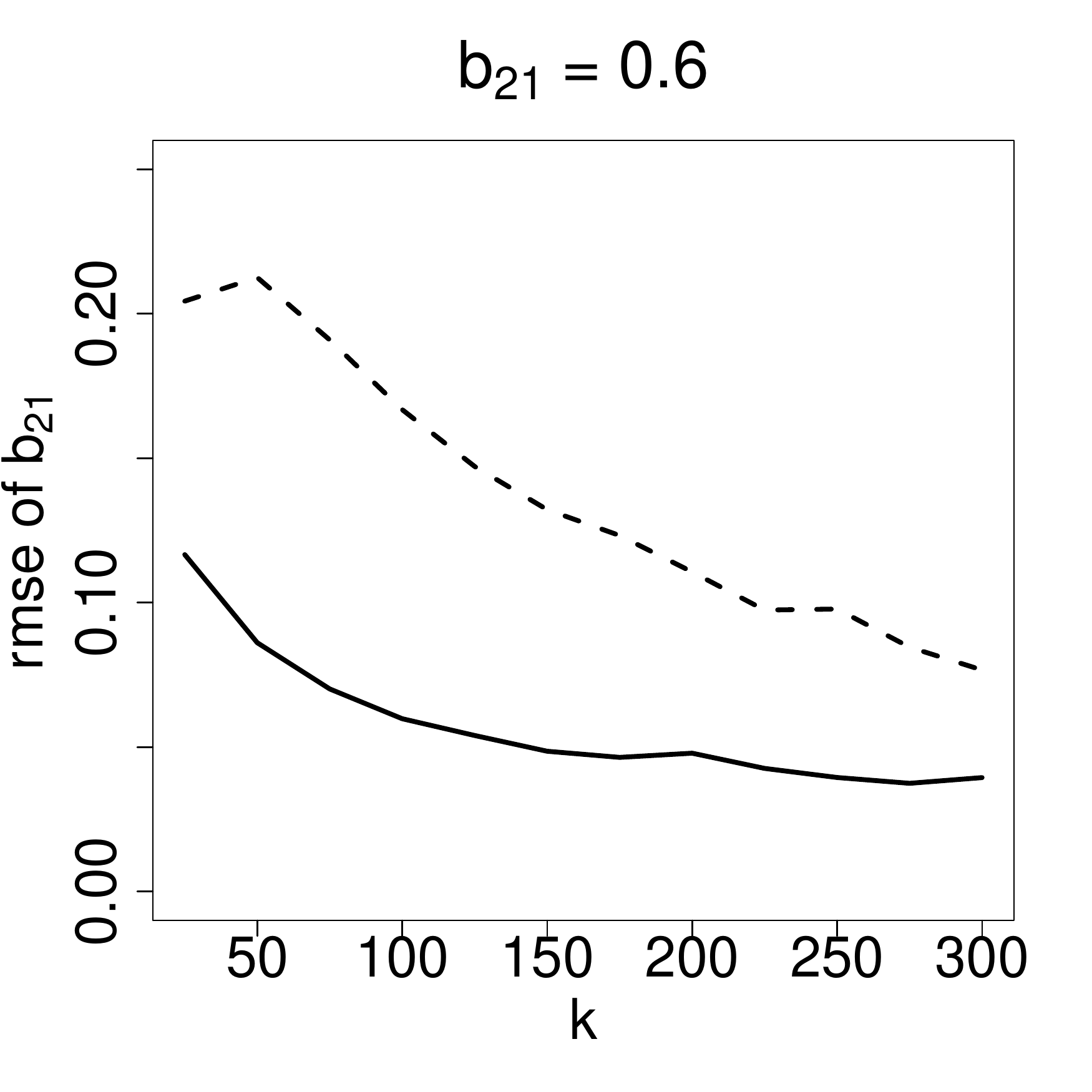}} 
\subfloat{\includegraphics[width=0.245\textwidth]{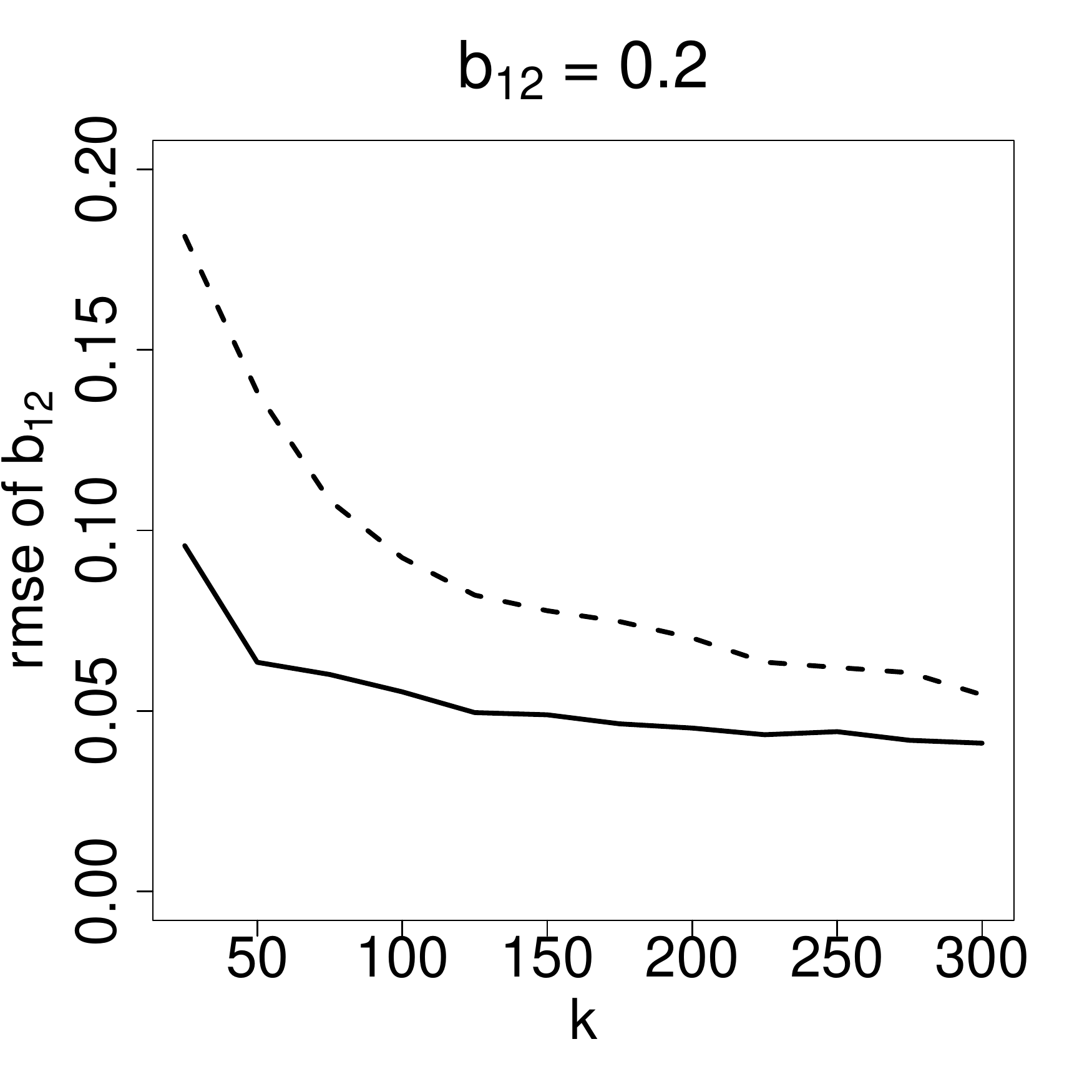}} 
\subfloat{\includegraphics[width=0.245\textwidth]{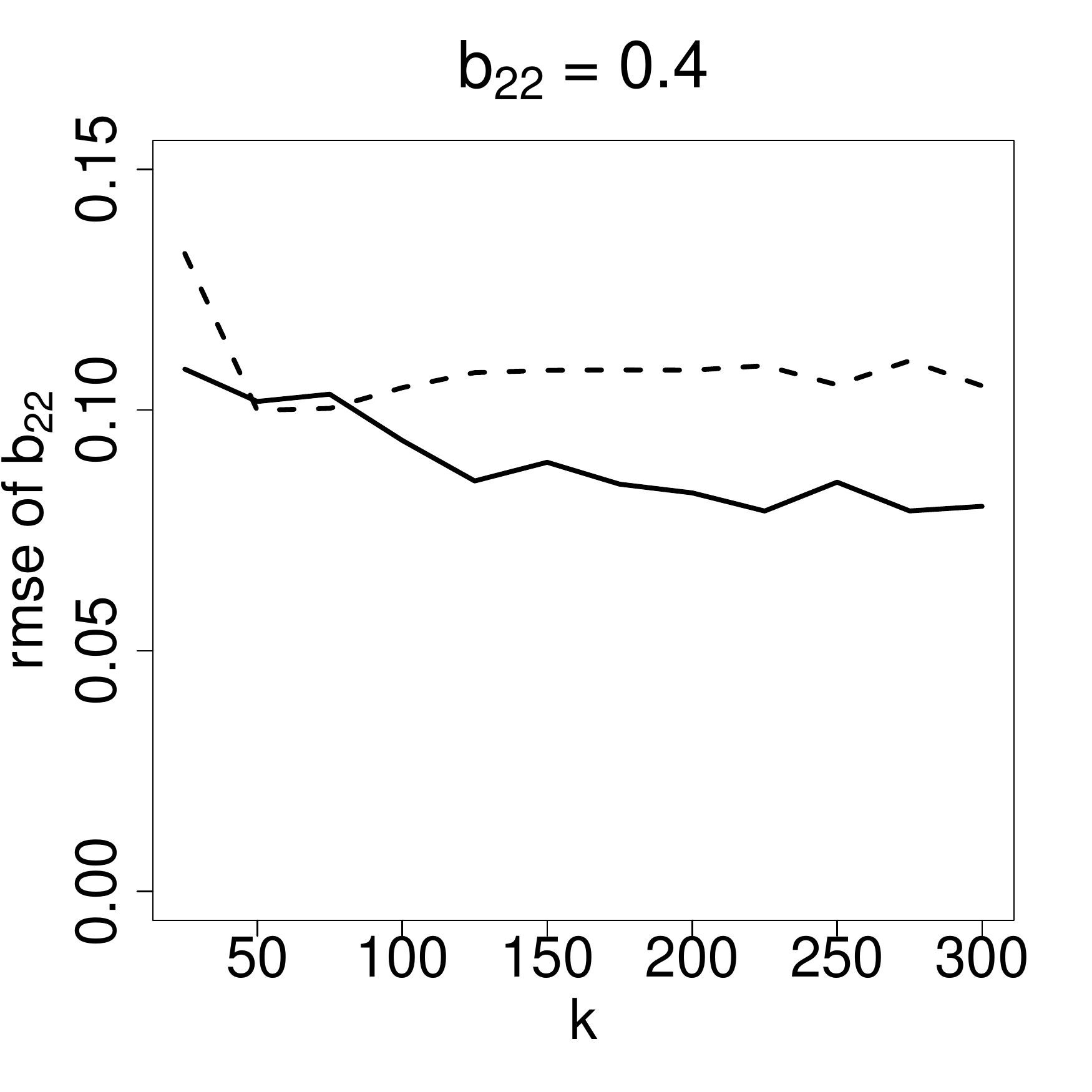}} 
\caption{RMSE based on the empirical tail dependence function (solid lines) and the beta tail dependence function (dashed lines).} 
\label{fig:ellTot2}
\end{figure}

\begin{table}[p]
\centering
 \begin{tabular}{lrrrrrrrr}
 & \multicolumn{2}{c}{$k = 25$} & \multicolumn{2}{c}{$k = 50$}  & \multicolumn{2}{c}{$k = 75$}  & \multicolumn{2}{c}{$k = 100$}  \\
\cmidrule(r){2-3}
\cmidrule(r){4-5}
\cmidrule(r){6-7}
\cmidrule(r){8-9}
& emp & beta & emp & beta & emp & beta & emp  & beta  \\
\cmidrule(r){2-9}
$T_{n,k}^{(2)}$ & $4.6$ & $23.9$ & $2.7$ & $24.9$ & $2.4$ & $23.8$ & $2.5$ & $20.0$ \\
\bottomrule
\end{tabular}
\caption{Empirical level of $T_{n,k}^{(2)}$ based on the empirical and the beta tail dependence function for a significance level of $0.05$.}
\label{tab:level2}
\end{table}

\begin{figure}[p]
\centering
\subfloat{\includegraphics[width=0.4\textwidth]{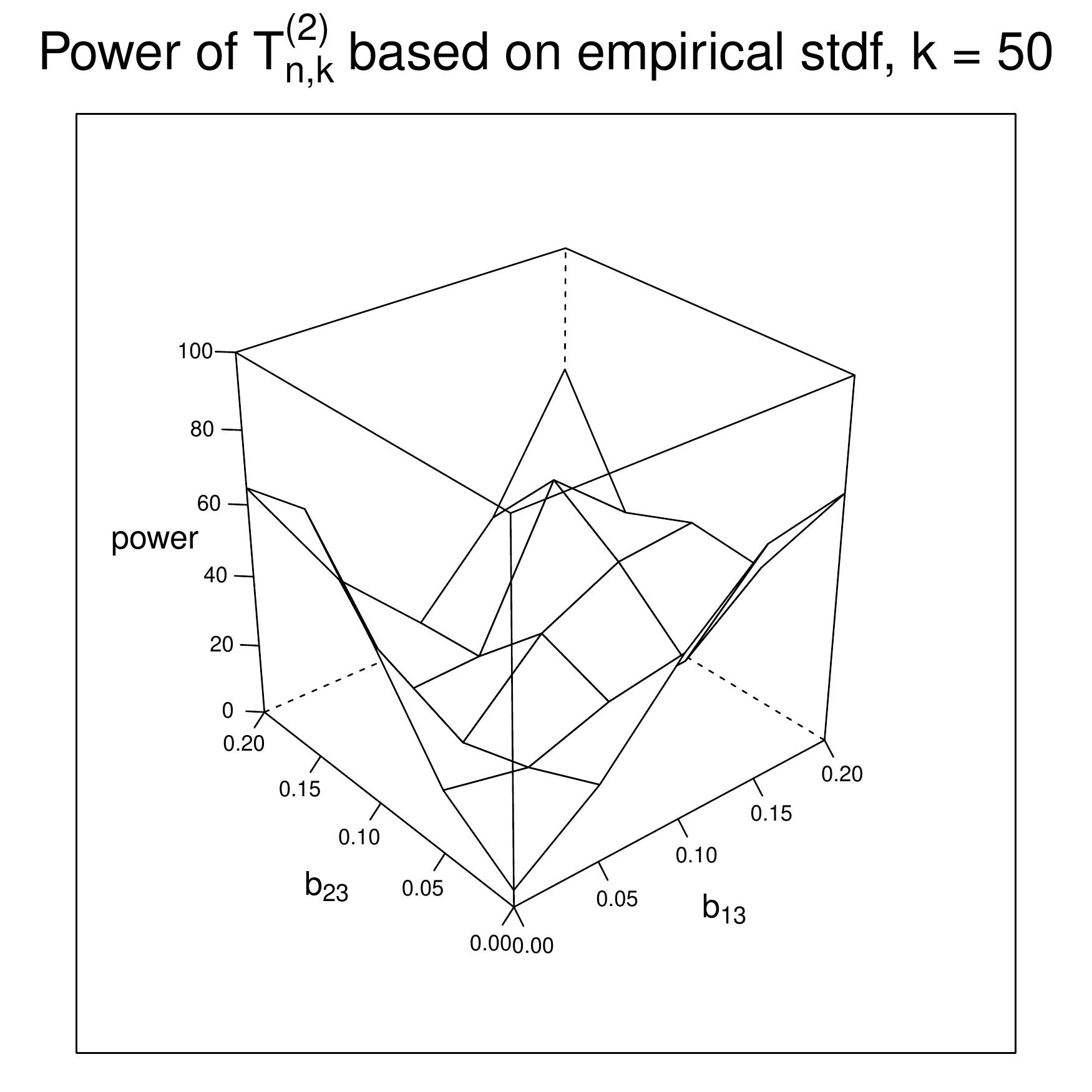}} 
\subfloat{\includegraphics[width=0.4\textwidth]{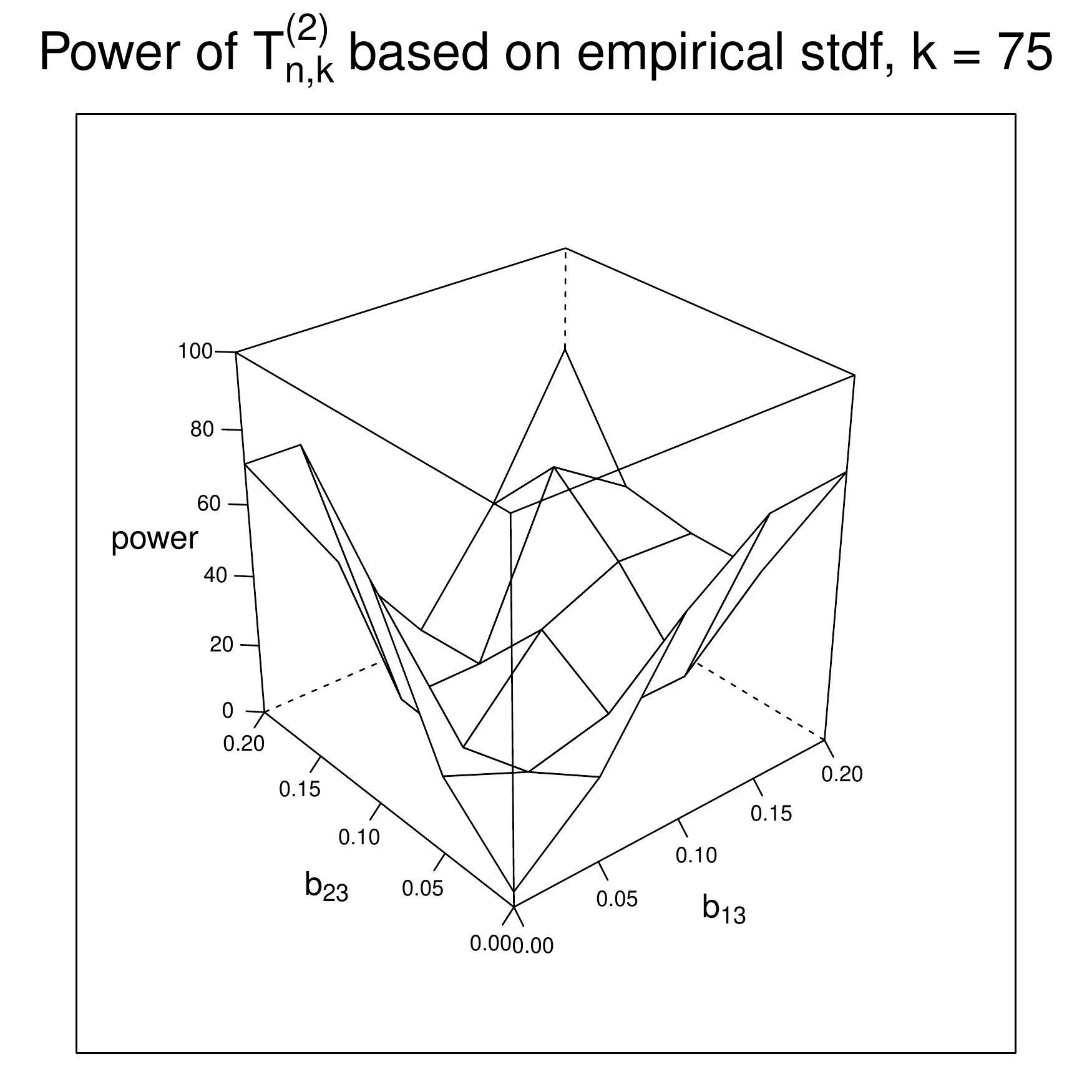}}  
\caption{Empirical power in $\%$ of $T_{n,k}^{(2)}$, based on the empirical tail dependence function for $k = 50$ (left) and $k = 75$ (right).} 
\label{fig:power4}
\end{figure}

Table~\ref{tab:level2} shows the empirical level of the test statistic $T_{n,k}^{(2)}$ using a significance level of $0.05$. We included the results based on the beta tail dependence function for completeness: because it tends to overestimate $(b_{13},b_{23})$, it rejects the null hypothesis far too often. Test statistic $T_{n,k}^{(1)}$ is not presented because for any $k$ and any initial estimator, it has an empirical level of $0$. Even when $| \widehat{\vtheta}_{n,k} - \widehat{\vtheta}_{n,k}^{(0)}|$ is large, $| f_{n, k} (\widehat{\vtheta}_{n,k} )- f_{n,k} (\widehat{\vtheta}_{n,k}^{(0)})|$ is small and hence the deviance-type test is not adapted to this type of model. Finally, Figure~\ref{fig:power4} shows the empirical power of $T_{n,k}^{(2)}$ for $b_{11}, b_{22}  \in \{0,0.05,\ldots 0.2 \}$ for two values of $k$. We remark that the power is similar when one parameter is far from its boundary and when both parameters are far from their boundary.

\section{Application to stock market indices}\label{sec:application}
Consider two major European stock market indices, the German DAX and the French CAC40. We take the daily negative log-returns of the prices of these two indices from \url{https://finance.yahoo.com} for the period of January 1st, 1997 to December 31st, 2017. We remove all dates for which at least one of the two series has missing values, ending up with a sample of size $n  = 5313$. Figure~\ref{fig:stocks} shows the time series plots of log-returns and the dependence structure for the returns standardized to unit Pareto margins, plotted on the exponential scale. The time series plots look stationary: we perform a Augmented Dickey--Fuller test, which gives $p$-values below $0.01$ for both series, and a KPSS test, giving $p$-values above $0.1$ for both series. Hence, the assumption of stationarity is reasonable.
Let $u_j$ denote the $0.95 \%$ quantile of $X_{1j},\ldots,X_{nj}$ for $j \in \{1,\ldots,d \}$. We fit a generalized Pareto distribution (GPD) to $X_{ij} - u_j \mid X_{ij} > u_j$ and obtain the parameter estimates $\widehat{\sigma}_1 = 1.12$ $(0.11)$, $\widehat{\gamma}_1 = 0.03$ $(0.08)$,  $\widehat{\sigma}_2 = 1.10$ $(0.10)$ and $\widehat{\gamma}_2 = 0.02$ $(0.07)$. 

\begin{figure}[ht]
\centering
\subfloat{\includegraphics[width=0.33\textwidth]{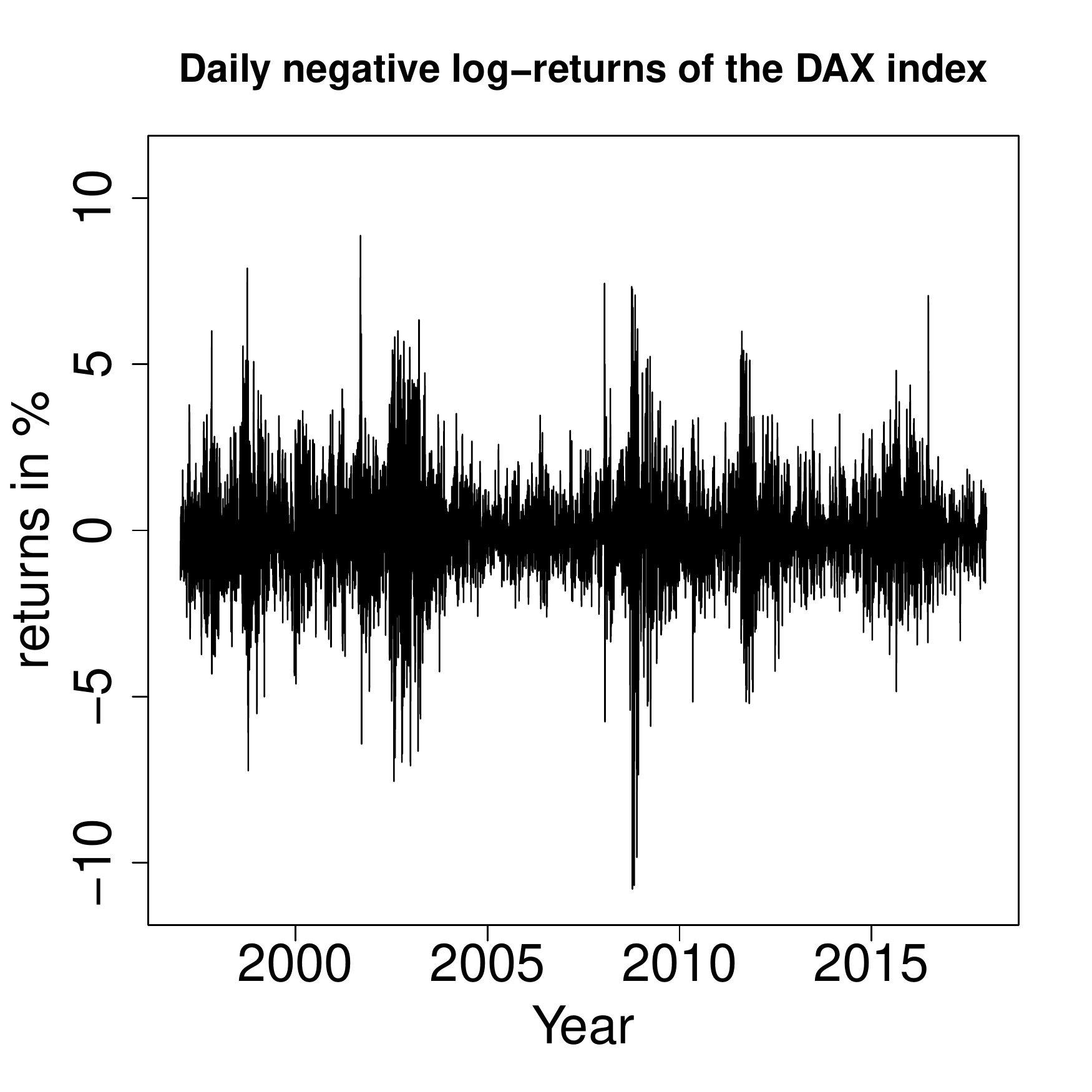}} 
\subfloat{\includegraphics[width=0.33\textwidth]{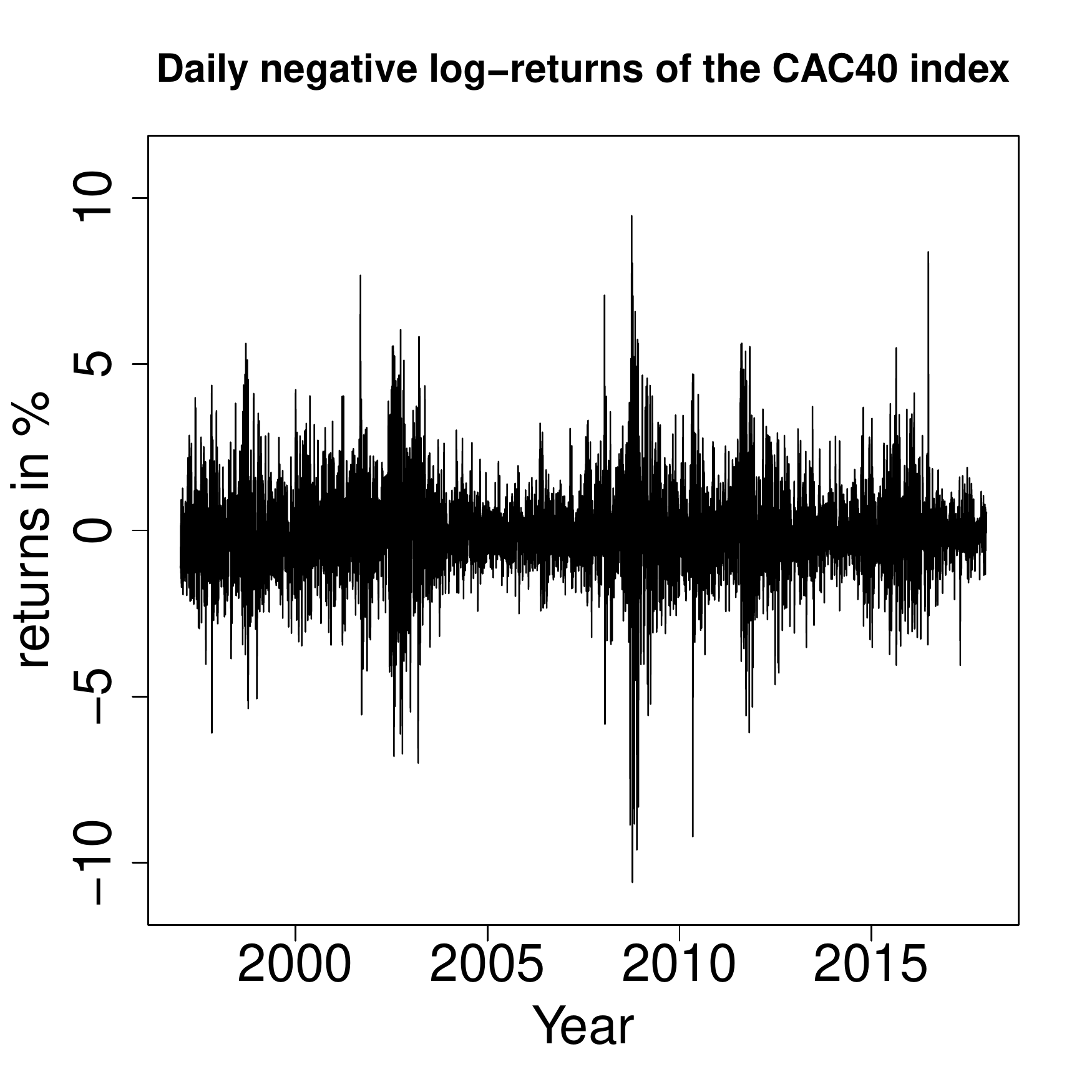}} 
\subfloat{\includegraphics[width=0.33\textwidth]{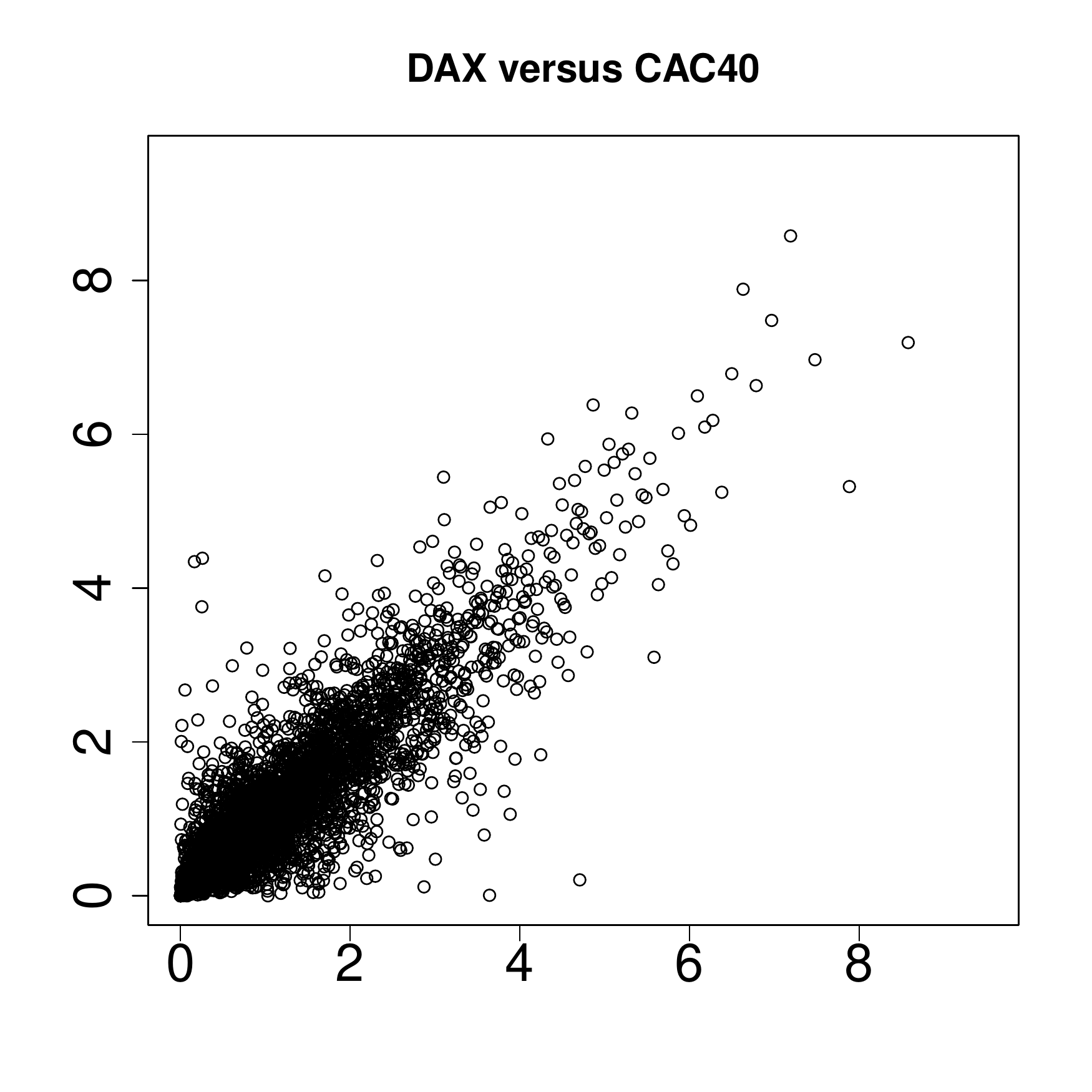}} 
\caption{Left and middle: time series plots of daily negative log-returns of the DAX and the CAC40; right: scatterplot of daily negative log-returns of the DAX versus the CAC40 on the unit Pareto scale, plotted on the exponential scale.} 
\label{fig:stocks}
\end{figure}

Table~\ref{tab:r4} shows the estimated parameter matrix $B = (b_{jt})_{j,t}$ for $j=1,2$ and $t = 1,2,3,4$ obtained by fitting a four-factor max-linear model to our data based on the empirical tail dependence function. Values of $k \in \{25,30,\ldots,70,75\}$ were considered: we chose $k = 40$ because parameter estimates are stable around this value. Standard errors were calculated using the asymptotic variance matrix of the estimator. 
The results in Table~\ref{tab:r4} suggest to test $H_0: (b_{14},b_{24}) = (0,0)$; the value of the test statistic $T_{n,k}^{(2)}$ is $0.04$. Comparing with a critical value of $17.4$ based on a significance level of $0.05$, we find that we cannot reject the null hypothesis.  Critical values are calculated by simulation from the asymptotic distribution of the test statistics, which is given in Corollaries \ref{cor2} and \ref{cor3}. We do not consider $T_{n,k}^{(1)}$ or the beta tail dependence function because of their bad performance (see Section \ref{sec:multi}).

\begin{table}[ht]
\centering
 \begin{tabular}{lcccc}
\toprule
DAX &  0.41 (0.11) & 0.14 (0.05) & 0.43 (0.11) & 0.03 (0.05)  \\
CAC40 & 0.43 (0.12) & 0.44 (0.10) & 0.13 (0.05)  & 0.00 (0.01)  \\
\bottomrule
\end{tabular}
\caption{Parameter matrix for a bivariate max-linear model with $r = 4$ factors for $k = 40$; standard errors are in parentheses.}
\label{tab:r4}
\end{table}

Table~\ref{tab:r3} shows the estimated parameter matrix $B = (b_{jt})_{j,t}$ for $j=1,2$ and $t = 1,2,3$ for a three-factor model. We test if a Marshall--Olkin model suffices, i.e., $H_0 : (b_{12},b_{23}) = (0,0)$. The value of the test statistic $T_{n,k}^{(2)}$ is $75.9$. Comparing to a critical value of $8.03$, we reject the null hypothesis of a Marshall--Olkin model.

\begin{table}[ht]
\centering
 \begin{tabular}{lccc}
\toprule
DAX & 0.41 (0.11) & 0.14 (0.06) & 0.46 (0.09)   \\
CAC40 & 0.43 (0.12) & 0.44 (0.10) & 0.13 (0.05)  \\
\bottomrule
\end{tabular}
\caption{Parameter matrix for a bivariate max-linear model with $r = 3$ factors for $k = 40$; standard errors are in parentheses.}
\label{tab:r3}
\end{table}

Figure \ref{fig:stocks2} shows two goodness-of-fit measures and the estimated probability of a joint exceedances of the two stocks. On the left, we plotted the level sets $\{(x_1,x_2): \ell (x_1,x_2) = c \}$ for $c \in \{0.2,0.4,0.8,1\}$ based on the empirical tail dependence function (solid lines) and on the fitted max-linear model (dashed lines). 

A common summary measure of dependence is the tail dependence coefficient,
\begin{equation*}
\chi (q) := \lim_{q \uparrow 1} \mathbb{P} \left[ F_1 (X_{11}) < q \mid  F_2 (X_{12}) < q \right] = 2 - \ell(1,1),
\end{equation*} 
see \eqref{eq:ell2}.  Figure \ref{fig:stocks2} (middle) shows nonparametric estimates of the tail dependence coefficient (black dots) $\widetilde{\chi}_{n,k} = 2 - \widetilde{\ell}_{n,k} (1,1)$
and its model-based counterpart (red dots) for decreasing $k$. The horizontal grey line corresponds to the value $k=40$ that was used for the parameter estimators and the tests, corresponding to $\widehat{\chi}_{n,k} \approx 0.68$, which is equal to the model-based $\chi$. The dotted lines correspond to $95 \%$ pointwise bootstrap confidence intervals.

\begin{figure}[ht]
\centering 
\subfloat{\includegraphics[width=0.33\textwidth]{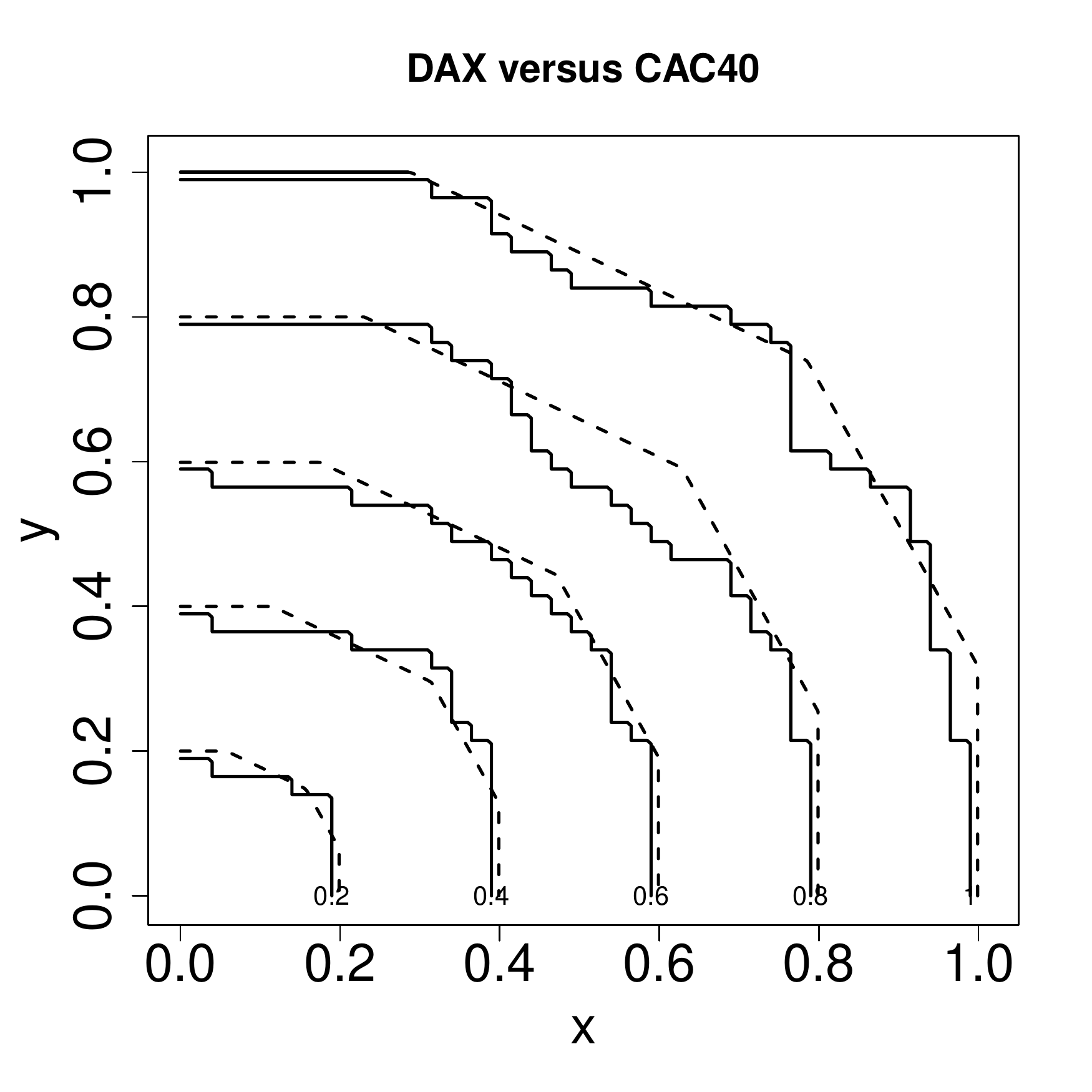}} 
\subfloat{\includegraphics[width=0.33\textwidth]{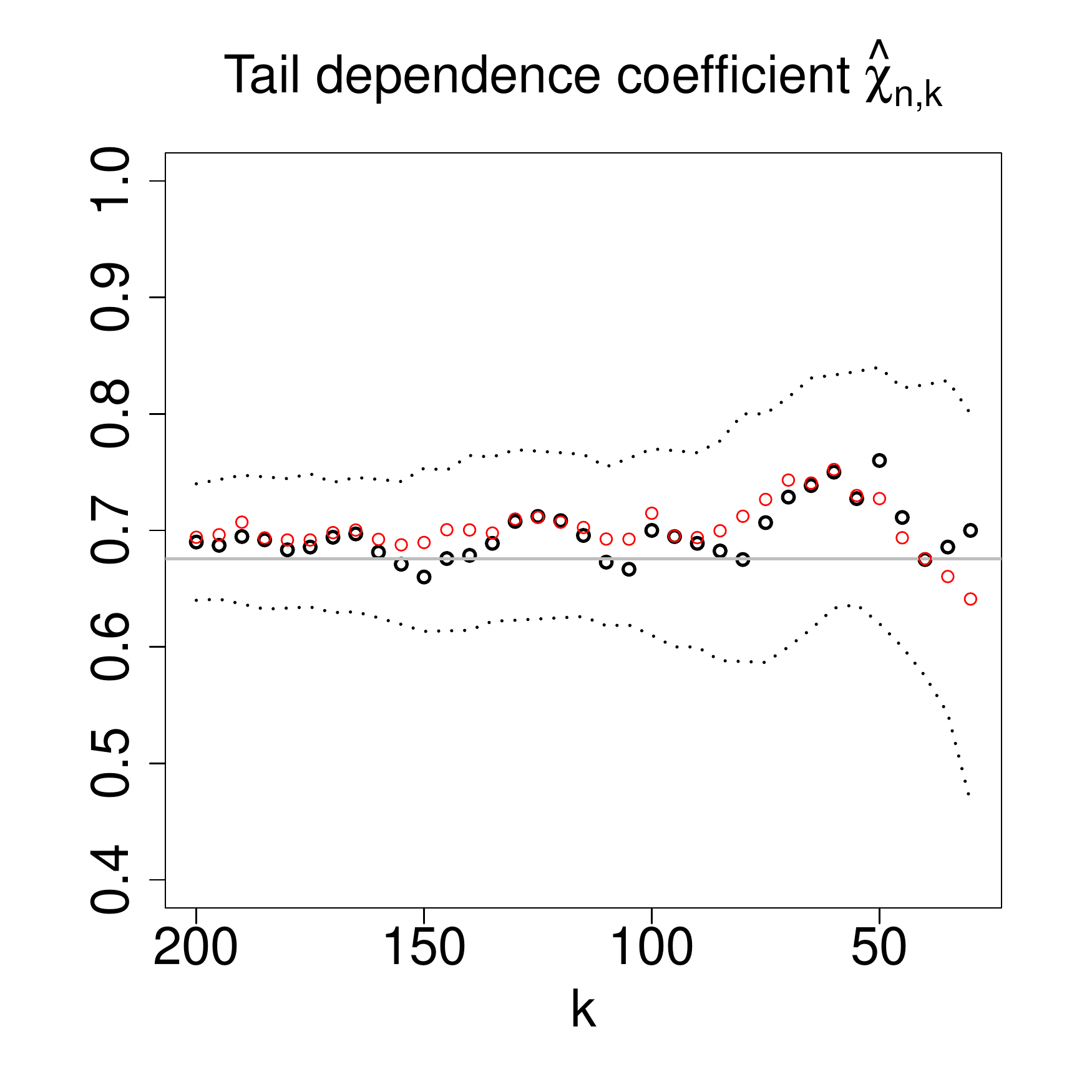}} 
\subfloat{\includegraphics[width=0.33\textwidth]{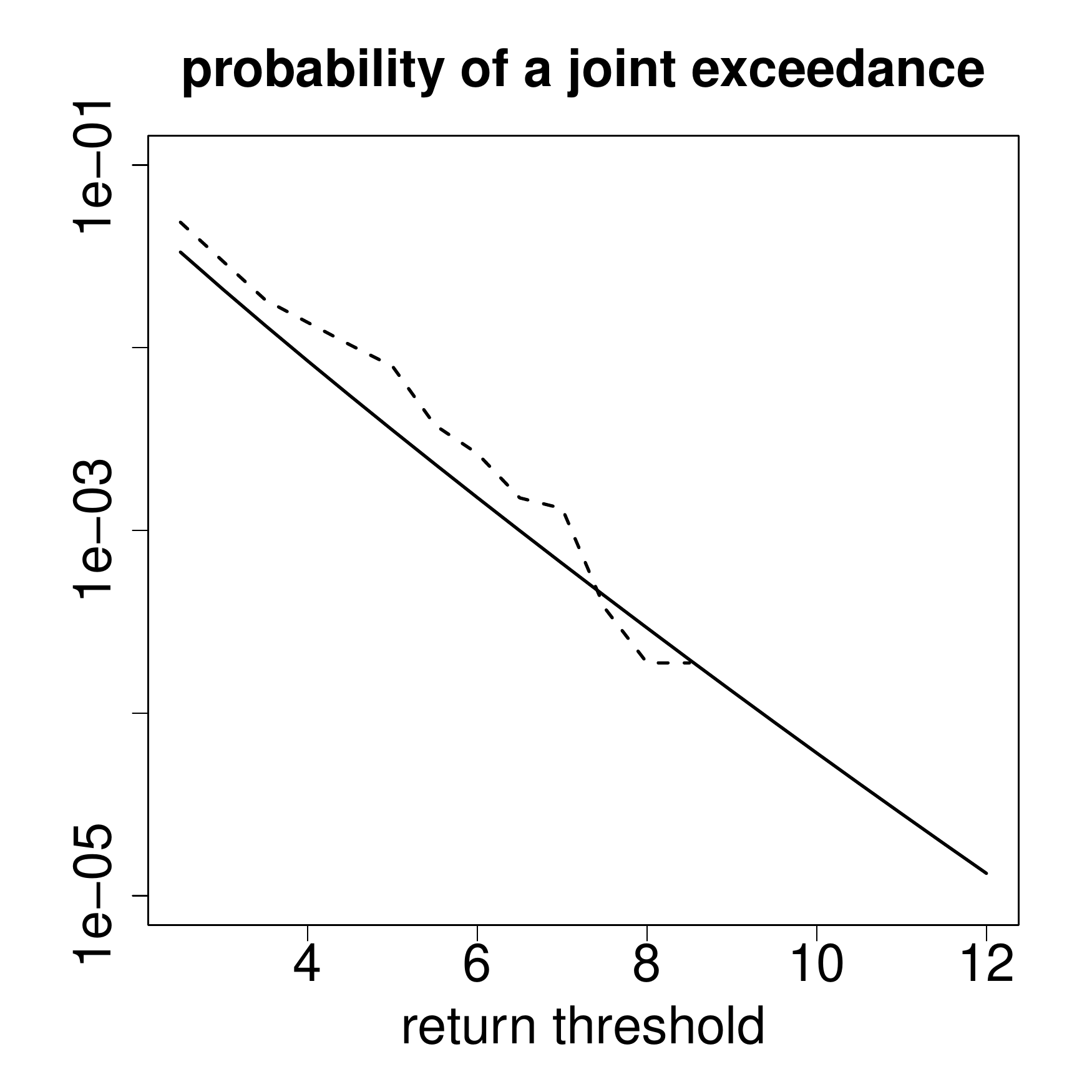}} 
\caption{Left: level sets $\{(x_1,x_2): \ell (x_1,x_2) = c \}$ for $c \in \{0.2,0.4,0.8,1\}$ based on the empirical tail dependence function (solid lines) and on the fitted max-linear model (dashed lines); middle: nonparametric (black dots) and model-based (red dots) estimates of the tail dependence coefficient $\widehat{\chi}_{n,k}$ for $k \in \{200, 195, \ldots, 30\}$; right: probability of a joint exceedance based on the fitted max-linear model (solid line) and on the empirical excesses (dashed line).} 
\label{fig:stocks2}
\end{figure}

The probability of a joint excess $\overline{F}(x_1,x_2) = \mathbb{P}[X_{11} > x_1, X_{12} > x_2]$ can be approximated for large $(x_1,x_2)$ by
\begin{equation*}
\overline{F}(x_1,x_2) = x_1^* + x_2^* - 1 + \exp \left\{ - \ell(x_1^*,x_2^*) \right\} \quad \text{for } x_j^* = 1 - F_j(x_j), \,\,\, j = 1,2.
\end{equation*}
Let $\overline{H} (\, \cdot \,; \gamma_j, \sigma_j)$ denote the survival function of a GPD with shape $\gamma_j$ and scale $\sigma_j$. 
For $x_1 > u_1$ and $x_2 > u_2$, we can estimate $\overline{F}(x_1,x_2)$ by 
\begin{equation*}\label{proba}
\widehat{\overline{F}}(x_1,x_2) = \widehat{x}_1^* + \widehat{x}_2^* - 1 + \exp \left\{ - \ell(\widehat{x}_1^*,\widehat{x}_2^* ; \widehat{\vc{\theta}}_{n,k}) \right\},
\end{equation*}
where $\ell ( \, \cdot \, ;  \widehat{\vc{\theta}}_{n,k})$ denotes the max-linear stable tail dependence function based on the fitted parameter estimates and
$\widehat{x}_j^* = 0.05 \times \overline{H} (x_j - u_j ; \widehat{\gamma}_j, \widehat{\sigma}_j)$.
The right-hand plot of Figure \ref{fig:stocks} shows the probabilities $\widehat{\overline{F}}(x,x)$ on the log-scale for $x \in \{2.5,3,\ldots,12\}$ (solid line). 
These can be compared to the empirical excess probabilities $n^{-1} \sum_{i = 1}^n \mathbbm{1} \left\{ X_{1i} > x, X_{2i} > x \right\}$ (dashed line), which are zero for $x > 8.5$.

\section{Discussion}\label{sec:discussion}
We proposed two test statistics for parameters that, under the null hypothesis, are on the boundary of the alternative hypothesis. Simulation studies showed that especially the Wald-type test statistic performs well. Moreover, these test statistics are convenient because their asymptotic distribution has an explicit expression. In practice, this distribution becomes cumbersome when testing for a higher dimensional ($c > 4$) parameter vector. A possible solution might be to consider a multiple testing procedure with a Bonferroni-type of correction.

We applied the test statistics to the max-linear model and the Brown--Resnick model, but the results are generic and could be used on any multivariate extreme-value model where submodels appear at boundary values or where the number of ``factors'' is of interest. Examples include \cite{fougeres2009}, where a mixture model is obtained based on a certain number of stable random variables, and \cite{gissibl2015}, where a max-linear model is defined on a directed acyclic graph. For the latter, our methods could be useful in the procedure to reconstruct the graph structure of a dataset \citep{gissibl2017}, since it is based on identifying pairwise tail dependence coefficients that are zero.

\appendix
\section{Proofs}
%Let ``for all $\gamma_k \rightarrow 0$" be an abbreviation of for all sequences of positive scalar constants $\{\gamma_k : k \geq 1\}$ for which $\gamma_k \rightarrow 0$ as $k \rightarrow \infty$. 
\begin{defn*}[\emph{Left/right partial derivatives}]
Let $f$ be a function whose support includes $\mathcal{X} \subset \RR^p$ and let $\vc{a} \in \mathcal{X}$. Suppose that $\mathcal{X} - \vc{a} $ equals the intersection of a union of orthants and an open cube $C_{\varepsilon} (\vc{0})$ for some $\varepsilon > 0$, i.e., $\mathcal{X} - \vc{a}$ is locally equal to a union of orthants. The function $f$ is said to have \emph{left/right (l/r) partial derivatives} of order 1 on $\mathcal{X}$ if
\begin{enumerate}
\item it has partial derivatives at each interior point of $\mathcal{X}$;
\item it has partial derivatives at each boundary point of $\mathcal{X}$ with respect to coordinates that can be perturbed to the left and right;
\item it has left (right) partial derivatives at each boundary point of $\mathcal{X}$ with respect to coordinates that can be perturbed only to the left (right). 
\end{enumerate}
\end{defn*}
The shape of $\mathcal{X}$ is such that for all $\vc{x} \in \mathcal{X}$ and for all $i \in \{1,\ldots,p\}$, it is possible to perturb $x_i$ to the left, the right, or both and stay within $\mathcal{X}$. This means that it is always possible to define the left, right, or two-sided partial derivative of $f$ with respect to $x_i$. 
We say that $f$ has l/r partial derivatives of order $k$ on $\mathcal{X}$ for $k \geq 2$ if $f$ has l/r partial derivatives of order $k-1$ on $\mathcal{X}$ and each of the latter has l/r partial derivatives on $\mathcal{X}$. When we say the $f$ has continuous l/r partial derivatives, continuity is defined in terms of local perturbations within $\mathcal{X}$ only. 
%Theorem 6 in \citet{andrews1999} gives a Taylor expansion for a function in terms of its l/r partial derivatives.

\begin{proof}[Proof of Theorem~\ref{theorem2}]
This theorem is a special case of \citet[Theorem 3b]{andrews1999}. A closely related work is \citet{andrews2002}, where the focus is on generalized method of moment estimators. This setting is closer to ours but uses a convergence rate of $\sqrt{n}$, whereas \citet{andrews1999} allows for more generality. The quantities $\ell_T$, $B_T$ and $R_T$ in \citet{andrews1999} correspond to $-(k/2) f_{n,k}$, $\sqrt{k} I_p$ and $-(k/2) R_{n,k}$ respectively in this paper.
Theorem 3b in \citet{andrews1999} holds under the assumptions 1, $2^*$, $3^*$, $5^*$, and 6 of that paper. We show that these are implied by ours:
\begin{description}
\item[Assumption 1] The assumptions made in Theorem~\ref{theorem1} imply assumption 1.
\item[Assumption $\mathbf{2^*}$] Assumption GMM2 in \citet{andrews2002} implies Assumption $2^*$; this is proven in \citet[Lemma 3]{andrews2002}. We show that our assumption imply GMM2:
assumption GMM2(a) holds because because $D_{n,k} (\vtheta)$ converges in probability to $L(\vtheta_0) - L(\vtheta)$. Assumption GMM2(b) holds if assumption GMM$2^{2*}(b)$ holds, which in turn is implied by the assumptions of Theorem~\ref{theorem1} and (A1). Assumption GMM2(c) holds since $\vc{D}(\vtheta_0) = \vc{0}$. Assumption GMM2(d) holds since $D_{n,k} (\vtheta) - \vc{D}(\vtheta) - D_{n,k}(\vtheta_0) = 0$. Finally, Assumption GMM2(e) holds automatically since our weight matrix isn't random.
\item[Assumption $\mathbf{3^*}$] Assumption $3^*$ holds because of (A3) and because $J$ is non-random, symmetric and non-singular.
\item[Assumption $\mathbf{5^*}$] Assumption $5^*$ holds because of (A2) and because $B_T = \sqrt{k} I_p$ and $\sqrt{k} \rightarrow \infty$ as $n \rightarrow \infty$. 
\item[Assumption 6] Assumption (A2) implies assumption 6.
%\begin{equation*}
%\sup_{\vtheta \in \Theta: \norm{\vtheta - \vtheta_0} \leq \gamma_n} \sqrt{n} \frac{\norm{D_{n,k} (\vtheta) - D(\vtheta) - D_{n,k}(\vtheta_0)}}{1 + \sqrt{n} \norm{\vtheta - \vtheta_0}} = o_p (1).
%\end{equation*}
\end{description}
\end{proof}

\begin{proof}[Proof of Corollary~\ref{cor1}]
This corollary is a special case of \citet[Corollary 1b]{andrews1999}, where no parameter $\psi$ appears. Assumptions 1, $2^*$, $3^*$,$5^*$ and 6--8 in that paper are needed: in the proof of Theorem~\ref{theorem2}, we've already shown that our assumptions imply 1, $2^*$, $3^*$,$5^*$ and 6; assumptions 7 and 8 hold by our assumption (A4).
\end{proof}

\section*{Acknowledgements}

The author would like to thank two reviewers and an associate editor for their careful reading of the paper and their constructive comments that greatly improved the generality of the paper. She would also like to thank Johan Segers for helpful comments on an earlier version of this paper.

\small
\renewcommand\refname{REFERENCES} 
\bibliographystyle{apalike} 
\bibliography{library}

\end{document}